\newcommand*{\email}[1]{\texttt{#1}} 
\newtheorem{theorem}{Theorem}[section]
\newtheorem{lemma}{Lemma}[section]
\newtheorem{conjecture}{Conjecture}[section]
\newtheorem{proposition}{Proposition}[section]
\newtheorem{claim}{Claim}[section]
\newenvironment{definition}[1][Definition]{\begin{trivlist}
\item[\hskip \labelsep {\bfseries #1}]}{\end{trivlist}}
\def\final{1}  
\def\iflong{\iffalse}
\newcommand{\cnote}[1]{[{\tiny Chao: \bf #1}]\marginpar{*}}
\newcommand{\knote}[1]{[{\tiny Karthik: \bf #1}]\marginpar{*}}
\newcommand{\krnote}[1]{[{\tiny Krist\'{o}f: \bf #1}]\marginpar{*}}
\newcommand{\tnote}[1]{[{\tiny Tam\'{a}s: \bf #1}]\marginpar{*}}
\newcommand{\enote}[1]{[{\tiny Euiwoong: \bf #1}]\marginpar{*}}
\newcommand{\todo}[1]{[{\tiny TODO: \bf #1}]\marginpar{*}}
\newcommand{\cnote}[1]{}
\newcommand{\knote}[1]{}
\newcommand{\krnote}[1]{}
\newcommand{\tnote}[1]{}
\newcommand{\enote}[1]{}
\newcommand{\todo}[1]{}
\newcommand{\R}{\mathbb{R}}
\newcommand{\E}{\mathbb{E}}
\newcommand{\N}{\mathbb{N}}
\newcommand{\B}{\mathbb{B}}
\newcommand{\calP}{\mathcal{P}}
\newcommand{\prob}{\mathsf{Pr}}
\newcommand{\Cov}{\mathsf{Cov}}
\newcommand{\Var}{\mathsf{Var}}
\newcommand{\Inf}{\mathsf{Inf}}
\newcommand{\w}{c}
\newcommand{\calD}{\mathcal{D}}
\newcommand{\calG}{\mathcal{D}^{\mathsf{vc}}}
\newcommand{\sfgd}{\mathsf{global}}
\newcommand{\sfst}{\mathsf{st}}
\newcommand{\ug}{\textsc{Unique\-Games}}
\newcommand{\vc}{\textsc{Ver\-tex\-Co\-ver}}
\newcommand{\kvc}{\textsc{Ver\-tex\-Co\-ver on $k$-partite Graphs}}
\newcommand{\threevc}{\textsc{Ver\-tex\-Co\-ver on $3$-partite Graphs}}
\newcommand{\fourvc}{\textsc{Ver\-tex\-Co\-ver on $4$-partite Graphs}}
\newcommand{\kregvc}{\textsc{Ver\-tex\-Co\-ver on $k$-regular Graphs}}
\newcommand{\globaldoublecut}{\textsc{Node\-Double\-Cut}}
\newcommand{\stdoublecut}{\textsc{$\{s,t\}$-Node\-Double\-Cut}}
\newcommand{\nodethreecut}{\textsc{Node-3-Cut}}
\newcommand{\sbicut}{\textsc{$\{s,*\}$-Edge\-Bi\-Cut}}
\newcommand{\edgebicut}{\textsc{Edge\-Bi\-Cut}}
\newcommand{\nodebicut}{\textsc{Node\-Bi\-Cut}}
\newcommand{\srtlinthreecut}{\textsc{$(s,r,t)$-Edge\-Lin\-3\-Cut}}
\newcommand{\stlinthreecut}{\textsc{$(s,*,t)$-Edge\-Lin\-3\-Cut}}
\newcommand{\inoutarbblocker}{\textsc{$r$-Edge\-In\-Out\-Blocker}}
\newcommand{\strongmincut}{\textsc{Strong\-Min\-Cut}}
\newcommand{\stsepkcut}{\textsc{$\{s,t\}$-Sep\-Edge\-$k$\-Cut}}
\newcommand{\VD}{V_\mathcal{D}}
\newcommand{\ED}{E_\mathcal{D}}
\newcommand{\ID}{I_\mathcal{D}}
\newcommand{\bara}{\overline{\alpha}}
\newcommand{\OPT}{\ensuremath{\beta}}
\newcommand{\e}{\epsilon}
\newcommand{\bicut}{\beta}
\newcommand{\cross}{\sigma}
\def\set#1{\{ #1 \}}
\newcommand{\jf}{n_{\mathsf{jf}}}
\newcommand{\jb}{n_{\mathsf{jb}}}
\newcommand{\rf}{n_{\mathsf{rf}}}
\newcommand{\rb}{n_{\mathsf{rb}}}
\title{Global and fixed-terminal cuts in digraphs\footnote{Krist\'{o}f and Tam\'{a}s are supported by the Hungarian National Research, Development and Innovation Office -- NKFIH
grants K109240 and K120254. Chao is supported in part by NSF grant CCF-1526799}}
\date{}
\author{
Krist\'{o}f B\'{e}rczi\thanks{MTA-ELTE Egerv\'{a}ry Research Group, Budapest \email{\{berkri,tkiraly\}@cs.elte.hu}}
\and
Karthekeyan Chandrasekaran\thanks{University of Illinois, Urbana-Champaign  \email{\{karthe,chaoxu3\}@illinois.edu}} 
\and 
Tam\'{a}s Kir\'{a}ly\footnotemark[1]
\and
Euiwoong Lee\thanks{Carnegie Mellon University, Pittsburgh \email{euiwoonl@cs.cmu.edu}}
\and
Chao Xu\footnotemark[2]
}
\begin{document}
\maketitle              
\knote{Suggest titles}
\tnote{Blocking arborescences and paths in both directions: global and fixed-terminal cut problems}
\krnote{Global and fixed-terminal cut problems}
\knote{Global and fixed-terminal cuts in digraphs}
\knote{Double-cut and bicut: global vs fixed-terminal}
\knote{Double-cut and Bicut}
\knote{Arborescence blockers and bicuts}

\begin{abstract}
The computational complexity of multicut-like problems may vary significantly depending on whether the terminals are fixed or not. In this work we present a comprehensive study of this phenomenon in two types of cut problems in directed graphs: double cut and bicut. 
\begin{enumerate}
\item  
Fixed-terminal edge-weighted double cut is known to be solvable efficiently. We show that fixed-terminal node-weighted double cut cannot be 
approximated to a factor smaller than $2$ under the Unique Games Conjecture (UGC), and we also give a 2-approximation algorithm.
For the global version of the problem, we prove an inapproximability bound of $3/2$ under UGC. 
\item 
Fixed-terminal edge-weighted bicut is known to have an approximability factor of $2$ that is tight under UGC. We show that the global edge-weighted bicut is approximable to a factor strictly better than $2$, and that the global node-weighted bicut cannot be approximated to a factor smaller than $3/2$ under UGC.
\item In relation to these investigations, we also prove two results on undirected graphs which are of independent interest. First, we show NP-completeness and a tight inapproximability bound of $4/3$ for the node-weighted $3$-cut problem under UGC.
Second, we show that for constant $k$, there exists an efficient algorithm to solve the
minimum $\{s,t\}$-separating $k$-cut problem. 
\end{enumerate}
Our techniques for the algorithms are combinatorial, based on LPs and based on the enumeration of approximate min-cuts. Our hardness results are based on combinatorial reductions and integrality gap instances.

\end{abstract}

\section{Introduction}
The minimum two-terminal cut problem  (min $s-t$ cut) and its global variant (min cut) are classic interdiction problems with fast algorithms. Generalizations of the fixed-terminal variant, including the multi-cut and the multi-way cut, as well as generalizations of the global variant, including the $k$-cut, have been well-studied in the algorithmic literature \cite{DJPSY94,GH94}. In this work, we study two generalizations of global cut problems to directed graphs, namely double cut and bicut (that we describe below). We study the power and limitations of fixed terminal variants of these cut problems in order to solve the global variants.
In the process, we examine ``intermediate'' multicut problems where only a subset of the terminals are fixed, and obtain results of independent interest. In particular, we show that the undirected $\{s,t\}$-separating $k$-cut problem, where two of the $k$ terminals are fixed, is polynomial-time solvable for constant $k$.
In what follows, we describe the problems along with the results. We refer the reader to Tables \ref{table:global-digraphs}, \ref{table:fixed-digraphs}, and \ref{table:global-undir} at the end of Section \ref{sec:additional} for a summary of the results.
We mention that all our algorithmic/approximation results hold for the min-cost variant while the inapproximability results hold for the min-cardinality variant by standard modification of our reductions and algorithms. For ease of presentation, we do not make this distinction.

The starting point of this work is node-weighted double cut, that we describe below. We recall that an arborescence in a directed graph $D=(V,E)$ is a minimal subset $F\subseteq E$ of arcs such that there exists a node $r\in V$ with every node $u\in V$ having a unique path from $r$ to $u$ in the subgraph $(V,F)$ (e.g., see \cite{schrijver-comb-opt-book}).\\


\noindent{\bf Double Cut.} The input to the \globaldoublecut\ problem is a directed graph and the goal is to find the smallest number of nodes whose deletion ensures that the remaining graph has no arborescence.
\globaldoublecut\ is a generalization of node weighted global min cut in undirected graphs to directed graphs. It is non-monotonic under node deletion.
This problem is key to understanding fault tolerant consensus in networks. We briefly describe this connection.\\

\noindent{\bf Significance of double cut.}
In a recent work, Tseng and Vaidya \cite{TV15} showed that \emph{consensus} in a directed graph can be achieved in the \emph{synchronous model} subject to the failure of $f$ nodes \emph{if and only if} the removal of any $f$ nodes still leaves an arborescence in the remaining graph. Thus, the number of nodes whose failure can be tolerated for the purposes of achieving consensus in a network is \emph{exactly} one less than the smallest number of nodes whose removal ensures that there is no arborescence in the network. So, it is imperative for the network authority to be able to compute this number.\\

A directed graph $D=(V,E)$ has no arborescence if and only if \footnote{We believe that this characterization led earlier authors \cite{BP13} to coin the term \emph{double cut} to refer to the edge deletion variant of the problem. We are following this naming convention.} there exist two distinct nodes $s,t\in V$ such that every node $u\in V$ can reach at most one node in $\{s,t\}$.
By this characterization, every directed graph that does not contain a tournament has a feasible solution to \globaldoublecut .
This characterization motivates the following fixed-terminal variant, denoted \stdoublecut : Given a directed graph with two specified nodes $s$ and $t$, find the smallest number of nodes whose deletion ensures that every remaining node $u$ can reach at most one node in $\{s,t\}$ in the resulting graph. An instance of \stdoublecut\ has a feasible solution provided that the instance has no edge between $s$ and $t$. An efficient algorithm to solve/approximate \stdoublecut\ immediately gives an efficient algorithm to solve/approximate \globaldoublecut .\\


\noindent{\bf Edge-weighted case.} In the edge-weighted version of the problem, \textsc{$\{s,t\}$-EdgeDoubleCut}, the goal is to delete the smallest number of edges to ensure that every node in the graph can reach at most one node in $\{s,t\}$. Similarly, in the global variant, denoted \textsc{EdgeDoubleCut}, the goal is to delete the smallest number of edges to ensure that there exist nodes $s,t$ such that every node $u$ can reach at most one node in $\{s,t\}$, i.e.\
the graph has no arborescence. The fixed-terminal variant \textsc{$\{s,t\}$-EdgeDoubleCut} is solvable in polynomial time using maximum flow and, consequently, \textsc{EdgeDoubleCut} is also solvable in polynomial time (see e.g.\ \cite{BP13}).\\


\noindent{\bf Results for double cut.} Our main result on the fixed-terminal variant, namely \stdoublecut , is the following hardness of approximation.

\begin{theorem}\label{thm:st-node-double-cut-hardness}
\stdoublecut\ is NP-hard,
and has no efficient $(2-\epsilon)$-approximation for any $\epsilon>0$ assuming the Unique Games Conjecture.
\end{theorem}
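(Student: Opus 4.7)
I plan to establish both parts of the theorem via an approximation-preserving reduction from \vc\ to \stdoublecut. Since \vc\ is NP-hard (Karp) and inapproximable within $2-\epsilon$ under the UGC (Khot--Regev), an optimum-preserving reduction transfers both hardness results directly to \stdoublecut.

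Given an undirected graph $G=(V,E)$, I would construct a digraph $D_G$ on node set $V\cup W\cup\{s,t\}$, where $W$ collects auxiliary ``edge-gadget'' nodes, one small constant-size gadget per edge of $G$. For each edge $e=\{u,v\}\in E$, the arcs of the corresponding gadget (connecting $W$ to $u,v,s,t$) would be designed so that the following three properties hold:
\begin{enumerate}
\item[(i)] if at least one of $u,v$ is deleted, no node in the gadget of $e$ becomes an ancestor of both $s$ and $t$ in the remaining graph;
\item[(ii)] if both $u,v$ survive the deletion, then some node in the gadget of $e$ is an ancestor of both $s$ and $t$; and
\item[(iii)] any double cut $N$ of $D_G$ can be transformed, without increasing $|N|$, into a double cut $N'\subseteq V$ consisting only of vertex nodes.
\end{enumerate}

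Under such a construction, completeness is immediate: any vertex cover $C$ of $G$ satisfies (i) at every edge and hence is a valid double cut of $D_G$. For soundness, given any double cut $N$ of $D_G$, the exchange step in (iii) converts it to $N'\subseteq V$ with $|N'|\le|N|$, and then (ii) forces $N'$ to cover every edge of $G$, making $N'$ a vertex cover. Thus the \stdoublecut\ optimum on $D_G$ equals the \vc\ optimum on $G$, and both NP-hardness and the $(2-\epsilon)$-UGC-inapproximability transfer.

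The main obstacle is the gadget design, especially condition (iii). A vertex $v\in V$ may be incident to many edges, and deleting $v$ must simultaneously neutralize every incident edge gadget, whereas deleting an auxiliary node neutralizes only its own edge. The standard fix is to duplicate the critical ``bi-reach'' node(s) inside each gadget, so that escaping through auxiliary-node deletions is strictly more expensive than deleting either endpoint, rendering the exchange in (iii) cost-preserving. Balancing this duplication with condition (ii)---so that a genuine bi-reach threat still arises when both endpoints survive---and doing so with a gadget of \emph{constant} size so as not to blow up the approximation ratio, is the delicate step.
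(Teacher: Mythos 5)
Your proposal is a plan, not a proof, and the missing ingredient -- the per-edge gadget satisfying conditions (i)--(iii) -- is in fact the entire difficulty. The paper does not prove the $(2-\epsilon)$-hardness via a reduction from \textsc{VertexCover}; instead it constructs a length-control dictatorship test directly from \ug\ on the non-constant-size digraph $D_{a,b}$ of Section~\ref{sec:nodeDoubleCut-approx} (a grid with jumping arcs on $ab+2$ nodes), and gets the gap $(2a-1)/(ab\epsilon + ab/(b-2a))$ which approaches $2$ only as $a\to\infty$. The paper \emph{does} use \vc-style combinatorial reductions (from $k$-partite \vc) for the related problems \nodethreecut, \sbicut, \nodebicut\ in Section~\ref{sec:reduction_from_vc}, but those give only $2(k-1)/k-\epsilon$ hardness for small fixed $k$; no such reduction is given for \stdoublecut, and that omission is not accidental.

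Two concrete gaps in your sketch. First, your conditions (i)--(iii) are stated only about gadget nodes in $W_e$; they do not rule out that a vertex node $v\in V$ itself becomes an ancestor of both $s$ and $t$ in $D_G$, which a double cut would also have to prevent, so the stated invariants do not by themselves yield the claimed equivalence between vertex covers and double cuts. Second and more fundamentally, condition (iii) with a constant-size gadget is in tension with the structure of the problem: the integrality gap of the Path-Blocking-LP (Lemma~\ref{lem:doublecutintegralitygap}) only approaches $2$ on instances with unboundedly long paths between internal nodes and $\{s,t\}$, and the ``length-control'' in the dictatorship test exploits precisely this. A reduction in which every edge contributes an $O(1)$-size gadget wired to $s$, $t$, and the two endpoints would produce instances where the LP gap (and hence the achievable hardness) is bounded away from $2$; you would need to argue why the exchange in (iii) can be made cost-preserving while simultaneously forcing the ``bi-reach'' threat to persist through long chains, and the proposal as written gives no mechanism for that. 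Until the gadget is exhibited and these two points are addressed, the argument does not establish the theorem.
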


We also give a $2$-approximation algorithm for \stdoublecut, which leads to a $2$-approximation for the global variant.

\begin{theorem}\label{thm:NodeDoubleCut-2-approx}
There exists an efficient $2$-approximation algorithm for \stdoublecut\ and \globaldoublecut .
\end{theorem}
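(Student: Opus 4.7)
The plan is to design a $2$-approximation for \stdoublecut\ and then extend to \globaldoublecut\ by enumerating terminal pairs. For the global version, we iterate over each of the $O(n^2)$ ordered pairs $(s,t) \in V \times V$ of distinct vertices, run the \stdoublecut\ routine, and return the smallest cut produced. This reduction is sound because, by the arborescence-free characterization stated in the introduction, any optimal \globaldoublecut\ solution $X^{\sfgd}$ is feasible for \stdoublecut\ with respect to \emph{some} witnessing pair $(s^*, t^*) \subseteq V \setminus X^{\sfgd}$. Thus $\text{OPT}(\stdoublecut(s^*, t^*)) \leq |X^{\sfgd}|$, and invoking the $2$-approximation on $(s^*, t^*)$ produces a cut of size at most $2|X^{\sfgd}|$, which the enumeration will detect.

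For \stdoublecut, the plan is to consider the following covering LP relaxation:
\[
\min \sum_v c_v x_v \quad \text{s.t.} \quad \sum_{v \in (V(P_s) \cup V(P_t)) \setminus \{s,t\}} x_v \geq 1, \quad x \geq 0,
\]
where the constraint ranges over every vertex $u \in V$ and every pair of directed paths $P_s: u \leadsto s$ and $P_t: u \leadsto t$ in $D$. The plan is to argue polynomial-time solvability of this LP via a separation oracle which, for each candidate $u$, computes the minimum $x$-weighted bi-path from $u$ to $\{s,t\}$ in $D$. I would then round by thresholding: set $X := \{v : x^*_v \geq 1/2\}$. The cost bound $|X| \leq 2\sum_v c_v x^*_v \leq 2 \cdot \text{OPT}$ is then immediate.

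The feasibility of the rounded $X$ — namely, that in $D - X$ no vertex reaches both $s$ and $t$ — requires that every bi-path covering constraint is hit by some $v \in X$. The plan is to establish this by showing that an optimal $x^*$ may be chosen half-integral, i.e.\ $x^*_v \in \{0, 1/2, 1\}$. Under half-integrality, each constraint with sum at least $1$ has some variable with $x^*_v \geq 1/2$, placing the corresponding vertex in $X$ and thus cutting every bad tripod.

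The main obstacle is establishing this half-integrality. Standard Nemhauser--Trotter-style arguments cover the vertex-cover polytope, whose constraints involve only two variables; here the constraints span bi-paths of arbitrary length, so the polyhedral argument must be adapted. One plausible route is an uncrossing argument applied to the tight constraints at an extreme-point solution, combined with the structural observation that cuts on the $s$-reaching side can be exchanged with cuts on the $t$-reaching side. If the polyhedral route proves elusive, a fallback plan is to replace the LP-based approach with a direct combinatorial one: carefully chosen minimum vertex cuts between subsets of $V$ corresponding to reach-classes of $s$ and $t$, whose union is shown by case analysis to have cost at most $2 \cdot \text{OPT}$ and to be a feasible double cut.
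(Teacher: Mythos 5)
Your reduction from \globaldoublecut\ to \stdoublecut\ by enumerating terminal pairs, and the choice of an LP whose constraints run over ``tripods'' (pairs of paths from a common vertex $u$ to $s$ and to $t$), are both the right first moves and essentially agree with the paper's LP (the paper writes $\sum_{v\in P} d_v + \sum_{v\in Q} d_v - d_u \ge 1$, which coincides with your union-form constraint when the two paths meet only at $u$; when they overlap more, the paper's version is a slightly weaker relaxation, but both are valid). The divergence, and the genuine gap, is in the rounding.

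Your rounding --- threshold at $1/2$ and pray for half-integrality --- is exactly where the proposal breaks. You correctly identify that you have no proof of half-integrality, and there is good reason to doubt it: this LP is a path-blocking set-cover LP over constraints of unbounded support, not a 2-variable system, and no Nemhauser--Trotter or GVY-style structure is available (the tripod constraints are not of the terminal-to-terminal form exploited by Garg--Vazirani--Yannakakis for node multiway cut). In fact, the paper's own near-tight integrality-gap construction $D_{a,b}$ has an essentially optimal fractional solution that sets $d_v = 1/(b-2a+1)$ on every internal vertex; for $b=a^2$ these values are $\Theta(1/a^2)$, so thresholding at $1/2$ returns the empty set, which is infeasible. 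Thus, even if a half-integral \emph{optimal basic} solution were to exist in general (which is unproven), a rounding scheme that works only from a half-integral point cannot be applied to an arbitrary LP solution, and you would still owe a proof. Your fallback ``direct combinatorial'' sketch is too vague to assess.

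The paper's actual rounding is a randomized region-growing (ball-cutting) argument, which sidesteps half-integrality entirely. Sample $\theta$ uniformly from $(0,1/2)$; let $B^{\mathrm{in}}(s,\theta)$ and $B^{\mathrm{in}}(t,\theta)$ be the sets of nodes whose shortest $d$-distance to $s$ (resp.\ $t$) is at most $\theta$; return the set $U$ of nodes on the in-boundary of either ball. Feasibility holds because the LP constraint forces the two balls to be disjoint for every $\theta < 1/2$, and deleting the in-boundaries disconnects the balls from the rest of the graph, so no surviving node can reach both $s$ and $t$. For the cost, one shows that a node $v$ is cut only when $\theta$ falls in a window of length at most $d_v$ around its distance to $s$, or in a window of length at most $d_v$ around its distance to $t$; since $\theta$ is uniform on an interval of length $1/2$ and the LP constraint forces $\bar d(v,s) + \bar d(v,t) - d_v \ge 1$, one gets $\Pr[v\in U] \le 2d_v$, hence expected cost at most twice the LP value, and the procedure derandomizes by trying all $O(n)$ relevant thresholds. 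You would need to replace the half-integrality plan with this (or some other) argument that works from an arbitrary feasible LP solution.
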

While we are aware of simple combinatorial algorithms to achieve the $2$-approximation for \stdoublecut, we present an LP-based algorithm
since it also helps to illustrate an integrality gap instance which is the main tool underlying the hardness of approximation
(Theorem \ref{thm:st-node-double-cut-hardness}) for the problem.

Next we focus on the complexity of \globaldoublecut . We note that the NP-hardness of the fixed-terminal variant does not necessarily mean that the global variant is also NP-hard.

\begin{theorem}\label{thm:NodeDoubleCut-hardness}
\globaldoublecut\ is NP-hard,
and has no efficient $(3/2-\epsilon)$-approximation for any $\epsilon>0$ assuming the Unique Games Conjecture.
\end{theorem}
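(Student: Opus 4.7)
The plan is to prove Theorem \ref{thm:NodeDoubleCut-hardness} via a gap-preserving reduction from a UGC-hard problem. A natural source is either the $(2-\e)$-hard fixed-terminal instance from Theorem \ref{thm:st-node-double-cut-hardness} or, more directly, a UGC-hard Vertex Cover variant whose structure is compatible with the arborescence-free characterization of feasibility. The essential difficulty is that the global variant is free to pick \emph{any} pair $(s,t)$ as witnesses of the no-arborescence condition, so any hardness reduction must rule out \emph{all} such choices simultaneously; this dilution from the fixed-terminal $(2-\e)$-gap down to $3/2$ is precisely what the construction has to engineer.

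My concrete strategy is to compose the \stdoublecut\ hardness instance $D'$, whose optimum we denote by $k$, with an auxiliary component that contributes an additive, terminal-choice-independent cost of about $k$ to every global node double cut. If successful, a YES instance admits a global cut of size at most $2k$ (the canonical $k$-cut inside $D'$ plus $k$ paid inside the auxiliary component), while a NO instance forces cost at least $(2-\e)k + k = (3-\e)k$ for every valid pair $(s,t)$, yielding a gap of $(3-\e)/2$, which tends to $3/2$ as $\e \to 0$. A natural way to realize the uniform gadget cost is to attach several parallel copies of a rigid small double-cut substructure to $D'$ so that any feasible global solution must sever each copy at a known, terminal-independent price; alternatively, one may start from an integrality gap instance of an LP relaxation for \globaldoublecut\ tailored to display a $3/2$ gap, in the same spirit as the gap instance underlying Theorem \ref{thm:st-node-double-cut-hardness}.

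The main obstacle will be the soundness analysis, which has to consider every possible global terminal pair: those entirely inside the auxiliary gadget, those mixing the gadget with $D'$, and those inside $D'$ but differing from the canonical $(s^*, t^*)$. For each case one must either reduce to the soundness of \stdoublecut\ plus the gadget cost, or argue directly that the cut is even more expensive. Completeness is obtained by exhibiting the canonical $2k$-cut. The NP-hardness half of the theorem follows from the same reduction when applied to an NP-hard source such as Vertex Cover (with $\e = 0$), since the construction is polynomial-time and preserves the gap.
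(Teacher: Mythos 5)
Your proposal has a genuine gap that prevents it from constituting a proof. The composition idea — attach an auxiliary gadget to a hard \stdoublecut\ instance $D'$ so that every global cut pays a terminal-independent extra $k$ — does not obviously decompose the way you want in the soundness case, and you identify but do not resolve precisely this issue. If the adversary chooses both witnesses $(s,t)$ inside the auxiliary gadget, there is no reason the cost paid \emph{inside $D'$} should be anywhere near $(2-\epsilon)k$; it could be close to zero, so the total could be about $k$, which \emph{beats} your claimed completeness budget of $2k$. In other words, the additive split ``at least $(2-\epsilon)k$ in $D'$ plus at least $k$ in the gadget'' is exactly what needs to be proved, and it fails for terminal pairs that sit entirely inside the gadget. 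Without a specific gadget and a soundness argument covering all three cases you list (both terminals in the gadget, both in $D'$, mixed), the reduction is not established.

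The paper's actual route is quite different: it does not compose a fixed-terminal hard instance with a gadget. Instead it builds a self-contained $6$-node skeleton $D$ (Proposition~\ref{claim:skeleton_global_double_cut}) carefully designed so that \emph{every internal vertex has an incoming arc from $s$ or $t$}. This property forces any witness pair $(u,v)$ for the global problem to effectively reduce to the pair $(s,t)$ — if no vertex reaches both $u$ and $v$, then no vertex reaches both $s$ and $t$ — so the global soundness case collapses to a fixed-terminal-style analysis without requiring a separate ``terminal-pair classification.'' The skeleton also enforces the quantitative gap: removing one internal vertex still leaves a three-internal-vertex $s\to t$ or $t\to s$ path (property (iii)), which yields the $3/2$ ratio after the dictatorship test and the Unique Games reduction of Section~\ref{sec:ug}. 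Your instinct that one wants an LP-gap-like gadget tailored to the global version is on the right track, but the specific engineering — forcing the implicit terminal choice to coincide with the designed one — is the missing key idea. The NP-hardness half in the paper is also obtained differently, via the combinatorial reduction \nodethreecut\ $\to$ \globaldoublecut\ (Lemma~\ref{lem:node_double_cut_to_node_3_cut}), not by reusing a $(2-\epsilon)$-hard instance.
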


Bicuts offer an alternative generalization of min cut to directed graphs.
The approximability of the fixed-terminal variant of bicut is well-understood while the complexity of the global variant is unknown.
In the following we describe these bicut problems and exhibit a dichotomic behaviour between the fixed-terminal and the global variant.\\

\noindent{\bf Bicut.}
The edge-weighted two-terminal bicut, denoted \textsc{$\{s,t\}$-Edge\-Bi\-Cut}, is the following: Given a directed graph with two specified nodes $s$ and $t$, find the smallest number of edges
whose deletion ensures that $s$ cannot reach $t$ and $t$ cannot reach $s$ in the resulting graph.
Clearly, \textsc{$\{s,t\}$-EdgeBiCut}
is equivalent to $2$-terminal multiway-cut
(the goal in $k$-terminal multiway cut is to delete the smallest number of edges to ensure that the given $k$ terminals cannot reach each other).
This problem has a rich history and has seen renewed interest in the last few months culminating in inapproximability results matching the best-known approximability factor: \textsc{$\{s,t\}$-EdgeBiCut} admits a $2$-factor approximation (by simple combinatorial techniques)
and has no efficient $(2-\epsilon)$-approximation assuming the Unique Games Conjecture \cite{Lee16,CM17}.
In the global variant, denoted \edgebicut , the goal is to find the smallest number of edges whose deletion ensures that there exist two distinct nodes $s$ and $t$ such that $s$ cannot reach $t$ and $t$ cannot reach $s$ in the resulting digraph.

The dichotomy between global cut problems and fixed-terminal cut problems in undirected graphs is well-known.
For concreteness, we recall \textsc{Edge-$3$-Cut} and \textsc{Edge-$3$-way-Cut}. In \textsc{Edge-$3$-Cut},
the goal is to find the smallest number of edges to delete so that the resulting graph has at least $3$ connected components. In \textsc{Edge-$3$-way-Cut}, the input is an undirected graph with $3$ specified nodes and the goal is to find the smallest number of edges to delete so that the resulting graph has at least $3$ connected components with at most one of the $3$ specified nodes in each. While \textsc{Edge-$3$-way-Cut} is NP-hard \cite{DJPSY94}, \textsc{Edge-$3$-Cut} is solvable efficiently \cite{GH94}. However, such a dichotomy is unknown for directed graphs. In particular, it is unknown whether \edgebicut\ is solvable efficiently. Our next result shows evidence of such a dichotomic behaviour. \\

\noindent{\bf Results for bicut.} While \textsc{$\{s,t\}$-EdgeBiCut} is inapproximable to a factor better than $2$ assuming UGC, we show that \edgebicut\ is approximable to a factor strictly better than $2$.

\begin{theorem}\label{thm:bicut-algorithm}
There exists an efficient $(2-1/448)$-approximation algorithm for \edgebicut .
\end{theorem}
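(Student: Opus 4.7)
The plan is to improve upon the textbook $2$-approximation (which enumerates every ordered pair $(s,t)$ and outputs the smallest union of a minimum $s$-to-$t$ cut $C_1$ and a minimum $t$-to-$s$ cut $C_2$) by augmenting it with an auxiliary algorithm that beats ratio $2$ on the instances where the baseline is tight. Since $|C_1|, |C_2| \leq \mathrm{OPT}$ when $(s,t)$ is the optimal terminal pair, the baseline yields a $2$-approximation; our target is $2 - 1/448$.

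Fix the optimal terminal pair $(s^*, t^*)$ with optimal bicut $F^*$ of size $\mathrm{OPT}$, and let $S, T$ be the sets of nodes reachable from $s^*, t^*$ respectively in $D \setminus F^*$. Set $F_1 := E(S, V\setminus S) \subseteq F^*$ and $F_2 := E(T, V\setminus T) \subseteq F^*$. One verifies that $F_1$ is an $s^*$-$t^*$ cut, $F_2$ is a $t^*$-$s^*$ cut, and $F_1 \cap F_2 = E(S\cap T, V\setminus(S\cup T))$. Since $F_1 \cup F_2 \subseteq F^*$, we obtain $|E(S\cap T, V\setminus(S\cup T))| \geq \lambda(s^*,t^*) + \lambda(t^*,s^*) - \mathrm{OPT}$, where $\lambda(u,v)$ denotes the minimum $u$-to-$v$ cut size. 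Set $\epsilon := 1/448$. If $\lambda(s^*,t^*) + \lambda(t^*,s^*) \leq (2-\epsilon)\mathrm{OPT}$, the baseline applied to $(s^*,t^*)$ already outputs a $(2-\epsilon)$-approximate bicut. Otherwise, both $\lambda(s^*,t^*), \lambda(t^*,s^*) > (1-\epsilon)\mathrm{OPT}$ and $|E(S\cap T, V\setminus(S\cup T))| > (1-2\epsilon)\mathrm{OPT}$, so $S \cap T$ is nonempty and almost all edges of $F^*$ exit $S\cap T$ into $V\setminus(S\cup T)$.

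In this hard regime the plan is to enumerate alternative terminal pairs $(s', t')$ with $s' \in S \cap T$ and $t' \in V \setminus (S \cup T)$, possibly combined with enumeration over structurally restricted near-minimum cuts (since directed graphs can admit super-polynomially many arbitrary near-minimum cuts), and return the best union of a minimum $s'$-$t'$ cut and minimum $t'$-$s'$ cut. The intuition is that the overlap edges $E(S\cap T, V\setminus(S\cup T))$ sit between $s'$ and $t'$ in both directions, so they appear in both a natural $s'$-$t'$ cut and a natural $t'$-$s'$ cut, and the union costs well below $2\mathrm{OPT}$. The main obstacle is producing this auxiliary bicut rigorously: the minimum $s'$-$t'$ cut need not be a subset of $F^*$, its structure may differ substantially from $F_1$, and the sets $S, T$ are unknown to the algorithm, so the enumeration has to be designed to cover the correct structural family without blowing up in size. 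The specific constant $1/448$ will arise from balancing the slack parameter $\epsilon$ in the baseline regime against the loss incurred by the enumeration in the hard regime.
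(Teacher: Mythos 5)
Your proposal correctly identifies the easy/hard dichotomy and the structural fact that drives it: if the optimal terminal pair $(s^*,t^*)$ has $\lambda(s^*,t^*)+\lambda(t^*,s^*)\le (2-\epsilon)\mathrm{OPT}$ then the naive enumerate-and-union baseline already achieves $(2-\epsilon)$, and otherwise almost all of $F^*$ is concentrated in the shared edge set $E(S\cap T, V\setminus(S\cup T))$. This matches, in spirit, the paper's reduction to the regime where $d(Z,W)\ge(1-\epsilon)\mathrm{OPT}$ with $Z=A\cap B$, $W=V\setminus(A\cup B)$ for the optimal uncomparable pair $(A,B)$. Up to that point the proposal is sound.

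The gap is the entire hard regime, and it is not a small one. Your sketch — ``enumerate terminal pairs $(s',t')$ with $s'\in S\cap T$, $t'\in V\setminus(S\cup T)$, return the best union of a min $s'$-$t'$ cut and min $t'$-$s'$ cut'' — does not work, and the intuition behind it is directionally flawed: the shared edges in $E(S\cap T, V\setminus(S\cup T))$ all point \emph{from} $S\cap T$ \emph{into} $V\setminus(S\cup T)$, so they help block $s'\to t'$ paths but provide no leverage whatsoever against $t'\to s'$ paths. The union of the two min cuts for such a pair can still cost close to $2\,\mathrm{OPT}$. Indeed, choosing a different terminal pair and re-running the baseline cannot escape the hard regime: you would need a structurally different way of producing a cheap uncomparable pair. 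The paper's route is quite different and uses substantially more machinery: after establishing (via what becomes Lemma 5.4) that the polynomial-time minimum uncomparable \emph{cut-pair} $\sigma=\min\{d^{in}(A)+d^{in}(B)\}$ over uncomparable $A,B$ handles the easy regime, it guesses a constant-size tuple of nodes, constructs candidate sets $X',Y',Z',W'$ as sink/source sides of carefully chosen min cuts in graphs with duplicated edge sets, contracts $X'\cap Y'$ and $V\setminus X'$, and invokes the $3/2$-approximation for \textsc{$(s,*,t)$-EdgeLin3Cut} (Theorem~\ref{thm:s-star-t-lin-3-cut-algorithm}) to extract a nested pair $A'\subsetneq B'$ that is then combined with $X',Y',Z',W'$ by set operations. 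Certifying that one of the resulting candidates costs at most $(7/4+O(\epsilon))\mathrm{OPT}$ requires a four-way case analysis and a chain of inequalities bounding six cross-degree quantities $\alpha_1,\dots,\alpha_6$ (Propositions 5.4--5.10). None of that appears in, and is not implied by, ``enumeration over structurally restricted near-minimum cuts.'' In particular, without the \textsc{$(s,*,t)$-EdgeLin3Cut} subroutine you have no mechanism for beating factor $2$ in the hard regime, so the proof is genuinely incomplete there — as you yourself flag.
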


We also consider the node-weighted variant of bicut, denoted \nodebicut : Given a directed graph, find the smallest number of nodes whose deletion ensures that there exist nodes $s$ and $t$ such that $s$ cannot reach $t$ and $t$ cannot reach $s$ in the resulting graph. Every directed graph that does not contain a tournament has a feasible solution to \nodebicut . \nodebicut\ is non-monotonic under node deletion,
and it admits a $2$-approximation by a simple reduction to \textsc{$\{s,t\}$-EdgeBiCut}. We show the following inapproximability result.
\begin{theorem}\label{thm:node-bicut-hardness}
\nodebicut\ is NP-hard,
and has no efficient $(3/2-\epsilon)$-approximation for any $\epsilon>0$ assuming the Unique Games Conjecture.
\end{theorem}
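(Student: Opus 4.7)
The plan is to establish both statements of Theorem~\ref{thm:node-bicut-hardness} through a single gap-preserving reduction from a problem with $(3/2-\e)$ UGC-hardness. A natural starting point, consistent with the macros the paper introduces and with the matching $3/2$ factor in Theorem~\ref{thm:NodeDoubleCut-hardness}, is Vertex Cover on $4$-partite graphs (\fourvc): under UGC it is NP-hard to approximate within $2-2/4-\e = 3/2-\e$, e.g.\ via the Manokaran--Naor--Raghavendra--Schwartz framework for Node Multiway Cut with $4$ terminals. If the reduction preserves the optimum value up to lower-order terms, then the UGC-hardness and the NP-hardness transfer directly to \nodebicut.

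Given a $4$-partite instance $G = (V_1 \cup V_2 \cup V_3 \cup V_4, E)$ of \fourvc, I would build a digraph $D$ on $V(D) = V(G) \cup \{s, t\}$, where $s$ and $t$ are two new designated nodes; to ensure $s$ and $t$ survive any cheap bicut, I would replace each by a cluster of $|V(G)|+1$ copies sharing their in- and out-neighbourhoods (a standard padding that is equivalent, for cardinality, to making them undeletable). For each edge $\{u,v\}\in E$, I would add a short directed-arc gadget around $\{u,v\}$ so that (i) every $s\to t$ and every $t\to s$ directed path through this gadget uses both $u$ and $v$, and (ii) deleting either $u$ or $v$ blocks both such paths. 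The role of the $4$-partition is to let me assign each $V_i$ a fixed orientation with respect to $s$ and $t$, so that the edge gadget can be oriented coherently regardless of which two parts its endpoints lie in.

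\emph{Completeness.} Any vertex cover $C$ of $G$ is a feasible node bicut of $D$ with witness pair $(s,t)$: every $s\to t$ and every $t\to s$ directed path must traverse some edge gadget whose endpoints are in $C$, and is therefore blocked. Hence $\OPT(\nodebicut)(D)\le \OPT(\fourvc)(G)$. \emph{Soundness.} For any feasible node bicut $S$ of $D$ with witness pair $(s',t')$, the replication of $s,t$ and the symmetric strong connectivity inside the gadget would force $\{s',t'\}=\{s,t\}$ (any other choice leaves $s'$ and $t'$ in a common strongly connected component after removing any $<\OPT(\fourvc)(G)$ nodes), and then $S\cap V(G)$ must be a vertex cover of $G$ (otherwise an uncovered edge yields a surviving $s\to t$ or $t\to s$ path). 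Combining the two directions yields $\OPT(\nodebicut)(D) = \OPT(\fourvc)(G)$, and the claim follows.

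The principal obstacle is the soundness analysis in the global, terminal-free setting: because the algorithm is free to choose its own pair $(s',t')$, the construction must be ``rigid'', i.e.\ every off-terminal pair must remain strongly connected under the removal of $<\OPT(\fourvc)(G)$ nodes, and this rigidity must coexist with the lightness of the edge gadget so as not to inflate the completeness bound. Pinning down an edge gadget that achieves this balance uniformly over $4$-partite inputs (whose edges may span any pair of parts) is where the technical effort is concentrated; as in the analogous hardness result for \globaldoublecut, I would expect the same construction to also yield an integrality-gap instance for the natural LP relaxation, which can be used to justify the $3/2$ barrier even in the absence of UGC.
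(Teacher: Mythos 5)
Your high-level plan matches the paper's: reduce \fourvc\ to \nodebicut\ in an approximation-preserving way, then invoke $\bigl(2-2/k-\e\bigr)$-UGC-hardness of \kvc\ at $k=4$ (Lemma~\ref{lem:node-bicut-hardness} plus Section~\ref{sec:hardness_vc}). But two steps diverge materially from what the paper actually does, and one of them is a genuine gap in the proposal.

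First, you treat the $3/2-\e$ UGC-hardness of \fourvc\ as an off-the-shelf consequence of the MNRS framework for Node Multiway Cut. That framework characterizes hardness of \emph{edge-weighted} multiway cut in terms of the CKR LP gap; it does not hand you $(2-2/k-\e)$ hardness for vertex cover on $k$-partite graphs, and indeed nothing in the literature the paper cites does. The paper has to build a fresh dictatorship test for \kvc\ (Section~\ref{sec:hardness_vc}) and run a Khot--Regev-style \ug\ reduction (Section~\ref{sec:ug}). So the source of the $3/2$ factor is new work, not a citation; a correct proof must include it.

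Second, and more importantly, your soundness strategy is different from the paper's and as stated cannot work. You propose to pad $s,t$ and engineer ``rigidity'' so that every small deletion leaves all off-terminal pairs strongly connected, thereby forcing the algorithm's witness pair $(s',t')$ to be $\{s,t\}$. But this conflicts with what completeness requires: after deleting a vertex cover $C$, the surviving graph must be a feasible bicut, which means \emph{many} node pairs become mutually unreachable (e.g., a node reachable only from $s$ and a node reachable only from $t$). In the paper's actual construction — cliques on each $V_i$, bidirected arcs on $E$, bidirected $s$-$V_1$ and $t$-$V_4$ arcs, $s,t \to V_2$ and $V_3 \to s,t$ — deleting a vertex cover separates, among others, every $(v,u)$ with $v \in V_1\setminus C$, $u \in V_4\setminus C$. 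The paper's soundness therefore does not force the witness to be $\{s,t\}$; it instead shows that whatever pair $(v,u)$ the adversary picks, it must fall in $V_1 \times V_4$ (the cliques rule out same-part witnesses, and the $s,t$ arcs give short paths that rule out all other cross-part pairs), and \emph{then} that an uncovered edge would reconnect $v$ and $u$. Your ``short directed-arc gadget'' is also left unspecified, so there is no concrete construction to check; but the rigidity premise behind it is the part that would have to be abandoned, since making $s,t$ undeletable does not make them the unique separable pair. You should instead follow the paper's route of designing $D$ so that \emph{every} feasible witness pair certifies a vertex cover, rather than trying to canonicalize the witness.

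Finally, a small point: if you \emph{could} force the witness to be $\{s,t\}$, you would have reduced to \textsc{$\{s,t\}$-NodeBiCut}, which is equivalent to \textsc{$\{s,t\}$-EdgeBiCut} and already $(2-\e)$-UGC-hard — a strictly stronger conclusion than $3/2-\e$. That this is not what the paper proves is itself a signal that the witness cannot be pinned down this way in the global setting.
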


We observe that our approximability and inapproximability factors for \globaldoublecut\ and \nodebicut\ coincide---$2$ and $(3/2-\epsilon)$ respectively (Theorems \ref{thm:NodeDoubleCut-2-approx}, \ref{thm:NodeDoubleCut-hardness} and \ref{thm:node-bicut-hardness}). 

\subsection{Additional Results on Sub-problems and Variants} \label{sec:additional}
In what follows, we describe additional results that concern sub-problems in our algorithms/hardness results,
and also variants of these sub-problems which are of independent interest.\\

\noindent{\bf Node weighted 3-Cut.}
We show the NP-hardness of \globaldoublecut\ in Theorem \ref{thm:NodeDoubleCut-hardness} by a reduction from the node-weighted $3$-cut problem in undirected graphs. In the node weighted $3$-cut problem, denoted \nodethreecut , the input is an undirected graph
and the goal is to find the smallest subset of nodes whose deletion leads to at least $3$ connected components in the remaining graph. A classic result of Goldschmidt and Hochbaum \cite{GH94} showed that the edge-weighted variant, denoted \textsc{Edge-3-Cut} (see above for definition)---more commonly known as $3$-cut---is solvable in polynomial time.
Intriguingly, the complexity of \nodethreecut\ remained open until now.
We present the first results on the complexity of \nodethreecut .
\begin{theorem}\label{thm:node-3-cut-hardness}
\nodethreecut\ is NP-hard,
and has no efficient $(4/3-\epsilon)$-approximation for any $\epsilon>0$ assuming the Unique Games Conjecture.
\end{theorem}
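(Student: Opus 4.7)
I would reduce from the node-weighted $3$-terminal multiway cut problem \textsc{Node-$3$-MWC}, whose UGC-inapproximability of $4/3-\epsilon$ is a classical result of Manokaran, Naor, Raghavendra, and Schwartz. Given an instance $(G, t_1, t_2, t_3)$ of \textsc{Node-$3$-MWC}, I construct a \nodethreecut\ instance $G'$ by attaching to each terminal $t_i$ a ``heavy anchor'' clique $K_i$ on $M$ new vertices, each adjacent to $t_i$, where $M$ is polynomially larger than $|V(G)|$. Intuitively, each $K_i$ is meant to occupy one of the three components of any minimum Node-$3$-Cut of $G'$, and its size prevents a cheap cut from fragmenting it.

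The forward direction is immediate: any Node-$3$-MWC $S$ of $G$ is also a Node-$3$-Cut of $G'$, since the three anchor cliques $K_i\setminus S$ land in three distinct terminal-components of $G'\setminus S$. For the reverse direction, given a minimum Node-$3$-Cut $S$ of $G'$ with $|S|<M$, an analysis of the component structure of $G'\setminus S$ shows that each $K_i \setminus S$ is a connected clique whose component equals that of $t_i$ in $G \setminus (S\cap V(G))$, or is isolated if $t_i \in S$; consequently the number of components of $G'\setminus S$ equals $c_G + k$, where $c_G$ is the number of components of $G\setminus (S\cap V(G))$ and $k$ is the number of terminals deleted by $S$. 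When all three terminals are separated by $S$ in $G\setminus (S\cap V(G))$, the set $S\cap V(G)$ is already a valid Node-$3$-MWC of $G$ of cost at most $|S|$, and the desired hardness reduction follows.

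The main obstacle will be to rule out node-$3$-cuts of $G'$ whose third component comes from an \emph{orphan} component of $G\setminus(S\cap V(G))$ disjoint from $\{t_1,t_2,t_3\}$; such a cut would contribute to $c_G + k \geq 3$ without being a genuine multiway separator. I would address this by strengthening the reduction so that any orphan component is at least as costly as separating the third terminal---either by adding auxiliary ``guard'' paths connecting every non-terminal to each of the three anchors, or by modifying the anchor gadget to route connectivity through specific non-terminal vertices. If a direct combinatorial argument resists in the cardinality setting, the hardness can first be established in the weighted setting (where suitable hub or path vertices are made uncuttable by giving them infinite weight) and then transferred to the cardinality setting by the standard trick of replacing each high-weight vertex with a large number of unit-weight parallel copies sharing its neighborhood. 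The NP-hardness statement follows from the same reduction together with the NP-hardness of Node-$3$-MWC.
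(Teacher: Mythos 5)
The overall strategy---turning a fixed-terminal cut instance into a global cut instance by attaching ``anchor'' gadgets to the terminals---matches the spirit of the paper's reduction, but your specific construction is broken, and the flaw is more basic than the orphan-component issue you flag. Each clique $K_i$ meets the rest of $G'$ only at the single vertex $t_i$, so $t_i$ is a cut-vertex of $G'$: deleting $\{t_1,t_2\}$ detaches $K_1$ and $K_2$ as two isolated cliques, and together with the remainder $G'\setminus\{t_1,t_2\}$ this already yields at least three components. Your own accounting ``number of components $= c_G + k$'' says exactly this: $S=\{t_1,t_2\}$ gives $k=2$ and $c_G\geq 1$. Hence $G'$ always admits a \nodethreecut\ solution of size $2$, independent of the multiway-cut optimum of $G$, so no soundness argument can hold. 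The large clique size $M$ guards against fragmenting $K_i$ from within but does nothing to prevent amputating it at its sole attachment point. Your fallback---give a few vertices infinite weight, then simulate this in the unit-weight setting by cloning them into many twins sharing a neighborhood---is the correct repair and would have to be applied to the terminals $t_i$ themselves, but as written your primary reduction does not establish even NP-hardness.

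The orphan-component problem you do identify is also real, and neither patch you sketch works as stated: connecting every non-terminal to all three anchors by uncuttable guard paths would route a connection between any two anchors through any surviving non-terminal, so the anchors could never be separated. The paper avoids both problems simultaneously by taking \threevc\ as the intermediate problem instead of general node-weighted $3$-way cut. Given a $3$-partite graph $G=(V_1\cup V_2\cup V_3,E)$, it adds three infinite-weight vertices $s_1,s_2,s_3$ and joins $s_i$ to every vertex of $V_i$. Because each $V_i$ is independent, once a vertex cover $S$ is deleted the three stars $\{s_i\}\cup(V_i\setminus S)$ are precisely the connected components: every surviving non-terminal is wired directly to its $s_i$, so orphans cannot occur, and the infinite weights forbid deleting the $s_i$. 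The paper then proves the $\bigl(2(k-1)/k-\epsilon\bigr)$ hardness for \kvc\ directly from \ug\ via a dictatorship test, rather than invoking an off-the-shelf multiway-cut hardness; the Manokaran--Naor--Raghavendra--Schwartz result you cite concerns the edge-weighted CKR flow-cut gap and does not by itself supply the $(4/3-\epsilon)$ node-weighted bound you assume.
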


The inapproximability factor of $4/3$ mentioned in the above theorem is tight: the $4/3$-approximation factor can be achieved by guessing $3$ terminals that are separated and then using well-known approximation algorithms to solve the resulting node-weighted $3$-terminal cut instance
\cite{GVY04}.\\

\noindent{\bf \stlinthreecut .}
As a sub-problem in the algorithm for Theorem \ref{thm:bicut-algorithm}, we consider the following, denoted \stlinthreecut\ (abbreviating edge-weighted linear $3$-cut): Given a directed graph $D=(V,E)$ and two specified nodes $s,t\in V$, find the smallest number of edges to delete so that there exists a node $r$ with the property that $s$ cannot reach $r$ and $t$, and $r$ cannot reach $t$ in the resulting graph. This problem is a global variant of \srtlinthreecut, introduced in \cite{EJTT14}, where the input specifies three terminals $s,r,t$ and the goal is to find the smallest number of edges whose removal achieves the property above.
A simple reduction from \textsc{Edge-$3$-way-Cut} shows that \srtlinthreecut\ is NP-hard. 
The approximability of \srtlinthreecut\
was studied by Chekuri and Madan \cite{CM17}. They showed that the inapproximability factor coincides with the flow-cut gap of an associated \emph{path-blocking linear program} assuming the Unique Games Conjecture.

There exists a simple combinatorial $2$-approximation algorithm for \srtlinthreecut .
A  $2$-approximation for \stlinthreecut\ can be obtained by guessing the terminal $r$ and using the above-mentioned approximation.
For our purposes,
we need a strictly better than $2$-approximation for \stlinthreecut ; we obtain the following improved approximation factor.
\begin{theorem}\label{thm:s-star-t-lin-3-cut-algorithm}
There exists an efficient $3/2$-approximation algorithm for \stlinthreecut .
\end{theorem}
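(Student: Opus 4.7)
The plan is combinatorial, based on combining multiple minimum cuts. For each candidate middle vertex $r \in V \setminus \{s, t\}$, let
$\alpha_r := \lambda(s, \{r, t\})$, $\beta_r := \lambda(\{s, r\}, t)$, $\gamma_r := \lambda(s, r)$, $\delta_r := \lambda(r, t)$, and set $\epsilon := \lambda(s, t)$; each is computable in polynomial time via max-flow. For each $r$, I would consider three candidate solutions (each a valid \stlinthreecut\ solution): (i) a minimum $s$-$\{r,t\}$ cut unioned with a minimum $r$-$t$ cut, of cost at most $\alpha_r + \delta_r$; (ii) a minimum $\{s,r\}$-$t$ cut unioned with a minimum $s$-$r$ cut, of cost at most $\beta_r + \gamma_r$; and (iii) a minimum $s$-$\{r,t\}$ cut unioned with a minimum $\{s,r\}$-$t$ cut, of cost at most $\alpha_r + \beta_r$. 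The algorithm would also enumerate a polynomial-size collection of $s$-$t$ cuts in $D$ of size at most $2\epsilon$; for each such cut $F_0$ and each vertex $r$, it constructs the candidate $F_0 \cup C_r$, where $C_r$ is a minimum $s$-$r$ or minimum $r$-$t$ cut in the residual graph $D - F_0$, depending on whether $r$ lies in the reach set of $s$ or the co-reach set of $t$ in $D - F_0$ (and takes $F_0$ alone if $r$ lies in neither). The algorithm returns the cheapest feasible candidate.

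For the analysis, let $F^*$ be the optimum and let $(S^*, R^*, T^*)$ be the vertex partition induced by $F^*$, with $s \in S^*$, $t \in T^*$, and some $r^* \in R^*$. Writing $a = |E(S^*, R^*)|$, $b = |E(S^*, T^*)|$, and $c = |E(R^*, T^*)|$, one has $|F^*| = a + b + c$. Since $E(S^*, R^* \cup T^*)$ is simultaneously an $s$-$\{r^*, t\}$ cut and an $s$-$r^*$ cut, $\alpha_{r^*}, \gamma_{r^*} \leq a + b$; symmetrically $\beta_{r^*}, \delta_{r^*} \leq b + c$. For $r = r^*$, candidate (iii) then has cost at most $(a + b) + (b + c) = |F^*| + b$, which yields a $3/2$-approximation whenever $b \leq |F^*|/2$.

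The main obstacle is the regime $b > |F^*|/2$, where each of the three direct candidates can approach $2|F^*|$. In this regime I would split on $\epsilon$. If $\epsilon \leq |F^*|/2$, then taking $F_0$ to be any minimum $s$-$t$ cut gives total cost $\epsilon + |F^* \setminus F_0| \leq \epsilon + |F^*| \leq (3/2)|F^*|$, since $F^* \setminus F_0$ is still a valid residual cut disconnecting $r^*$ on the appropriate side. Otherwise $\epsilon > |F^*|/2$, so $a + b \leq |F^*| < 2\epsilon$, and the specific cut $F_0 := \partial^+(S^*) = E(S^*, R^* \cup T^*)$ is of size at most $2\epsilon$. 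Assuming $F_0$ is in the enumerated family, the reach of $s$ in $D - F_0$ equals $S^*$; hence $r^* \in R^*$ lies in the co-reach of $t$, and the residual minimum $r^*$-$t$ cut has size at most $|F^* \setminus F_0| = c$, yielding total cost $(a + b) + c = |F^*|$. The step most in need of care is guaranteeing the polynomial-time enumeration of all source sides $S \subseteq V$ with $s \in S$, $t \notin S$, and $|\partial^+(S)| \leq 2\epsilon$, so that $S^*$ is guaranteed to be among them; this requires adapting known enumeration techniques for approximate minimum cuts to the directed setting.
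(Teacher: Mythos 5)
Your first two regimes are sound: when $b \leq |F^*|/2$, candidate (iii) at $r=r^*$ has cost at most $|F^*|+b \leq \tfrac{3}{2}|F^*|$; and when $b > |F^*|/2$ but $\epsilon \leq |F^*|/2$, a single minimum $s$-$t$ cut $F_0$ plus a residual minimum $s$-$r^*$ or $r^*$-$t$ cut gives cost at most $\epsilon + |F^*\setminus F_0| \leq \tfrac{3}{2}|F^*|$. The fatal gap is in the third regime ($b > |F^*|/2$ and $\epsilon > |F^*|/2$), where your argument needs the specific cut $F_0 = \delta^{out}(S^*)$ to appear among the enumerated candidates. Enumerating all source sides $S$ with $s \in S$, $t \notin S$, and $d^{out}(S) \leq 2\epsilon$ is not possible in polynomial time: even \emph{exact} minimum $s$-$t$ cuts can number $2^{\Omega(n)}$. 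Take $n$ vertex-disjoint directed paths $s \to v_i \to t$ with unit capacities; then $\lambda(s,t) = n$ and every one of the $2^n$ sets $S$ with $s \in S$, $t \notin S$ satisfies $d^{out}(S) = n$. Karger's $O(n^{2\alpha})$ bound is for $\alpha$-approximate \emph{global} minimum cuts in undirected graphs (this is what the paper invokes for Theorem~\ref{thm:st-k-cut-algorithm}); no analogue holds for $s$-$t$ cuts, directed or undirected. Since $S^*$ is unknown, there is no polynomial family guaranteed to contain it, and a wrong choice of $F_0$ can leave the residual $r^*$-cut as large as $|F^*|$, giving only a $2$-approximation (consider $a=c=0$, so $F^* = E(S^*,T^*)$; the unique good $F_0$ is $F^*$ itself, but there can be exponentially many other minimum $s$-$t$ cuts to try).

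The paper's proof is built on a device that avoids enumerating approximate cuts. After rephrasing \stlinthreecut\ (Lemma~\ref{lem:st-lin-3-cut-rephrase}) as minimizing $\beta(A,B) = |\delta^{in}(A)\cup\delta^{in}(B)|$ over nested $\overline{s}t$-sets $t \in A \subsetneq B \subseteq V - s$, it incrementally builds a \emph{chain} $\mathcal{C}$ of $\overline{s}t$-sets (at most $n$ of them, hence automatically polynomial) and a threshold $k$ so that $\mathcal{C}$ contains every $\overline{s}t$-set of cut value strictly below $k$, halting once it finds an $\overline{s}t$-set $Z$ of value $k$ that crosses some chain member $X$. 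If the optimum is below $k$, both sets of the optimal pair lie in $\mathcal{C}$ and exhaustive search over pairs in the chain finds the exact optimum; otherwise $k$ is at most the optimum and submodularity of $d^{in}$ applied to the crossing pair $X, Z$ produces four candidate nested pairs, at least one of which has $\beta$-value at most $\tfrac{3}{2}k$. The chain restriction and the submodularity step at the crossing are the two structural ideas that make a polynomial-time $3/2$-approximation possible; both are absent from your proposal, and the enumeration step your algorithm starts from is the one that cannot be implemented.
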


\noindent{\bf \stsepkcut .}
In contrast to \srtlinthreecut, we do not have a hardness result for \stlinthreecut .
Upon encountering cut problems in directed graphs, it is often insightful to consider the complexity of the analogous problem in undirected graphs.
Our next result shows that the
following analogous problem in undirected graphs is in fact solvable in polynomial time:
given an undirected graph with two specified nodes $s,t$, remove the smallest subset of edges so that the resulting graph has at least $3$ connected components with $s$ and $t$ being in different components.
More generally, we consider \stsepkcut , where the goal is to delete the smallest subset of edges from a given undirected graph so that the resulting graph has at least $k$ connected components with $s$ and $t$ being in different components.
The complexity of \stsepkcut\ was posed as an open problem by Queyranne \cite{Que12}.
We show that \stsepkcut\ is solvable in polynomial-time for every constant $k$.
\begin{theorem}\label{thm:st-k-cut-algorithm}
For every constant $k$, there is an efficient algorithm to solve \stsepkcut .
\end{theorem}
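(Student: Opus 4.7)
The plan is to adapt the classical Goldschmidt--Hochbaum polynomial-time algorithm for minimum $k$-cut in undirected graphs \cite{GH94} to the present setting, where we additionally require that $s$ and $t$ lie in different components. Recall that their approach rests on a structural lemma stating that some optimal $k$-partition can be recovered by an iterative ``peel off a minimum cut'' process driven by a small number of pivot and witness vertices, which are enumerated in $n^{O(k)}$ time.

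The first and main step is to prove the analogous structural lemma for the constrained problem. We may assume the optimum is a partition with exactly $k$ parts, labeled $(V_1, \ldots, V_k)$ with $s \in V_1$ and $t \in V_2$, since merging any two auxiliary parts preserves the separation constraint and never increases the cost. I would then show that among such optima there is one admitting representatives $p_i \in V_i$ and witnesses $q_i \notin V_i$ for $i = 3, \ldots, k$ such that $V_i$ equals the minimum cut separating $p_i$ from $\{s, t, q_i\}$ in the graph obtained by contracting the previously extracted parts $V_3, \ldots, V_{i-1}$. The proof is a submodular uncrossing argument for the cut function $|\partial \cdot|$, run under the invariant that $\{s, t\}$ always lies on the witness side of every auxiliary minimum-cut computation; this invariant ensures that no uncrossing merges $s$ with $t$, so $s$-$t$ separation is preserved throughout the exchange steps.

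Given the structural lemma, the algorithm is immediate: enumerate all $O(n^{2(k-2)})$ tuples $(p_3, q_3, \ldots, p_k, q_k)$, perform $k-2$ minimum-cut computations per tuple to produce a candidate partition, and output the minimum-cost candidate that separates $s$ from $t$. Correctness follows from the structural lemma, since at least one tuple reproduces the claimed optimum; the running time is $n^{O(k)}$ max-flow computations, which is polynomial in $n$ for any constant $k$.

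The main obstacle is pushing the uncrossing argument through while maintaining the $s$-$t$ separation. Standard Goldschmidt--Hochbaum uncrossings can freely swap sides in the cut lattice, whereas here we must verify that the exchanges can always be arranged so that $s$ and $t$ remain in distinct parts of the resulting partition; pinning $\{s, t\}$ onto the witness side of each prescribed minimum cut is the mechanism, and confirming that this choice remains compatible with each exchange step (and with the subsequent contraction operations) is the crux of the argument.
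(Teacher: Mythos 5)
Your proposal takes a genuinely different route from the paper's, but it has a gap at precisely the step you flag as ``the crux.'' You assert, but do not prove, a structural lemma stating that some optimal $s$-$t$-separating $k$-cut can be recovered by a pinned peel-off process in which each auxiliary part $V_i$ ($i\ge 3$) is a minimum cut separating a pivot $p_i$ from $\{s,t,q_i\}$ after contractions. The Goldschmidt--Hochbaum uncrossing relies on the freedom to choose which side of an exchange becomes the peeled part and, crucially, on being able to peel off the \emph{smallest} remaining part; your restriction that $\{s,t\}$ always stay on the witness side, and that only the auxiliary parts $V_3,\dots,V_k$ are ever extracted, removes both freedoms. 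It is entirely plausible that an exchange step wants to move vertices of $V_1$ or $V_2$ into the peeled part, or that the ``right'' candidate to extract next is $V_1$ or $V_2$ itself. None of this is addressed, so correctness remains unestablished; what you have is a plan for a proof, not a proof.

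For comparison, the paper avoids needing any exact structural theorem. It adds an infinite-capacity $s$-$t$ edge to form a graph $H$, shows via a Menger-type path-counting argument (Proposition~\ref{cl:bound}) that the merged partition $(V_1,\dots,V_{k-2},V_{k-1}\cup V_k)$ obtained from an optimum $(V_1,\dots,V_k)$ with $s\in V_{k-1}$, $t\in V_k$ has cost at most $2\gamma^{k-1}(H)$, and then invokes the Karger--Stein bound that there are only $O(n^{4(k-1)})$ partitions within a factor $2$ of the minimum $(k-1)$-cut of $H$, all enumerable in polynomial time for fixed $k$. Each enumerated $(k-1)$-cut is completed by an $s$-$t$ min-cut on the part containing both terminals. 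The factor-$2$ slack absorbs exactly the structural uncertainty you would otherwise have to fight through with uncrossing, which is why this route is so much less delicate. If you wish to salvage your approach, the remaining work is to actually establish the modified structural lemma, and it is not clear that it is true in the form stated.
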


\noindent{\bf \sbicut .}
While Theorem \ref{thm:bicut-algorithm} shows that \edgebicut\ is approximable to a factor strictly smaller than $2$, we do not have a hardness result.
We could prove hardness for the following intermediate problem, denoted \sbicut : Given a directed graph with a specified node $s$, find the smallest number of edges to delete so that there exists a node $t$ such that $s$ cannot reach $t$ and $t$ cannot reach $s$ in the resulting graph. \sbicut\  admits a $2$-approximation by guessing the terminal $t$ and then using the $2$-approximation for \textsc{$\{s,t\}$-EdgeBiCut}. We show the following inapproximability result:
\begin{theorem}\label{thm:single-terminal-bicut-hardness}
\sbicut\ is NP-hard,
and has no efficient $(4/3-\epsilon)$-approximation for any $\epsilon>0$ assuming the Unique Games Conjecture.
\end{theorem}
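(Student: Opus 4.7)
The plan is to establish Theorem~\ref{thm:single-terminal-bicut-hardness} by a gap-preserving reduction from \nodethreecut, whose $(4/3-\epsilon)$-UGC inapproximability is Theorem~\ref{thm:node-3-cut-hardness}. Because the source hardness factor matches the target, I would aim for a reduction with only an additive (lower-order) overhead in the objective, which then immediately transfers both the NP-hardness and the $(4/3-\epsilon)$ UGC hardness.

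Given a \nodethreecut instance $(G, w)$ on undirected $G=(V,E)$, the construction proceeds in two stages. First, apply the standard split-node transformation to convert node weights into edge weights: replace each $v \in V$ by a pair of nodes $v^+, v^-$ connected by a directed arc $v^+ \to v^-$ of weight $w(v)$, and replace each $\{u,v\} \in E$ by two infinite-weight arcs $u^- \to v^+$ and $v^- \to u^+$. Second, introduce a new source $s$ and attach it to the split digraph through a gadget designed to enforce two properties: (i) any cost-effective \sbicut target $t$ must lie inside one of the split pairs $\{v^+, v^-\}$ rather than inside the gadget, and (ii) separating $s$ from $t$ in both directions requires the cut edges (necessarily split-arcs, since all other arcs are infinite) to correspond to a node subset of $V$ whose removal leaves $G$ with at least three connected components. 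In the forward direction of the correctness argument, any node 3-cut of $G$ of weight $k$ directly yields an \sbicut of cost $k$ plus a fixed gadget contribution by cutting the corresponding split-arcs and choosing $t$ in an appropriate component; in the reverse direction, any \sbicut of cost $C$ yields a node 3-cut of $G$ of weight at most $C$ via the same correspondence, where the gadget's role is precisely to certify the existence of the third component.

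The main obstacle will be the design of the source-attachment gadget, and specifically the enforcement of the third-component constraint. A naive bidirected attachment of $s$ would collapse the forward and backward separation requirements into a single constraint and reduce \sbicut to an ordinary minimum cut, destroying the $4/3$ gap. The gadget must instead exploit the forward--backward asymmetry of directed reachability: for instance, I would arrange $s$ to reach one ``hemisphere'' of the split graph only in the forward direction and to be reached from a second hemisphere only in the backward direction, so that satisfying $s \not\to t$ and $t \not\to s$ simultaneously forces two essentially independent cuts that together produce three components in $G$. Verifying that no cheaper \sbicut can be obtained by, for example, choosing $t$ inside the gadget, severing gadget-internal arcs, or mixing partial forward and backward cuts, is where the construction has to be carefully tuned; I expect this case analysis, together with the choice of gadget capacities that keeps the overhead additive, to be the most delicate step of the proof.
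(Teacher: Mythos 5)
Your outline takes a genuinely different route -- reducing from \nodethreecut\ rather than from \threevc, which is what the paper actually does via Lemma~\ref{lem:single-terminal-bicut-hardness} -- and the switch is not innocuous. The paper's gadget hinges on the explicit 3-partite structure of the source instance: the three parts $A$, $B$, $C$ play three distinct roles relative to $s$ and $t$ (roughly, $A$ is wired to both $s$ and $t$, $B$ is tied to $s$, $C$ is tied to $t$), and the carefully oriented infinite-weight forward and backward arcs realizing these roles are exactly what make ``no $s\to t$ path and no $t\to s$ path'' equivalent to ``the removed unit arcs form a vertex cover.'' An arbitrary \nodethreecut\ instance has no intrinsic three-role labeling of its vertices, so you would have to discover or impose one in the reduction itself; this is precisely where the content of what you call the ``most delicate step'' lives, and it is not a detail.

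More substantively, the proposal stops at the point where the proof has to begin. You correctly observe that a bidirected attachment of $s$ collapses the forward and backward constraints into a single ordinary cut and kills the $4/3$ gap, and you gesture at a two-hemisphere asymmetry, but you neither construct the gadget nor argue either direction of correctness. The soundness direction in particular requires a universal claim over the free terminal: for \emph{every} choice of $t$ -- including vertices inside the gadget and vertices that sit on only one side of your asymmetry -- if the removed split-arcs do not correspond to a node 3-cut of $G$, then $s\to t$ or $t\to s$ survives. The paper's proof is exactly such a case analysis over all placements of $t$ (see the enumeration in Lemma~\ref{lem:single-terminal-bicut-hardness}), and an additive-overhead version of it must still certify that no $t$ buried in the gadget and no partial forward/backward cut is cheaper. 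Without an explicit gadget and that universal soundness argument, what you have is a plausible plan, not a proof.
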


\begin{table}[h]
    \begin{center}
		\small{
		\begin{tabular}{|c|c|c|}
        \hline
        {\bf Problem} & {\bf Edge-deletion} & {\bf Node-deletion}\\
        \hline
        \textsc{DoubleCut} & {\color{gray} Poly-time} \cite{BP13} & $2$-approx (Thm \ref{thm:NodeDoubleCut-2-approx})\\
        & & $(3/2-\epsilon)$-inapprox (Thm \ref{thm:NodeDoubleCut-hardness})\\
        \hline
        \textsc{BiCut} & $(2-1/448)$-approx (Thm \ref{thm:bicut-algorithm}) & {\color{gray} $2$-approx}\\
        & & $(3/2-\epsilon)$-inapprox (Thm \ref{thm:node-bicut-hardness})\\
        \hline
        \textsc{$(s,*)$-BiCut} & {\color{gray} $2$-approx} & {\color{gray} {\color{gray} $2$-approx}}\\
        & $(4/3-\epsilon)$-inapprox (Thm \ref{thm:single-terminal-bicut-hardness})           & $(3/2-\epsilon)$-inapprox\\ 
        \hline
        \textsc{$(s,*,t)$-Lin$3$Cut} & $3/2$-approx (Thm \ref{thm:s-star-t-lin-3-cut-algorithm}) & {\color{gray} 2-approx} \\
        & & $(4/3 - \epsilon)$-inapprox  \\
        \hline
    \end{tabular}}
    \caption{{\small Global Variants in Directed Graphs. Text in gray refer to known results while text in black refer to the results from this work. All hardness of approximation results are under UGC.
		Hardness results for Node weighted \textsc{$(s,*,t)$-Lin$3$Cut} are based on the fact that it is as hard to approximate as Node weighted \textsc{$\{s,t\}$-Sep$3$Cut} by bidirecting the edges (Table \ref{table:global-undir}).}}
    \vspace{-10pt}
    \label{table:global-digraphs}
    \end{center}
\end{table}


\begin{table}[h]
    \begin{center}
    \small{
		\begin{tabular}{|c|c|c|}
        \hline
        {\bf Problem} & {\bf Edge-deletion} & {\bf Node-deletion}\\
        \hline
        \textsc{$(s,t)$-DoubleCut} & {\color{gray} Poly-time} \cite{BP13} & $2$-approx (Thm \ref{thm:NodeDoubleCut-2-approx})\\
        & & $(2-\epsilon)$-inapprox (Thm \ref{thm:st-node-double-cut-hardness})\\
        \hline
        \textsc{$(s,t)$-BiCut} & {\color{gray} $2$-approx} & {\color{gray} [Equivalent to edge-deletion]}\\
        & {\color{gray} $(2-\epsilon)$-inapprox} \cite{CM16,Lee16} & \\
        \hline
        \textsc{$(s,r,t)$-Lin$3$Cut} & {\color{gray} $2$-approx} & {\color{gray} [Equivalent to edge-deletion]}\\
        & {\color{gray} $(\alpha-\epsilon)$-inapprox } \cite{CM17}& \\
        & {\color{gray} (where $\alpha$ is the flow-cut gap)} & \\
        \hline
    \end{tabular}}
    \caption{\small{Fixed-Terminal Variants in Directed Graphs. Text in gray refer to known results while text in black refer to the results from this work. All hardness of approximation results are under UGC. We include \textsc{$\{s,t\}$-BiCut} and \textsc{$(s,r,t)$-Lin3Cut} for comparison with the global variants in Table \ref{table:global-digraphs}.} }
    \vspace{-10pt}
    \label{table:fixed-digraphs}
    \end{center}
\end{table}

\begin{table}[h]
    \begin{center}
		\small{
    \begin{tabular}{|c|c|c|}
        \hline
        {\bf Problem} & {\bf Edge-deletion} & {\bf Node-deletion}\\
        \hline
        \textsc{$k$-cut} & {\color{gray} Poly-time} \cite{GH94,KS96} & {\color{gray} $(2-2/k)$-approx} \cite{GVY04}\\
        (where $k$ is constant) & & $(2-2/k-\epsilon)$-inapprox (Thm \ref{thm:node-3-cut-hardness})\\
        \hline
        \textsc{$\{s,t\}$-Sep$k$Cut} & Poly-time (Thm \ref{thm:st-k-cut-algorithm}) & {\color{gray} $(2-2/k)$-approx} \cite{GVY04}\\
        (where $k$ is constant) & & $(2-2/k-\epsilon)$-inapprox (Thm \ref{thm:node-3-cut-hardness})\\
        \hline
    \end{tabular}}
    \caption{\small{Global Variants in Undirected Graphs. Text in gray refer to known results while text in black refer to the results from this work. All hardness of approximation results are under UGC.}}
    \vspace{-10pt}
    \label{table:global-undir}
    \end{center}
\end{table}
\newpage


\subsection{Related Work}\label{sec:related-work}
In recent work, Bern\'{a}th and Pap \cite{BP13} studied the problem of deleting the smallest number of arcs to block all minimum cost arborescences of a given directed graph. They gave an efficient algorithm to solve this problem through combinatorial techniques. However, their techniques fail to extend to the node weighted double cut problem.

The node-weighted $3$-cut problem---\textsc{Node-3-Cut}---is a generalization of the classic \textsc{Edge-3-Cut}. Various other generalizations of \textsc{Edge-3-Cut} have been studied in the literature showing the existence of efficient algorithms. These include the edge-weighted $3$-cut in hypergraphs \cite{X10,F10} and the more general submodular $3$-way partitioning \cite{ZNI05,OFN12}.  However, none of these known generalizations address \textsc{Node-3-Cut} as a special case. Feasible solutions to \textsc{Node-3-Cut} are also known as shredders in the node-connectivity literature. In the unit-weight case, shredders whose cardinality is equal to the node connectivity of the graph play a crucial role in the problem of min edge addition to augment node connectivity by one \cite{CT99,J99,LN07,V11}. There are at most linear number of such shredders and all of them can be found efficiently \cite{CT99,J99}. The complexity of finding a min cardinality shredder was open until our results (Theorem \ref{thm:node-3-cut-hardness}). 

In the edge-weighted multiway cut in undirected graphs, the input is an undirected graph with $k$ terminal nodes and the goal is to find the smallest cardinality subset of edges whose deletion ensures that there is no path between any pair of terminal nodes. For $k=3$, a $12/11$-approximation is known \cite{CCT06,KKSTY04}, while for constant $k$, the current-best approximation factor is 1.2975 due to Sharma and Vondr\'{a}k \cite{SV14}. 
These results are based on an LP-relaxation proposed by C\u{a}linescu, Karloff and Rabani \cite{CKR00}, known as the CKR relaxation. Manokaran, Naor, Raghavendra and Shwartz \cite{MNRS08} showed that the inapproximability factor coincides with the integrality gap of the CKR relaxation. Recently, Angelidakis, Makarychev and Manurangsi \cite{AMM16} exhibited instances with integrality gap at least $6/(5+(1/k-1))-\epsilon$ for every $k\ge 3$ and every $\epsilon>0$ for the CKR relaxation.


The node-weighted multiway cut in undirected graphs exhibits very different structure in comparison to the edge-weighted multiway cut. It reduces to edge-weighted multiway cut in hypergraphs. Garg, Vazirani and Yannakakis \cite{GVY04} gave a $(2-2/k)$-approximation for node-weighted multiway cut by exploiting the extreme point structure of a natural LP-relaxation.

The edge-weighted multiway cut in directed graphs has a $2$-approximation, due to Naor and Zosin \cite{NZ01}, as well as Chekuri and Madan \cite{CM16}. Matching inapproximability results were shown recently for $k=2$ \cite{Lee16,CM17}. The node-weighted multiway cut in directed graphs reduces to the edge-weighted multiway cut by exploiting the fact that the terminals are fixed. Such a reduction is unknown for the global version.




\subsection{Preliminaries}
Let $D=(V,E)$ be a directed graph. For two disjoint sets $X,Y\subset V$, we denote $\delta(X,Y)$ to be the set of edges $(u,v)$ with $u\in X$ and $v\in Y$ and $d(X,Y)$ to be the cut value $|\delta(X,Y)|$. We use $\delta^{in}(X):=\delta(V\setminus X, X)$, $\delta^{out}(X):=\delta(X, V\setminus X)$, $d^{in}(X):=|\delta^{in}(X)|$ and $d^{out}(X):=|\delta^{out}(X)|$. We use a similar notation for undirected graphs by dropping the superscripts.
For two nodes $s,t \in V$, a subset $X\subset V$ is called an \emph{$\overline{s}t$-set} if $t\in X\subseteq V-s$. The \emph{cut value} of an $\overline{s}t$-set $X$ is $d^{in}(X)$.

We frequently use the following characterization of directed graphs with no arborescence for the purposes of double cut.
\begin{theorem}(e.g., see \cite{BP13})
Let $D=(V,E)$ be a directed graph. The following are equivalent:
\begin{enumerate}
\item $D$ has no arborescence.
\item There exist two distinct nodes $s,t\in V$ such that every node $u$ can reach at most one node in $\{s,t\}$ in $D$.
\item There exist two disjoint non-empty sets $S,T\subset V$ with $\delta^{in}(S)\cup \delta^{in}(T)=\emptyset$.
\end{enumerate}

\end{theorem}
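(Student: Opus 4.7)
The plan is to prove the three conditions equivalent by a cycle of implications $(3) \Rightarrow (2) \Rightarrow (1) \Rightarrow (3)$, using elementary reachability arguments together with the structure of the condensation (DAG of strongly connected components) of $D$.

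For $(3) \Rightarrow (2)$, I would pick any $s \in S$ and any $t \in T$, where $S,T$ are the two disjoint nonempty sets guaranteed by (3). The key observation is that since $\delta^{in}(S) = \emptyset$, the only nodes that can reach $s$ are those already inside $S$ (any path ending at $s$ that enters $S$ from outside would use an edge of $\delta^{in}(S)$). Symmetrically, only nodes in $T$ can reach $t$. Since $S \cap T = \emptyset$, no node $u$ can reach both $s$ and $t$, yielding (2).

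For $(2) \Rightarrow (1)$, I would argue contrapositively: if $D$ has an arborescence rooted at some $r \in V$, then $r$ has a directed path to every node of $V$, hence in particular to both $s$ and $t$ for any choice of $s,t$. This contradicts the premise of (2) applied with $u = r$. So (2) rules out any arborescence.

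For the final implication $(1) \Rightarrow (3)$, I would use the standard fact that a finite directed graph $D$ admits an $r$-arborescence if and only if there exists a node $r$ that can reach every node of $V$, and this happens if and only if the condensation $\widetilde{D}$ of $D$ into strongly connected components contains a unique source SCC. Assuming (1), the condensation $\widetilde{D}$ therefore has at least two source SCCs; pick two distinct source SCCs and let $S, T \subset V$ be their node sets. They are disjoint and nonempty, and because source SCCs of the condensation have no incoming arcs from any other SCC of $\widetilde{D}$, we obtain $\delta^{in}(S) = \delta^{in}(T) = \emptyset$, establishing (3).

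The only subtle point, and the step I would write most carefully, is the characterization of arborescence existence via a unique source SCC of the condensation in the $(1) \Rightarrow (3)$ step; the rest is a direct unwinding of definitions. Everything else boils down to the single principle that a set with no incoming arcs is unreachable from outside, which I would state explicitly once and then invoke.
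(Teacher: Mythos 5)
Your proof is correct and complete. Note that the paper itself does not prove this theorem; it is stated with a pointer to \cite{BP13} as a known characterization, so there is no internal proof to compare against. Your cycle of implications $(3)\Rightarrow(2)\Rightarrow(1)\Rightarrow(3)$ is the standard argument: $(3)\Rightarrow(2)$ relies on the principle that a set with no incoming arcs is unreachable from outside; $(2)\Rightarrow(1)$ is immediate from the observation that the root of an arborescence reaches every node; and $(1)\Rightarrow(3)$ uses the fact that a finite digraph has a node reaching all others if and only if its condensation has a unique source SCC, so the absence of an arborescence yields two distinct source SCCs, which serve as $S$ and $T$. The one step you flagged as subtle is indeed the crux, and your handling of it is sound: in any finite digraph there is at least one source SCC (follow in-arcs backward in the condensation, terminating by acyclicity), and if it were unique it would reach everything, contradicting the non-existence of an arborescence.
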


\section{Approximation for \textsc{NodeDoubleCut}}
\label{sec:nodeDoubleCut-approx}
In this section, we present an efficient $2$-approximation algorithm for \textsc{$\{s,t\}$-NodeDoubleCut} which also leads to a $2$-approximation for \textsc{NodeDoubleCut} by guessing the pair of nodes $s,t$. \\

\noindent{\bf Remark.} Our algorithm is LP-based. Although, alternative combinatorial algorithms can be designed for this problem, we provide an LP-based algorithm since it also helps to illustrate an integrality gap instance which is the main tool underlying the hardness of approximation for the problem. Furthermore, it is also easy to round an \emph{optimum} solution to our LP to obtain a solution whose cost is at most twice the \emph{optimum} LP-cost (using complementary slackness conditions). Here, we present a rounding algorithm which starts from \emph{any feasible solution} to the LP (not necessarily optimal) and gives a solution whose cost is at most twice the LP-cost of \emph{that feasible solution}. 

At the end of this section, we give an example showing that the integrality gap of the LP nearly matches the approximation factor achieved by our rounding algorithm. 

\begin{proof}[Proof of Theorem \ref{thm:NodeDoubleCut-2-approx}]
We recall the problem: Given a directed graph $D=(V,E)$ with two specified nodes $s,t\in V$ and node costs $c:V\setminus \{s,t\}\rightarrow \R_+$, the goal is to find a least cost subset $U\subseteq V\setminus \{s,t\}$ of nodes such that every node $u\in V\setminus U$ can reach at most one node in $\{s,t\}$ in the subgraph $D-U$. We will denote a path $P$ by the set of nodes in the path and the collection of paths from node $u$ to node $v$ by $\calP^{u\rightarrow v}$. 
For a fixed function $d:V\rightarrow \R_+$, the $d$-distance of a path $P$ is defined to be $\sum_{u\in P}d_u$ and the shortest $d$-distance from node $u$ to node $v$ is the minimum $d$-distance among all paths from node $u$ to node $v$. 
We use the following LP-relaxation, where we have a variable $d_u$ for every node $u\in V$:

\begin{align}
\min &\sum_{v\in V\setminus \{s,t\}} c_v d_v \tag{Path-Blocking-LP}\label{lp:path-blocking-lp}\\
\sum_{v\in P} d_v + \sum_{v\in Q} d_v - d_u &\ge 1\ \forall\ P\in \calP^{u\rightarrow s}, Q\in \calP^{u\rightarrow t},\ \forall\ u\in V\notag\\
d_s,d_t&=0\notag\\
d_v&\ge 0\ \forall\ v\in V \notag
\end{align}

We first observe that Path-Blocking-LP can be solved efficiently. The separation problem is the following: given $d:V\rightarrow \R_+$, verify if there exists a node $u\in V$ such that the sum of the shortest $d$-distance path from $u$ to $s$ and the shortest $d$-distance path from $u$ to $t$ is at most $1+d_u$. 
Thus, the separation problem can be solved efficiently by solving the shortest path problem in directed graphs. 

Let $d:V\rightarrow R_+$ be a feasible solution to Path-Blocking-LP. We now present a rounding algorithm that achieves a $2$-factor approximation. We note that our algorithm rounds an arbitrary feasible solution $d$ to obtain an integral solution whose cost is at most twice the LP-cost of the solution $d$. For a subset $U$ of nodes, let $\Delta^{in}(U)$ be the set of nodes $v\in V\setminus U$ that have an edge to a node $u\in U$.

\vspace{1mm}
\hrule
\vspace{1mm}
\noindent \textbf{Rounding Algorithm for $\{s,t\}$-NodeDoubleCut}
\hrule
\begin{enumerate}
\item Pick $\theta$ uniformly from the interval $(0,1/2)$.
\item Let $\B^{in}(s,\theta)$ and $\B^{in}(t,\theta)$ be the set of nodes whose shortest $d$-distance to $s$ and $t$ respectively, is at most $\theta$. 
\item Return $U:=\Delta^{in}(\B^{in}(s,\theta))\cup \Delta^{in}(\B^{in}(t,\theta))$.
\end{enumerate}
\hrule
\vspace{1mm}

The rounding algorithm can be implemented to run in polynomial-time. We first show the feasibility of the solution returned by the rounding algorithm. We use the following claim. 
\begin{claim}
For every $\theta\in (0,1/2)$, we have $\B^{in}(s,\theta)\cap \B^{in}(t,\theta)=\emptyset$.
\end{claim}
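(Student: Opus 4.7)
The plan is a direct contradiction argument that exploits the LP constraint at the hypothetical common vertex $u$. Suppose, for the sake of contradiction, that some $u \in \B^{in}(s,\theta) \cap \B^{in}(t,\theta)$. By the definition of $\B^{in}(\cdot,\theta)$, there is a path $P \in \calP^{u \to s}$ and a path $Q \in \calP^{u \to t}$ whose $d$-distances (which, per the paper's convention, include the endpoint weights $d_u$, $d_s = 0$, and $d_t = 0$) satisfy $\sum_{v \in P} d_v \le \theta$ and $\sum_{v \in Q} d_v \le \theta$.

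Now I would plug the pair $(P,Q)$ into the Path-Blocking-LP constraint for the vertex $u$:
\[
\sum_{v \in P} d_v + \sum_{v \in Q} d_v - d_u \;\le\; 2\theta - d_u \;\le\; 2\theta \;<\; 1,
\]
where the last strict inequality uses the assumption $\theta < 1/2$, and the middle inequality uses $d_u \ge 0$. This contradicts the LP constraint $\sum_{v\in P}d_v + \sum_{v\in Q}d_v - d_u \ge 1$ that $d$ satisfies, so no such $u$ exists.

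There is essentially no obstacle: the only subtle point is that the $d$-distance of a path, as defined in the paper, includes the weight $d_u$ of the source, so the constraint ``$\sum_P d_v + \sum_Q d_v - d_u \ge 1$'' is precisely tailored so that one may sum the two short distances and subtract the doubly-counted $d_u$ without having to worry about it. Strict inequality $\theta < 1/2$ is exactly what is needed to beat the right-hand side $1$, which is why the algorithm samples $\theta$ from the open interval $(0,1/2)$.
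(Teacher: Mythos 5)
Your proof is correct and takes essentially the same approach as the paper: both argue by contradiction, take paths $P$ and $Q$ witnessing membership in the two balls, and invoke the Path-Blocking-LP constraint at $u$. Your version is slightly more explicit in subtracting $d_u$ and using $d_u\ge 0$, whereas the paper simply notes $\sum_{v\in P}d_v+\sum_{v\in Q}d_v\le 2\theta<1$ and appeals to feasibility, but there is no substantive difference.
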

\begin{proof}
Say $u\in \B^{in}(s,\theta)\cap \B^{in}(t,\theta)$. Then there exists a path $P\in \calP^{u\rightarrow s}$ and a path $Q\in \calP^{u\rightarrow t}$ such that $\sum_{v\in P} d_v + \sum_{v\in Q} d_v\le 2\theta <1$, a contradiction to the fact that $d$ is feasible for Path-Blocking-LP.
\end{proof}


\begin{claim}
The solution $U$ returned by the algorithm is such that every node $u\in V\setminus U$ can reach at most one node in $\{s,t\}$ in the subgraph $D-U$.
\end{claim}
\begin{proof}
Suppose not. Then there exists $u\in V\setminus U$ that can reach both $s$ and $t$ in $D-U$. If $u\not\in \B^{in}(s,\theta)$, then $u$ cannot reach $s$ in $D-U$ since $\B^{in}(s,\theta)$ has no entering edges in $D-U$. Thus, $u\in \B^{in}(s,\theta)$. Similarly, $u\in \B^{in}(t,\theta)$. However, this contradicts the above claim that $\B^{in}(s,\theta)\cap \B^{in}(t,\theta)=\emptyset$.
\end{proof}

We next bound the expected cost of the solution returned by the rounding algorithm. Let $\bar{d}(v,a)$ denote the shortest $d$-distance from node $v$ to node $a$ in $D$. We use the following claim. 
\begin{claim}
Let $\theta\in (0,1/2)$. If $v\in \Delta^{in}(\B^{in}(s,\theta))$ then $\theta<\bar{d}(v,s)\le \theta+d_v$ and $d_v\neq 0$.
\end{claim}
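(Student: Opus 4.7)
The plan is to unpack the two definitions that feed into the hypothesis $v\in \Delta^{in}(\B^{in}(s,\theta))$ and read off both inequalities, after which $d_v\neq 0$ follows for free by combining them.

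First, by the definition of $\Delta^{in}$, such a $v$ lies outside $\B^{in}(s,\theta)$. By the definition of $\B^{in}(s,\theta)$, every path from $v$ to $s$ has $d$-distance strictly greater than $\theta$, so in particular $\bar{d}(v,s)>\theta$, which is the lower bound we want.

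Second, again by the definition of $\Delta^{in}$, there exists some $u\in \B^{in}(s,\theta)$ with $(v,u)\in E$. Since $u\in \B^{in}(s,\theta)$, there is a path $Q\in\calP^{u\rightarrow s}$ with $\sum_{w\in Q}d_w\le \theta$. Prepending the arc $(v,u)$ to $Q$ yields a path in $\calP^{v\rightarrow s}$; by the convention that the $d$-distance of a path sums $d_w$ over its node set, this new path contributes exactly $d_v$ on top of the $d$-distance of $Q$, so its $d$-distance is at most $d_v+\theta$. Therefore $\bar{d}(v,s)\le d_v+\theta$, which is the upper bound.

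Combining the two bounds gives $\theta<\bar{d}(v,s)\le d_v+\theta$, and subtracting $\theta$ forces $d_v>0$, hence $d_v\neq 0$. There is no real obstacle here: the only care needed is the bookkeeping that when we concatenate the edge $(v,u)$ with a path out of $u$, the shared endpoint $u$ is counted only once so that the added cost is exactly $d_v$; everything else is direct substitution into the definitions of $\B^{in}(\cdot,\theta)$, $\Delta^{in}(\cdot)$, and $\bar{d}$.
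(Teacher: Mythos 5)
Your proof is correct and is essentially the paper's argument, only phrased directly (unpack the two definitions, concatenate the edge $(v,u)$ with a cheap $u\rightarrow s$ path) rather than as the paper's three quick contrapositives; the substance is identical.
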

\begin{proof}
If $\bar{d}(v,s)\le \theta$, then $v\in \B^{in}(s,\theta)$, a contradiction to $v\in \Delta^{in}(\B^{in}(s,\theta))$. If $\bar{d}(v,s)>\theta+d_v$, then $v\not\in \Delta^{in}(\B^{in}(s,\theta))$, a contradiction. If $d_v=0$, then $\theta<\bar{d}(v,s)\le \theta+d_v=\theta$, a contradiction. 
 \end{proof}

\begin{claim}
For every $v\in V$, the probability that $v$ is chosen in $U$ is at most $2d_v$.
\end{claim}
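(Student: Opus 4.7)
The plan is to unwind the randomness: the event $v\in U$ corresponds to a subset of the sample space for $\theta$, so it suffices to compute the Lebesgue measure of the $\theta$'s in $(0,1/2)$ for which $v \in \Delta^{in}(\B^{in}(s,\theta))\cup \Delta^{in}(\B^{in}(t,\theta))$ and divide by $1/2$. By the preceding claim, if $v\in \Delta^{in}(\B^{in}(s,\theta))$ then $\theta\in [\bar d(v,s)-d_v,\bar d(v,s))$, and similarly for $t$. Hence $v\in U$ only if $\theta\in S_v\cup T_v$, where
\[
S_v = [\bar d(v,s)-d_v,\bar d(v,s)) \cap (0,1/2), \qquad T_v = [\bar d(v,t)-d_v,\bar d(v,t)) \cap (0,1/2).
\]
Each individual interval has length at most $d_v$, but a naive union bound would yield only $\Pr[v\in U]\le 4d_v$. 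To get the factor $2$ we have to show $|S_v\cup T_v|\le d_v$, and this is where the LP feasibility must be invoked.

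The key quantitative input is the LP constraint applied with $u=v$ and shortest paths $P\in\calP^{v\to s}$, $Q\in\calP^{v\to t}$:
\[
\bar d(v,s)+\bar d(v,t) \ge 1+d_v.
\]
I would then argue $|S_v\cup T_v|\le d_v$ by a short case analysis, WLOG assuming $\bar d(v,s)\le \bar d(v,t)$. In the first case, $\bar d(v,s)\le 1/2$; then the constraint forces $\bar d(v,t)\ge 1+d_v-\bar d(v,s)\ge 1/2+d_v$, so $\bar d(v,t)-d_v\ge 1/2$ and therefore $T_v=\emptyset$. Hence $|S_v\cup T_v|=|S_v|\le d_v$. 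In the second case, $\bar d(v,s)>1/2$, so both $\bar d(v,s),\bar d(v,t)>1/2$ and $S_v,T_v$ are right-aligned with the endpoint $1/2$; since $\bar d(v,s)\le \bar d(v,t)$ implies $T_v\subseteq S_v$, we get $|S_v\cup T_v|=|S_v|=1/2-\max(0,\bar d(v,s)-d_v)$. If $\bar d(v,s)\ge d_v$ this equals $1/2-\bar d(v,s)+d_v\le d_v$ (using $\bar d(v,s)>1/2$); otherwise $d_v>\bar d(v,s)>1/2$, so the length is $1/2\le d_v$.

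Putting these pieces together yields $\Pr[v\in U]\le |S_v\cup T_v|/(1/2)\le 2d_v$. The main obstacle, and the one place creativity is needed, is precisely the argument that the LP constraint makes $S_v$ and $T_v$ either essentially nested (when both targets are far) or one of them forced to be empty (when one target is close); the naive observation that each has length at most $d_v$ is off by a factor of two. Once $\Pr[v\in U]\le 2d_v$ is established, linearity of expectation gives $\E[c(U)]\le 2\sum_v c_v d_v$, which is twice the LP cost of the feasible solution $d$, completing the approximation guarantee promised by the rounding algorithm.
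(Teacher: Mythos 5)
Your proof is correct and takes essentially the same approach as the paper: threshold rounding at a uniformly random $\theta\in(0,1/2)$, characterizing the set of thresholds for which $v$ enters $U$ as a union of two intervals, and invoking the LP constraint $\bar{d}(v,s)+\bar{d}(v,t)-d_v\ge 1$ (with $u=v$ and shortest paths) to show this union has measure at most $d_v$. The only presentational difference is in the final step: the paper first derives $\bar{d}(v,t)\ge 1/2$ and then bounds the sum of the two interval lengths algebraically, whereas you argue more structurally that $T_v$ is either empty (when $\bar{d}(v,s)\le 1/2$) or nested inside $S_v$ (when both distances exceed $1/2$); the two routes are equivalent, with yours handling the truncation at the endpoints of $(0,1/2)$ somewhat more carefully.
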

\begin{proof}
The claim holds if $v\in\{s,t\}$. Let us fix $v\in V\setminus \{s,t\}$. By the claim above, if $v\in \Delta^{in}(\B^{in}(s,\theta))$ then $\theta<\bar{d}(v,s)\le \theta+d_v$ and $d_v\neq 0$.
Similarly, if $v\in \Delta^{in}(\B^{in}(t,\theta))$, then $\theta<\bar{d}(v,t)\le \theta+d_v$ and $d_v\neq 0$. Now, the probability that $v$ is in $U$ is at most
\[
\prob\left(\theta\in \left(\bar{d}(v,s)-d_v,\min\{\bar{d}(v,s),1/2\}\right)
\cup
\left(\bar{d}(v,t)-d_v,\min\{\bar{d}(v,t),1/2\}\right)
\right).
\]
Without loss of generality, let $\bar{d}(v,s)\le \bar{d}(v,t)$. We may assume that $d_v>0$ and $\bar{d}(v,s)-d_v<1/2$, since otherwise, the probability that $v$ is in $U$ is $0$ and the claim is proved. Now, by the feasibility of the solution $d$ to Path-Blocking-LP, we have that $\bar{d}(v,s)+\bar{d}(v,t)-d_v\ge 1$ and hence $\bar{d}(v,t)\ge 1/2$. Therefore, 
\begin{align*}
\prob(v\in U)
&\le \prob\left(\theta\in \left(\bar{d}(v,s)-d_v,\min(\bar{d}(v,s),1/2)\right)\right)
+
\prob\left(\theta\in \left(\bar{d}(v,t)-d_v,1/2\right)\right)\\
&=\frac{1}{(1/2)}\left(1/2-\bar{d}(v,s)+d_v + 1/2 - \bar{d}(v,t)+d_v\right)\\
&= 2\left(1-(\bar{d}(v,s)+\bar{d}(v,t)-d_v)+d_v\right)\\
&\le 2d_v.
\end{align*}
The first equality in the above is because $\theta$ is chosen uniformly from the interval $(0,1/2)$ while the last inequality is because of the feasibility of the solution $d$ to Path-Blocking-LP.
 \end{proof}
By the above claim, the expected cost of the returned solution is 
\[
\E\left(\sum_{v\in U} c_v\right) = \sum_{v\in V}\prob(v\in U)c_v \le 2\sum_{v\in V}c_v d_v.
\]
Although our rounding algorithm is a randomized algorithm, it can be derandomized using standard techniques. 

\end{proof}

\begin{figure}
\begin{center}
\includegraphics[scale=0.55]{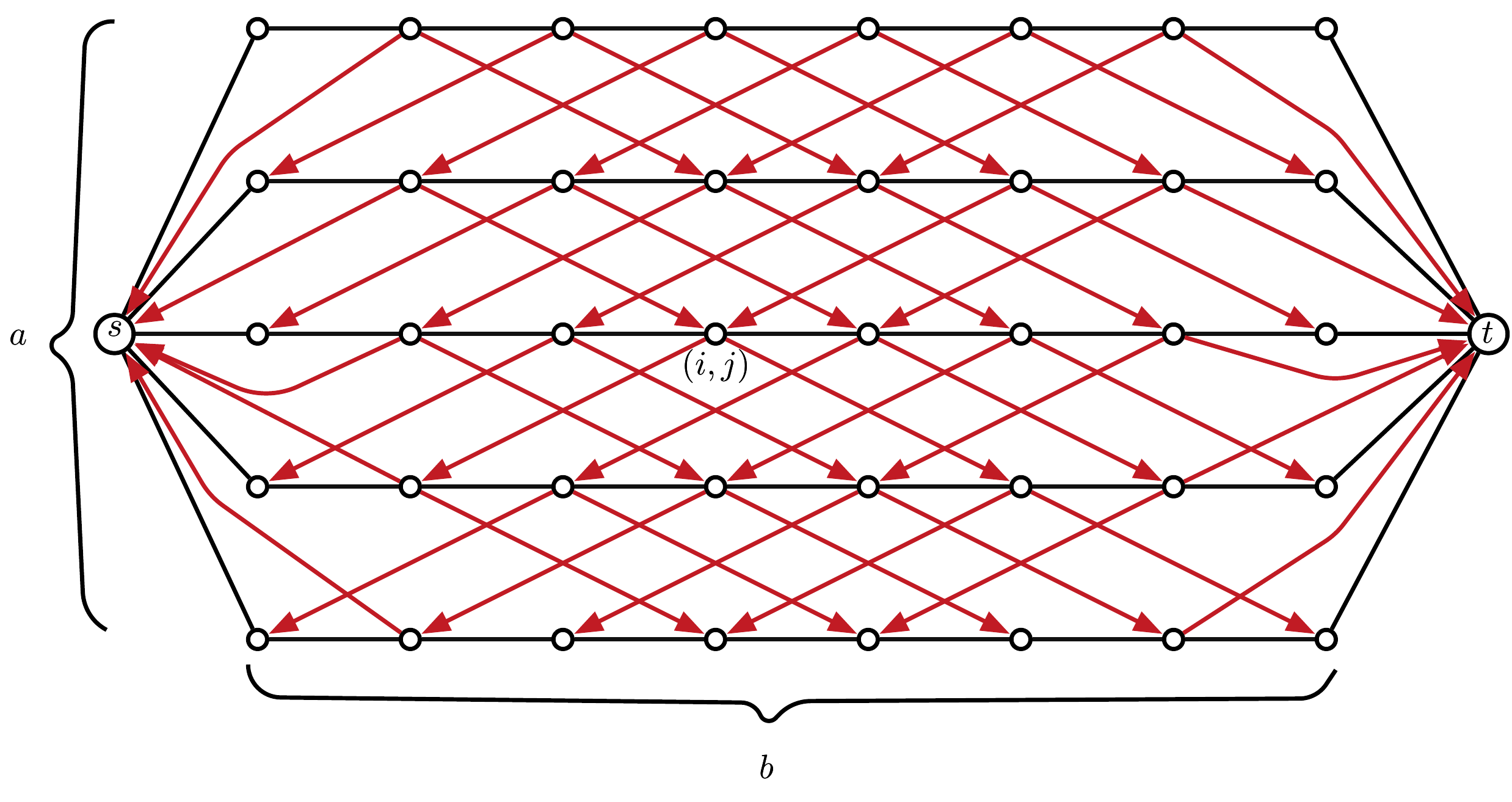}
\end{center}
\caption{$D_{a,b}$ in the proof of Lemma \ref{lem:doublecutintegralitygap} and $(2-\e)$-inapproximability of \stdoublecut.}
\label{fig:pathblockinggap}
\end{figure}

Our next lemma shows a lower bound on the integrality gap that nearly matches the approximation factor achieved by our rounding algorithm. 

\begin{lemma}
\label{lem:doublecutintegralitygap}
The integrality gap of the Path-Blocking-LP for directed graphs containing $n$ nodes is at least $2-7/n^{1/3}$.
\end{lemma}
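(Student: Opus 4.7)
The plan is to analyze the explicit digraph family $D_{a,b}$ depicted in Figure~\ref{fig:pathblockinggap}, where $a$ and $b$ are parameters that will be balanced against the total vertex count $n$ at the end. The argument proceeds in three stages.

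First I would exhibit a feasible fractional solution $d$ to \ref{lp:path-blocking-lp} by assigning a small uniform weight $d_v = 1/(2b)$ to each ``internal'' vertex on the length-$b$ paths running to $s$ and to $t$, and $d_v = 0$ to the remaining ``hub'' vertices (as well as to $s$ and $t$). I would then check the LP constraints: for every interior vertex $u$ that reaches both terminals, the shortest $u\to s$ path $P$ and shortest $u\to t$ path $Q$ each contain $b$ weighted nodes, so $\sum_{v\in P} d_v + \sum_{v\in Q} d_v - d_u = \tfrac{1}{2} + \tfrac{1}{2} - 0 = 1$, and longer paths carry only more slack; no constraint is violated elsewhere because non-interior vertices cannot reach both $s$ and $t$. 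Summing gives an LP cost of the form $\mathrm{LP}(a,b)$ linear in $a$.

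Next I would lower-bound any integral double-cut $U\subseteq V\setminus\{s,t\}$. Because each of the $a$ hubs reaches both $s$ and $t$ via vertex-disjoint length-$b$ paths, and no single internal vertex lies on both an $s$-path and a $t$-path, any valid $U$ must contain at least one vertex on every hub's $s$-path \emph{and} one vertex on its $t$-path (unless it removes the hub itself). A direct case analysis, exploiting that the $a$ hubs are arranged so their neighborhoods in the path structure do not overlap significantly, would yield $|U|\geq \mathrm{IP}(a,b)$ with $\mathrm{IP}(a,b)/\mathrm{LP}(a,b) \geq 2 - O(1/b)$.

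Finally I would compute the vertex count $n = n(a,b)$ for the construction and choose the parameters so that the relative loss $O(1/b)$ translates into the stated $7/n^{1/3}$ bound; a balanced choice such as $a \asymp b \asymp n^{1/3}$ (with the construction using roughly $ab^2$ or $a^2b$ vertices) gives the cube-root scaling, and the constant $7$ emerges from careful accounting of the additive boundary terms (vertices near $s,t$, the hubs themselves, and the slack between $1/(2b)$ and an exactly-feasible weight).

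The main obstacle is the construction itself: the paths must be laid out so that the LP can distribute its weight uniformly over $\Theta(b)$ nodes per path (keeping the fractional cost near $1$ per hub-constraint) while simultaneously preventing the integer solution from amortizing by removing a cheap cut vertex (as would happen with shared gateways like a single ``entry'' node to $s$). Once the disjointness of paths is guaranteed in Figure~\ref{fig:pathblockinggap} and the hubs are appropriately replicated, both steps become mechanical. The quantitative constant $7$ is then a routine bookkeeping task.
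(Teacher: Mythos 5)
There is a genuine gap: your description of the construction and of the resulting bounds does not match the graph $D_{a,b}$ in the paper, and the hub-and-spoke picture you are reasoning about would not actually produce an integrality gap close to $2$. In $D_{a,b}$ there are no ``hub'' vertices with $d_v=0$ plus ``spoke'' paths; the internal vertices form an $a\times b$ grid with bidirected row arcs plus the crucial \emph{jumping arcs} from $(i,j)$ to $(i+1,j\pm 2)$. Because of the jumping arcs, a $u\to s$ path and a $u\to t$ path together must traverse at least $r=b-2a+1$ internal vertices (not $2b$ of them, and $r<2b$ as soon as $a\ge 1$), so the correct uniform fractional weight is $1/r$; your proposed $d_v=1/(2b)$ is infeasible whenever a path uses jumping arcs. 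More importantly, your sketch of the integer lower bound is circular: you say $U$ must hit the $s$-path and the $t$-path of every hub ``unless it removes the hub itself,'' but in a hub-and-spoke graph removing the $a$ hubs is exactly an integer solution of cost $a$, which matches your LP value and gives gap $\approx 1$, not $2$. The actual lower bound in the paper (Lemma~\ref{lem:skeleton_st_double_cut}(2)) is a row-by-row counting argument showing $|S|\ge 2a-1$: at least one vertex per row is needed, and whenever some row $i>1$ contributes only one vertex, the jumping arcs force row $i-1$ to contribute at least three. The jumping arcs are thus doing double duty --- they shrink the fractional cost to $ab/r$ while inflating the integer cost to $2a-1$ --- and nothing in your hub construction plays that role. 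The final parameter choice is also off: $D_{a,b}$ has $n=ab+2$ vertices, and the paper takes $b=a^2$ so $a\approx n^{1/3}$, whereas your guesses of $ab^2$ or $a^2b$ vertices do not correspond to the construction.

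To repair the argument you would need to (1) work with the actual $D_{a,b}$ and prove the two structural properties of Lemma~\ref{lem:skeleton_st_double_cut}; (2) set $d_v=1/r$ with $r=b-2a+1$ and verify feasibility using property (1); (3) use property (2) to show the integer optimum is at least $2a-1$; and (4) set $b=a^2$ and express the ratio $(2a-1)r/(ab)$ in terms of $n$. As written, the proposal substitutes a simpler graph for which the claim is false, so the plan does not go through.
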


Our integrality gap instance is also helpful in understanding the hardness of approximation of \stdoublecut. So, we define the instance below and summarize its properties which will be used in the proof of Lemma \ref{lem:doublecutintegralitygap} as well as in the proof of hardness of approximation. 

For two integers $a, b \in \N$, 
consider the directed graph $D_{a, b} = (V_D, A_D)$ obtained as follows (see Figure \ref{fig:pathblockinggap}): Let $V_D := \{ s, t \} \cup ([a] \times [b])$. There are $ab + 2$ nodes. Let $I_D := [a] \times [b]$ and call them as the \emph{internal nodes}. The set of arcs $A_D$ are as follows:
\begin{enumerate}
\item For each $1 \leq i \leq a$, there is a bidirected arc between $s$ and $(i, 1)$, and a bidirected arc between $(i, b)$ and $t$. 
\item For each $1 \leq i \leq a$ and $1 \leq j < b$, there is a bidirected arc between $(i, j)$ and $(i, j + 1)$.
\item For each $1 \leq i < a$ and $2 \leq j \leq b - 1$, there is an arc from $(i, j)$ to $(i + 1, j - 2)$, and an arc from $(i, j)$ to $(i + 1, j + 2)$ (let $(i, 0) := s$ and $(i, b + 1) := t$ for every $i$).
Call them {\em jumping arcs}. 
\end{enumerate}
\begin{lemma}
$D_{a,b}$ has the following properties: 
\begin{enumerate}
\item For each internal node $\alpha = (\alpha_1, \alpha_2) \in I_D$, each $\alpha\rightarrow s$ path has at least $\alpha_2 - a$ internal nodes other than $\alpha$. Similarly, each $\alpha\rightarrow t$ path has at least $b - \alpha_2 - a + 1$ internal nodes other than $\alpha$. 

\item If $S \subseteq I_D$ is such that the subgraph induced by $V_D \setminus S$ has no node $v$ that has paths to both $s$ and $t$, then $|S| \geq 2a - 1$.
\end{enumerate}
\label{lem:skeleton_st_double_cut}
\end{lemma}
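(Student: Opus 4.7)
\textbf{Property 1.} I will track the two coordinates along a path $P$ from $\alpha=(\alpha_1,\alpha_2)$ to $s$. Classify the arcs of $P$ into two kinds: jumping arcs, which strictly increase the first coordinate by $1$ and change the second coordinate by $\pm 2$; and ``local'' arcs (bidirected in-row arcs or the final arc into $s$), which leave the first coordinate fixed and change the second by $\pm 1$. Since the first coordinate only increases along $P$ and must end at some row $\leq a$, the number $k$ of jumping arcs satisfies $k \leq a-\alpha_1 \leq a-1$. The total change in the second coordinate from $\alpha$ to $s$ is $-\alpha_2$, and each jumping arc contributes at most $2$ to $|\Delta j|$ while each local arc contributes at most $1$, giving $\alpha_2 \leq 2k+\ell$ where $\ell$ is the number of local arcs. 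Hence the total number $m=k+\ell$ of arcs of $P$ satisfies $m \geq \alpha_2 - k \geq \alpha_2-(a-1)$, so $P$ contains at least $m-1 \geq \alpha_2 - a$ internal nodes other than $\alpha$. The $\alpha\to t$ bound follows by the same argument applied to the reflection $j \mapsto b+1-j$, which exchanges $s$ and $t$ and preserves the arc structure.

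\textbf{Property 2.} My approach is induction on $a$, guided by a structural analysis of how $S$ intersects each row. For $a=1$ the graph is a single bidirected $s$-$t$ path, so any valid $S$ must contain at least one internal node, matching $2a-1=1$. For the inductive step, let $S_i = S \cap (\{i\}\times[b])$. First, if any $S_i$ is empty then row $i$ is intact and $(i,1)$ reaches both $s$ and $t$ via the row, violating the assumption; hence $|S_i|\geq 1$ for every $i$. Next, the jumping arcs $(i,2)\to s$ and $(i,b-1)\to t$ for $i<a$ play a crucial role: whenever $(i,b-1)\notin S$, Property 1 combined with this direct arc places $(i,b-1)$ in $R_t$, so $(i,b-1)$ must not reach $s$; symmetrically for $(i,2)$. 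I would show that these ``critical-node'' constraints, propagated through the jumping arcs that connect row $i$ with row $i+1$, force a second deletion in all but one of the rows.

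\textbf{Main obstacle.} The delicate step is bookkeeping in the ``middle'' columns of a row, where a node $(i,j)$ with $L_i\leq j\leq R_i$ does not reach $s$ or $t$ through its own row but may still bypass deletions by jumping to $(i+1,j\pm 2)$. One must argue that if only one ``reserve'' row has a single deletion (so the total is $2a-2$ rather than $2a-1$), then such a middle node, or else one of the critical nodes $(i,2)$ or $(i,b-1)$, inevitably reaches both $s$ and $t$ via a combination of the row path on one side and a single jumping arc on the other. The charging argument can be phrased as: sum $|S_i|+|S_{i+1}|\geq 3$ over $i=1,\dots,a-1$ (using the critical-node constraints for row $i$ together with the row-disconnection requirement), and conclude $2|S|-|S_1|-|S_a|\geq 3(a-1)$; combined with $|S_1|,|S_a|\geq 1$, this yields $|S|\geq 2a-1$. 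Working out the constraint ``$|S_i|+|S_{i+1}|\geq 3$'' rigorously from the critical-node analysis is the main technical task, but it is purely local to two consecutive rows and so reduces to a finite case analysis.
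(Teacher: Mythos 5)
Your Property~1 argument is correct and matches the paper's: you classify arcs into jumping arcs (which change the column by $2$ but strictly increase the row, so there are at most $a-1$ of them on any path) and row-level arcs (which change the column by $1$), and conclude the arc count is at least $\alpha_2 - (a-1)$.

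For Property~2, however, there is a genuine quantitative gap at the end of your charging argument. Even granting the pairwise inequality $|S_i| + |S_{i+1}| \geq 3$ for every $i \in \{1, \dots, a-1\}$ (which you leave as the main technical task), summing these $a-1$ inequalities gives $2|S| - |S_1| - |S_a| \geq 3(a-1)$, and together with $|S_1|, |S_a| \geq 1$ this yields only $|S| \geq (3a-1)/2$, which is strictly less than $2a - 1$ for every $a \geq 2$. The pairwise bound is simply too weak: it is consistent with an alternating profile such as $(2,1,2,1,\dots)$, which sums to roughly $3a/2$, not $2a$. What the paper actually establishes is the stronger one-sided implication that if $1 < i \leq a$ and $|S_i| = 1$ with the single deleted node at column $j$, then the boundary cases $j = 1$ and $j = b$ are excluded outright (the neighbor $(i,2)$ or $(i,b-1)$ would reach both $s$ and $t$), and for $1 < j < b$ each of $(i-1,j-1)$, $(i-1,j)$, $(i-1,j+1)$ reaches both $s$ and $t$ by a single jump into row $i$ followed by a walk within row $i$, forcing $|S_{i-1}| \geq 3$. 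This ``$|S_i| = 1 \Rightarrow |S_{i-1}| \geq 3$'' pairing, matching each deficit row to a distinct surplus row directly below it, is what gives $\sum_i |S_i| \geq 2a - 1$; your critical-node observations point in the right direction, but you need this asymmetric $1 \Rightarrow 3$ propagation rather than a symmetric ``$3$ per adjacent pair'' bound.
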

\begin{proof}
\begin{enumerate}
\item Jumping arcs are the only arcs that change $\alpha_2$ by $2$ while all other arcs change $\alpha_2$ by $1$. However, a path to $s$ can use at most $a - 1$ jumping arcs because they strictly increase $\alpha_1$. The first property follows from these observations.

\item Suppose that $S \subseteq I_D$ is such that the subgraph induced by $V_D \setminus S$ has no node $v$ that has paths to both $s$ and $t$. 
For $i = 1, \dots, a$, let $s_i := |S \cap \{ \{ i \} \times [b] \}|$. We note that $s_i \geq 1$ for each $i$, otherwise $s$ can reach $t$ and $t$ can reach $s$. 

Suppose $s_i = 1$ for some $1 < i \leq a$ and let $j$ be such that $S \cap \{ \{ i \} \times [b] \} = (i, j)$. If $j = 1$, then $(i, 2)\in V_D\setminus S$ and $(i,2)$ can reach both $s$ and $t$. If $j = b$, then $(i, b - 1)\in V_D\setminus S$ and $(i,b-1)$ can reach both $s$ and $t$. Therefore, we have $1 < j < b$. Then $s_{i - 1} \geq 3$ because $(i - 1, j - 1), (i - 1, j), (i - 1, j + 1)$ can reach both $s$ and $t$ using one jumping arc followed by regular arcs in the $i$th row.

Therefore, $|S| = \sum_{i = 1}^{a} s_i \geq 1 + 2(a - 1) = 2a - 1.$
\end{enumerate}
 \end{proof}

\begin{proof}[Proof of Lemma \ref{lem:doublecutintegralitygap}]
The integer optimum of Path-Blocking-LP on $D_{a,b}$ is at least $2a-1$ by the second property of Lemma~\ref{lem:skeleton_st_double_cut}. Let $r := b - 2a + 1$. We set $d_v := 1/r$ for every internal node $v$. The resulting solution is feasible to Path-Blocking-LP: Indeed, consider $\alpha = (\alpha_1, \alpha_2)$. By the first property of Lemma~\ref{lem:skeleton_st_double_cut}, any $\alpha\rightarrow s$ path and $\alpha\rightarrow t$ path have to together traverse at least $\alpha_2 - a + (b-\alpha_2 - a + 1) = r$ internal nodes. 

Setting $b=a^2$, the integrality gap is at least $(2a-1)/(a^3/r) = 2 - 1/a^3 + 4/a^2 - 5/a \ge 2 - 6/a$ for $a\ge 2$. Using the fact that $a=(|V(D_{a,b})|-2)^{1/3}$, we get the desired bound on the integrality gap.
 \end{proof}

\section{Hardness of Approximation}\label{sec:hardness-of-approx}
In this section, we prove the hardness results, 
namely Theorem~\ref{thm:st-node-double-cut-hardness} for \stdoublecut,
Theorem~\ref{thm:NodeDoubleCut-hardness} for \globaldoublecut,
Theorem~\ref{thm:node-3-cut-hardness} for \nodethreecut, 
Theorem~\ref{thm:single-terminal-bicut-hardness} for \sbicut, and 
Theorem~\ref{thm:node-bicut-hardness} for \nodebicut.
All our reductions begin from\\ $\kregvc$, where the input is an undirected $k$-regular graph, and the goal is to find the smallest subset $S$ of nodes such that every edge in the graph has at least one end-vertex in $S$. 

We use \kvc\ as an intermediate problem, where the input is an undirected $k$-partite graph $G = (V_1 \cup \dots \cup V_k, E)$ (we emphasize that the partitioning $V_1, \dots, V_k$ is  specified explicitly in the input) and the goal is to find the smallest subset $S \subset V_1 \cup \dots \cup V_k$ such that every edge in $E$ has at least one end-vertex in $S$. Our hardness results are structured as follows. 

\begin{enumerate}
\item We first show approximation-preserving (combinatorial) reductions from \kregvc\ (for $k=3$ or $4$) to the above-mentioned problems in Section \ref{sec:reduction_from_vc} (see Fig.~\ref{fig:reductions_overview}). These reductions prove all the inapproximability results under the assumption that $P\neq NP$.
\item For improved hardness of approximation results, we show that \kvc\ is hard to approximate within a factor of $2(k-1)/k-\epsilon$ for any $\epsilon>0$ assuming the Unique Games Conjecture (Section \ref{sec:hardness_vc}). Considering $k=3$ and $k=4$, this result in conjunction with the combinatorial reductions show $(4/3-\epsilon)$-inapproximability for \globaldoublecut\ and \sbicut\, and $(3/2-\epsilon)$-inapproximability for \nodebicut\ assuming the Unique Games Conjecture.
\item We further improve the hardness of approximation for \globaldoublecut\ and \stdoublecut\ by directly reducing from \ug\ via the \emph{length-control dictatorship tests} introduced in \cite{Lee16}. We obtain $(3/2-\epsilon)$-inapproximability for \globaldoublecut\ in Section \ref{sec:hardness_global_double_cut} and\\ $(2 - \epsilon)$-inapproximability for \stdoublecut\ in Section \ref{sec:hardness_st_double_cut}. 
\end{enumerate}

\begin{figure}
    \centering
    {\includegraphics[scale=0.5]{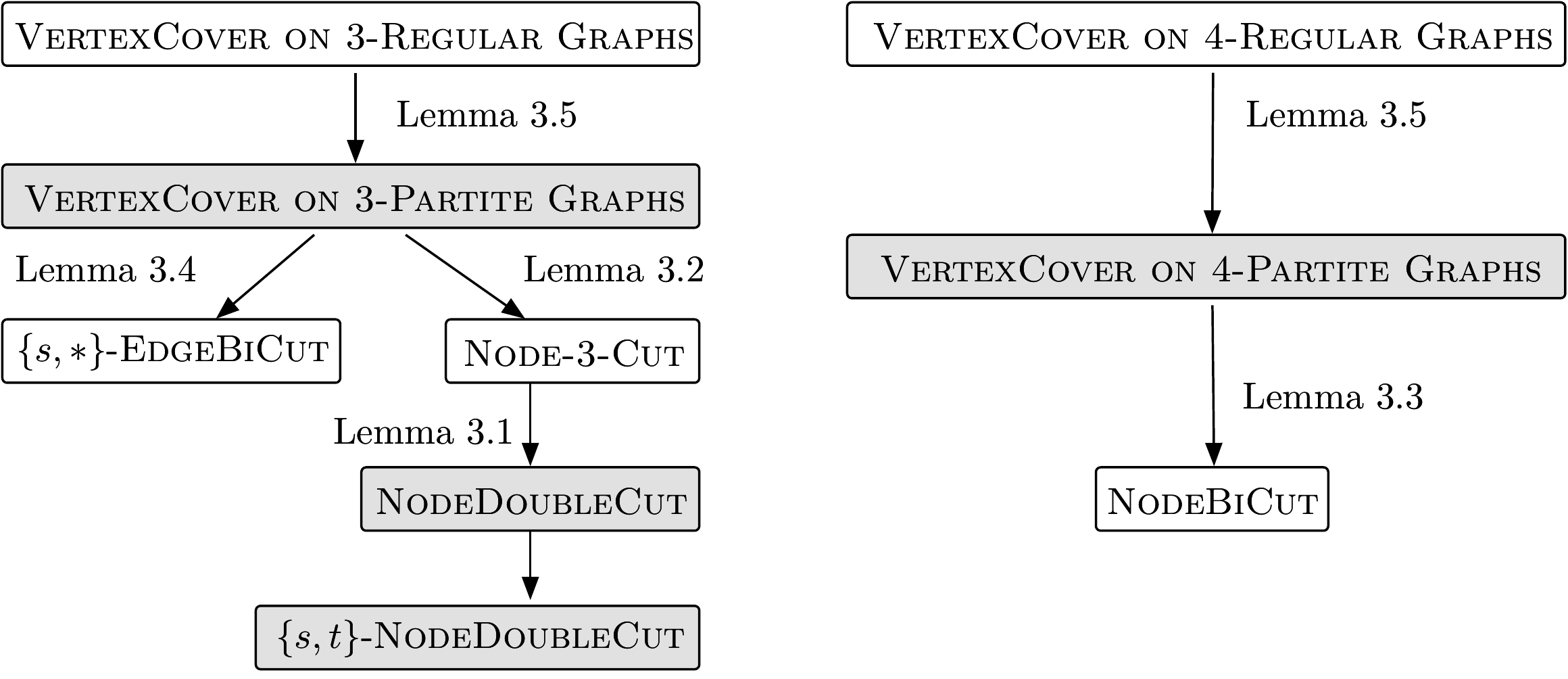} }%
    \qquad
    \caption{Approximation-preserving reductions in Section \ref{sec:reduction_from_vc}. Our NP-hardness results follow from these reductions and the hardness of \vc\ in $3$-regular and $4$-regular graphs~\cite{CC06}. For the problems in shaded cells, we show direct reductions from \ug\ to prove stronger hardness results assuming the Unique Games Conjecture. }%
    \label{fig:reductions_overview}%
\end{figure}
\todo{Remove figure from Appendix.}

We show the combinatorial reductions in Section~\ref{sec:reduction_from_vc}. 
We summarize the preliminaries on discrete Fourier analysis that we need for hardness based on $\ug$ in Section~\ref{sec:ug_prelim}. Similar to most hardness results based on the \ug, the main technical contribution of our work is the construction of {\em dictatorship tests}. In order to avoid repetition, we present the dictatorship tests for the respective problems in Sections~\ref{sec:hardness_global_double_cut},~\ref{sec:hardness_st_double_cut}, and~\ref{sec:hardness_vc}, and show the full reduction from \ug\ to all four problems in Section~\ref{sec:ug}.


\subsection{Combinatorial Reductions}
\label{sec:reduction_from_vc}
\begin{lemma}
For every $\alpha \geq 1$, an $\alpha$-approximation algorithm for \globaldoublecut\ implies an $\alpha$-approximation algorithm for \nodethreecut. 
\label{lem:node_double_cut_to_node_3_cut}
\end{lemma}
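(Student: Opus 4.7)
The plan is to give an approximation-preserving reduction from \nodethreecut on an undirected graph $G=(V,E)$ (with vertex costs $c$) to \globaldoublecut on a derived directed graph $D$. The natural first attempt is to take $D$ to be the bidirection of $G$; but by the paper's characterization of ``no arborescence'' via two disjoint non-empty ancestor-closed sets, a feasible \globaldoublecut solution on the bidirection only corresponds to a vertex deletion producing $\ge 2$ connected components of $G$—i.e., a Node-$2$-Cut, not a Node-$3$-Cut. So the bidirection must be augmented to force $\ge 3$ components in $G-U$.

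The augmentation I would try is the following. For each pair $(v_0, v_0') \in V \times V$, construct $D_{v_0,v_0'}$ by taking the bidirection of $G$ together with a fresh ``anchor'' vertex $r$ with bidirected edges to both $v_0$ and $v_0'$, and set $c(r) = c(v_0) = c(v_0') = +\infty$ so that $\{r,v_0,v_0'\} \cap U = \emptyset$ for any feasible solution $U$. An ancestor-closure analysis on $D_{v_0,v_0'} - U$ shows that the ancestor-closed set $S$ containing $r$ must also include the $G-U$-components $C_{v_0}, C_{v_0'}$ of $v_0, v_0'$, while the other ancestor-closed set $T$ must come from yet another component of $G-U$. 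Consequently, if $v_0, v_0'$ lie in different $G-U$-components then $G-U$ has at least $3$ components and $U$ is a Node-$3$-Cut of $G$; conversely, any Node-$3$-Cut $U^*$ whose components separate $v_0$ and $v_0'$ yields a feasible \globaldoublecut solution on $D_{v_0,v_0'}$ of the same cost.

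The reduction algorithm enumerates all pairs $(v_0, v_0')$, invokes the $\alpha$-approximation black-box for \globaldoublecut on each $D_{v_0,v_0'}$, checks in polynomial time whether the returned set induces $\ge 3$ components in $G-U$, and outputs the cheapest valid Node-$3$-Cut candidate. For the pair chosen from two different components of an optimum Node-$3$-Cut $U^*$, the \globaldoublecut OPT on $D_{v_0,v_0'}$ is at most $|U^*|$ (since $U^*$ itself is feasible), so the black-box returns a set of cost at most $\alpha \cdot |U^*|$. The main obstacle is to guarantee that for this distinguished pair every feasible \globaldoublecut solution is actually a Node-$3$-Cut—i.e., to rule out ``cheap'' same-component alternatives where $v_0, v_0'$ end up in the same $G-U$-component and only a Node-$2$-Cut is produced. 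I would handle this by arguing that one can always pick $(v_0, v_0')$ so that every cut of cost below the Node-$3$-Cut optimum must separate them, using the optimality of $U^*$ and the choice of vertices from its components. Combining the bound in both directions yields an $\alpha$-approximation for \nodethreecut from any $\alpha$-approximation for \globaldoublecut.
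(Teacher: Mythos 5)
The gap you flag is real, and the patch you propose does not close it. In your graph $D_{v_0,v_0'}$ the anchor $r$ is attached only to $v_0$ and $v_0'$, so the paper's characterization of ``no arborescence'' via two disjoint nonempty in-degree-zero sets forces exactly the dichotomy you describe: if $v_0, v_0'$ land in distinct components of $G-U$, the in-degree-zero set containing $r$ absorbs both of those components and a \emph{third} component of $G-U$ is needed (a valid Node-$3$-Cut); but if $v_0, v_0'$ stay together, only two components of $G-U$ are required (a Node-$2$-Cut). Hence the \globaldoublecut\ optimum on $D_{v_0,v_0'}$ equals the minimum of the Node-$3$-Cut optimum and the cheapest Node-$2$-Cut of $G$ that leaves $v_0, v_0'$ together, and the latter can be strictly smaller \emph{for every choice of pair}. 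Concretely, take $G = C_n$, the cycle on $n \geq 6$ vertices: the Node-$3$-Cut optimum is $3$, but for any pair $(v_0, v_0')$ one of the two arcs between them contains at least two vertices, and removing two vertices from that arc is a Node-$2$-Cut of cost $2$ leaving $v_0, v_0'$ together. So on every $D_{v_0,v_0'}$ the optimum is $2$, the black-box may legitimately return a Node-$2$-Cut, your validity check rejects it, and the algorithm has no valid candidate. The claim ``one can always pick $(v_0, v_0')$ so that every cut of cost below the Node-$3$-Cut optimum must separate them'' is therefore false.

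The paper's construction is engineered precisely to remove this escape hatch. It enumerates a \emph{single} anchor $s \in V$ (a vertex of $G$, not a fresh one), gives $s$ infinite weight, bidirects $G$, and---crucially---adds a \emph{directed} arc from every other vertex to $s$. After deleting any $T$, every surviving vertex can still reach $s$, so any certifying pair $(u,v)$ for the double cut must have $u,v \neq s$, and neither $u$ nor $v$ can lie in $s$'s component of $G-T$ (otherwise $s$, and hence everyone, would reach it). Thus $s$'s component, $u$'s component, and $v$'s component are three distinct components of $G-T$: feasibility of \globaldoublecut\ alone forces a Node-$3$-Cut, with no cheaper Node-$2$-Cut alternative for the oracle to fall back on. Enumerating $s$ guarantees that some iteration places $s$ inside one part of the optimal Node-$3$-Cut $S$, making $S$ feasible for that \globaldoublecut\ instance and yielding the $\alpha$-approximation. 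If you want to keep a pair-based construction, you would need an analogous ``global attraction'' mechanism (e.g., arcs from all of $V$ to $r$) so that feasibility, not just optimality, certifies three components.
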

\begin{proof}
Given an instance $G = (V, E)$ of \nodethreecut, we use the following algorithm:
\begin{enumerate}
\item For each $s \in V$, 
\begin{enumerate}
\item Construct an instance of \globaldoublecut\ $D = (V, A)$ as follows:
\begin{enumerate}
\item For each edge $\{ u, v \} \in E$, add a bidirected arc between $u$ and $v$.  
\item For each vertex $v \in V \setminus \{ s \}$, add an arc from $v$ to $s$. 
\item Set the weight of $s$ to $\infty$ and the weight of all other vertices to $1$. 
\end{enumerate}
\item Run the $\alpha$-approximation algorithm for \globaldoublecut\ on $D$ and obtain the solution $U_s$. 
\end{enumerate}
\item Output the solution $U_s$ with the smallest size. 
\end{enumerate}
Let $S \subseteq V$ be an optimal solution of \nodethreecut, 
so that $V \setminus S$ is partitioned into $V_1, V_2, V_3$ such that there are no edges between $V_i$ and $V_j$ in $G - S$. 

Consider $s \in V_1$. Let $D$ be the instance of \globaldoublecut\ generated for $s$. We first show that $S$ is a feasible solution for \globaldoublecut\ in $D$. Indeed, consider $u \in V_2$ and $v \in V_3$. In the subgraph of $D$ induced by $V \setminus S$, no vertex can reach both $u$ and $v$, since vertices in $V_i$ have outgoing arcs only to vertices in $V_i$ or to $s$, and $s$ can only reach vertices in $V_1$. 

Therefore, the $\alpha$-approximation algorithm for \globaldoublecut\ will yield $T \subseteq V \setminus \{ s \}$ such that $T \leq \alpha\cdot |S|$
and there exist two vertices $u, v \in V \setminus T$ such that no vertex in $V \setminus T$ can reach both $u$ and $v$. 
Let $U_2$ be the set of vertices strongly connected to $u$, and $U_3$ be the set of vertices strongly connected to $v$. They must be disjoint since otherwise $u$ and $v$ can reach each other. Furthermore, neither $U_2$ nor $U_3$ can contain $s$, since otherwise $v$ can reach $u$ (if $s \in U_2$) or $u$ can reach $v$ (if $s \in U_3$). 

Since every edge of $G$ becomes a bidirected arc in $D$, if some vertices $x$ and $y$ are connected in the subgraph of $G$ induced by $V \setminus T$, then they are strongly connected in the subgraph of $D$ induced by $V \setminus T$. This implies that in the subgraph of $G$ induced by $V \setminus T$, both $U_2$ and $U_3$ are disconnected from the rest of the graph. Therefore, $U_2$ and $U_3$ form a (union of) disjoint connected components, but their union is not $V \setminus T$ (since $s$ is not contained). This implies that $T$ is a feasible solution to $\nodethreecut$ in $G$, finishing the proof. 
 
\end{proof}

\begin{lemma}
There is an approximation-preserving reduction from \threevc\ to \nodethreecut.
\label{lem:node-3-cut-hardness}
\end{lemma}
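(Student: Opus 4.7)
My plan is to reduce \threevc\ to \nodethreecut\ by attaching a large ``terminal gadget'' to each color class of the $3$-partite graph, so that the three gadgets play the role of three obligatory terminals that any $3$-cut must keep apart.

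Concretely, given an instance $G=(V_1\cup V_2\cup V_3, E)$ of \threevc\ with $n:=|V|$, I would build $G'$ by introducing, for each $i\in\{1,2,3\}$, a fresh clique $K_i$ on $M:=n+1$ new vertices and adding an edge between every vertex of $K_i$ and every vertex of $V_i$. Choosing $|K_i|>n$ forces an optimal $3$-cut of $G'$ to never delete an entire $K_i$, and making $K_i$ a clique (rather than, say, an independent set or a path) ensures that any partial deletion inside $K_i$ cannot split it into extra components---this is the feature that keeps the correspondence between $3$-cuts of $G'$ and vertex covers of $G$ tight.

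The easy direction is that any vertex cover $S$ of $G$ is a $3$-cut of $G'$ of the same cardinality: after deleting $S$, the remaining graph decomposes into exactly the three components $K_i\cup(V_i\setminus S)$ for $i=1,2,3$, since $S$ destroys every cross edge. For the converse, given a $3$-cut $S$ of $G'$ I would extract a vertex cover of $G$ of size at most $|S|$. If $|S\cap K_i|=M$ for some $i$, then $|S|\geq M>n$, which already exceeds the trivial vertex cover $V_j\cup V_k$ of $G$ (for $\{j,k\}=\{1,2,3\}\setminus\{i\}$), and that trivial cover can be returned. Otherwise each $K_i\setminus S$ is a non-empty clique, hence connected, so every non-empty $V_i\setminus S$ lies in the same component of $G'-S$ as $K_i\setminus S$. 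A short case analysis---on which of the $V_i$'s are entirely contained in $S$---then shows that having at least $3$ components forces every cross edge of $G$ to have an endpoint in $S\cap V$, so $S\cap V$ itself is a vertex cover of $G$.

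The main obstacle is the converse case analysis, where one must rule out ``spurious'' $3$-cuts that exploit the gadgets rather than actually covering $E$. The clique structure of $K_i$ is exactly what prevents this: with any other connected gadget on $M$ vertices, a single well-placed deletion inside $K_i$ could fragment it and produce enough components to satisfy the $3$-cut constraint without touching $E$. Since both the construction of $G'$ and the extraction of a vertex cover from a $3$-cut run in polynomial time and preserve solution values (falling back to a trivial vertex cover only when $|S|$ already exceeds $n$), this yields an approximation-preserving reduction.
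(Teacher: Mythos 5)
Your proposal is correct and uses essentially the same idea as the paper: attach an obligatory terminal gadget to each color class so that a $3$-cut of $G'$ must separate the three gadgets, which in turn forces $S\cap V$ to cover every cross edge of $G$. The only difference is the gadget itself---the paper adds a single vertex $s_i$ of infinite weight to each $V_i$, while you add a unit-weight clique $K_i$ of size $n+1$ and then handle the case $|S|>n$ with a fallback to the trivial cover; your variant is marginally more work but has the minor advantage of staying in the uniform-weight setting rather than relying on infinite (or very large) vertex weights.
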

\begin{proof}
Given a $3$-partite graph $G = (V_1 \cup V_2 \cup V_3, E)$, we construct an instance $G'$ for \nodethreecut\ by adding three vertices $s_1, s_2, s_3$ of infinite weight (all other vertices have weight $1$), and for all $i \in [3]$ and $v \in V_i$, adding an edge between $s_i$ and $v$. Then a subset $S \subseteq V_1 \cup V_2 \cup V_3$ is a vertex cover in $G$ if and only if $G' - S$ has at least three connected components. 
\end{proof}

\begin{lemma}
There is an approximation-preserving reduction from \fourvc\ to \nodebicut.
\label{lem:node-bicut-hardness}
\end{lemma}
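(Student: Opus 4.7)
\noindent My plan is to follow the template of Lemma~\ref{lem:node-3-cut-hardness}: given a 4-partite instance $G=(V_1\cup V_2\cup V_3\cup V_4,E)$, I construct a directed graph $D$ by augmenting a bidirected copy of $G$ with a small number of infinite-weight ``hub'' vertices, and arrange the hubs so that the minimum node bicut of $D$ coincides with the minimum vertex cover of $G$. Concretely I would take $V(D)=V\cup\{s_1,s_2,s_3,s_4\}$ with each $s_i$ of infinite weight, bidirect every edge of $G$, and add a bidirected arc $s_i\leftrightarrow v$ for every $v\in V_i$. In addition, I insert a carefully chosen pattern of (bidirected or directed) arcs among the four hubs so as to encode the 4-partite structure.

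\noindent The easy direction is to show that any vertex cover $S$ of $G$ yields a feasible node bicut of $D$ of the same size. Since $G-S$ is edgeless, the only arcs surviving in $D-S$ are the hub-to-partition arcs and the hub-to-hub arcs; if the hub arcs are chosen appropriately, the hubs split into two disjoint strongly connected blocks in $D-S$, and any two vertices taken from different blocks witness a bicut pair.

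\noindent For the reverse direction, a node bicut $S$ of $D$ must avoid every hub, so $S\subseteq V$; moreover, each remaining $v\in V_i$ lies in the same SCC of $D-S$ as its hub $s_i$ (via the bidirected $s_i\leftrightarrow v$ arcs). Hence the existence of two non-mutually-reachable nodes in $D-S$ forces the hubs themselves to split into at least two SCCs of $D-S$, which in turn forces the absence of $G-S$-paths between the appropriate partition classes. By designing the hub arc structure so that \emph{any} uncovered edge of $G$ would collapse the hub SCCs back into a single SCC, $S$ must cover every edge of $G$ and is therefore a vertex cover.

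\noindent The crux — and where I expect the main obstacle — is choosing the hub-to-hub arc pattern so that the reverse direction really recovers a full vertex cover. The simplest attempt, pairing $s_1\leftrightarrow s_2$ with $s_3\leftrightarrow s_4$, only enforces a cut between $V_1\cup V_2$ and $V_3\cup V_4$, and so leaves $V_1V_2$- and $V_3V_4$-internal edges uncovered: the resulting bicut can be strictly smaller than the vertex cover. Overcoming this requires a more delicate construction, using either additional hubs or carefully oriented auxiliary arcs, so that splitting the hub SCC structure into two blocks simultaneously obstructs all three possible pairings of $\{V_1,\dots,V_4\}$ and therefore forces every edge of $G$ — including those inside each pair of partition classes — to be covered by $S$.
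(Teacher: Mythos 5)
Your proposal is incomplete at exactly the point you flag as the crux: you never actually produce a hub-arc pattern for which the soundness direction goes through, and in fact the symmetric ``one hub per class, bidirected to its class'' architecture you commit to is an obstacle to ever finding one. As you correctly observe, if the four hubs split into two blocks in $D-S$, the $s_i\leftrightarrow v$ bidirections force the corresponding $V_i$'s into those same two blocks, so the constraint you recover is only that $S$ covers the edges running between the two blocks of the partition, not the edges inside a block. Adding directed hub-to-hub arcs does not change this: to have any bicut at all, the hub set must break into at least two SCC-closed pieces, and then only cross-piece $G$-edges need covering. So the approach as stated would reduce \fourvc\ to a strictly easier problem (a best bipartition of the four classes), which is not an approximation-preserving reduction.

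The paper avoids this by breaking the symmetry that your four-hub design bakes in. It uses only two hubs $s,t$ and gives each class a different relationship to them: $V_1$ is bidirected with $s$, $V_4$ is bidirected with $t$, $V_2$ receives one-way arcs from \emph{both} $s$ and $t$, and $V_3$ sends one-way arcs to \emph{both} $s$ and $t$. In addition, each class $V_i$ is made a bidirected clique. Consequently the only possible non-mutually-reachable pair in $D-S$ must consist of one vertex near $s$ (in $V_1$) and one near $t$ (in $V_4$); a short case analysis then shows that \emph{any} uncovered edge of $G$, whichever pair of classes it joins, yields either an $s\to t$ or a $t\to s$ path, because $V_2$ and $V_3$ are wired so that they can be entered from, or exit to, both hubs. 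This is the ``more delicate construction'' you anticipated needing, but it is not a variant of the four-hub design: the arc pattern is asymmetric and some hub-to-class arcs are one-directional, which is precisely what lets a single $\{s,t\}$-pair enforce coverage of all six types of inter-class edges simultaneously. You should replace the four-hub template by this two-hub asymmetric construction (and add the intra-class cliques, which your sketch omits and which are needed to pin the witness pair to $V_1\times V_4$), and then carry out the completeness and soundness case analyses; your general outline of ``vertex cover $\Rightarrow$ bicut'' and ``bicut $\Rightarrow$ vertex cover'' is the right shape once the right gadget is in place.
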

\begin{proof}
Given a $4$-partite graph $G = (V_1 \cup V_2 \cup V_3 \cup V_4, E)$, we construct an instance $D = (V_D, A_D)$ for \nodebicut\ as follows: 
Let $V_D := V_1 \cup V_2 \cup V_3 \cup V_4 \cup \{ s, t \}$.
The set of arcs $A_D$ are obtained as follows: 
\begin{enumerate}
\item For every $u, v \in V_i$ for some $i \in [4]$, we add a bidirected arc between $u$ and $v$. 
\item For every $(u, v) \in E$, we add a bidirected arc between $u$ and $v$. 
\item For every $u \in V_1$, we add a bidirected arc between $s$ and $u$.
\item For every $u \in V_2$, we add an arc from $s$ to $u$ and an arc from $t$ to $u$. 
\item For every $u \in V_3$, we add an arc from $u$ to $s$ and an arc from $u$ to $t$. 
\item For every $u \in V_4$, we add a bidirected arc between $t$ and $u$.
\end{enumerate}

We now show the completeness of the reduction. Suppose $R \subseteq V_1 \cup V_2 \cup V_3 \cup V_4$ is a vertex cover in $G$. 
Then $D-R$ has no $s\rightarrow t$ path, since $s$ can only reach vertices in $V_1$ and $V_2$, only vertices in $V_3$ and $V_4$ can reach $t$, and there is no arc between $V_i$ and $V_j$ for any $i \neq j$. Similarly, there is no $t\rightarrow s$ path. Therefore, $R$ is a feasible solution to \nodebicut\ in $D$.

Next we show soundness of the reduction. Suppose $R \subseteq V_1 \cup V_2 \cup V_3 \cup V_4$ is a feasible solution to \nodebicut\ in $D$. There exists two vertices $u, v \in V_D \setminus R$ such that there is no $u\rightarrow v$ path and no $v\rightarrow u$ path in the subgraph of $D$ induced by $V_D \setminus R$. 
We note that $v$ and $u$ cannot be in the same $V_i$ since $V_i$ is a clique in $V_D$. We also rule out the following cases: 
\begin{enumerate}
\item If $v \in V_1, u \in V_2$, then $(v, s, u)$ is a path from $v$ to $u$, a contradiction. 
\item If $v \in V_1, u \in V_3$, then $(u, s, v)$ is a path from $u$ to $v$, a contradiction. 
\item If $v \in V_2, u \in V_3$, then $(u, s, v)$ is a path from $u$ to $v$, a contradiction.  
\item If $v \in V_2, u \in V_4$, then $(u, t, v)$ is a path from $u$ to $v$, a contradiction.  
\item If $v \in V_3, u \in V_4$, then $(v, t, u)$ is a path from $v$ to $u$, a contradiction.  
\end{enumerate}
Thus, $v \in V_1$ and $u \in V_4$. We will show that if $R$ is not a vertex cover, then there is a $v\rightarrow u$ path or $u\rightarrow v$ path, a contradiction. Suppose there exists $\{a, b\} \in E$ such that $a, b \notin R$. 
\begin{enumerate}
\item If $a \in V_1, b \in V_2$, then $(u, t, b, a, v)$ is a path from $u$ to $v$, a contradiction. 
\item If $a \in V_1, b \in V_3$, then $(v, a, b, t, u)$ is a path from $v$ to $u$, a contradiction. 
\item If $a \in V_1, b \in V_4$, then $(v, a, b, u)$ is a path from $v$ to $u$, a contradiction. 
\item If $a \in V_2, b \in V_3$, then $(v, s, a, b, t, u)$ is a path from $v$ to $u$, a contradiction. 
\item If $a \in V_2, b \in V_4$, then $(v, s, a, b, u)$ is a path from $v$ to $u$, a contradiction. 
\item If $a \in V_3, b \in V_4$, then $(u, b, a, s, v)$ is a path from $u$ to $v$, a contradiction. 
\end{enumerate}
Therefore, $R$ must be a vertex cover. This establishes the soundness of the reduction and completes the proof. 
\end{proof}

\begin{lemma}
There is an approximation-preserving reduction from \threevc\ to \sbicut.
\label{lem:single-terminal-bicut-hardness}
\end{lemma}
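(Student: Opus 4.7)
The plan is to adapt the $4$-partite-to-\nodebicut\ construction of Lemma~\ref{lem:node-bicut-hardness} by letting a single auxiliary vertex $t_0$ play the role of the fourth part, and by converting node weights to edge weights via node splitting. Given a $3$-partite graph $G = (V_1 \cup V_2 \cup V_3, E)$, which we may assume has no isolated vertex (isolated vertices are never in a minimum vertex cover), I construct $D$ whose nodes are $\{s, t_0\} \cup \{v^-, v^+ : v \in V_1 \cup V_2 \cup V_3\}$, with a unit-cost arc $v^- \to v^+$ for every $v \in V_1 \cup V_2 \cup V_3$ and the following arcs of weight~$\infty$: bidirected arcs between $s$ and each $v \in V_1$ (that is, $s \to v^-$ and $v^+ \to s$); the arcs $s \to v^-$ and $t_0 \to v^-$ for each $v \in V_2$; the arcs $v^+ \to s$ and $v^+ \to t_0$ for each $v \in V_3$; bidirected clique arcs $u^+ \to v^-$ and $v^+ \to u^-$ within each $V_i$; and, for each edge $\{u,w\} \in E$, bidirected arcs $u^+ \to w^-$ and $w^+ \to u^-$.

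For completeness, given a vertex cover $S$ of $G$, delete the $|S|$ unit arcs $\{v^- \to v^+ : v \in S\}$ and take $t = t_0$. Any hypothetical $s$-to-$t_0$ path in the remaining graph must terminate with an arc $v^+ \to t_0$ for some $v \in V_3$, and reading the path backwards through the split nodes produces a walk in $G \setminus S$ whose consecutive vertices are either $G$-adjacent or in the same part $V_i$ (the latter using a clique arc); the last inter-part step is a $G$-edge from $V_1 \cup V_2 \setminus S$ to $V_3 \setminus S$, contradicting that $S$ is a vertex cover. The symmetric argument rules out $t_0$-to-$s$ paths using the infinite arcs $t_0 \to V_2^-$ and $V_1^+ \to s$.

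For soundness, the infinite weights force any feasible bicut of finite cost $k$ to delete only unit arcs, yielding $E' = \{v^- \to v^+ : v \in R\}$ with $|R| = k$ for some $R \subseteq V_1 \cup V_2 \cup V_3$. The argument splits on the chosen terminal $t^*$. If $t^* = t_0$, then any uncovered edge $\{u,w\}$ of $G$ would produce either an $s \to t_0$ path (when one of $u,w$ is in $V_3$, via $s \to u^- \to u^+ \to w^- \to w^+ \to t_0$) or a $t_0 \to s$ path (when $\{u,w\}$ is a $V_1V_2$ edge, via $t_0 \to w^- \to w^+ \to u^- \to u^+ \to s$), so $R$ itself must be a vertex cover. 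Otherwise, the infinite arcs at $s$ restrict $t^* \in V_2^+ \cup V_3^-$, and the infinite clique arcs within each $V_i$ force the forward- and backward-reachable sets of $s$ in $D - E'$ to partition $(V_1 \cup V_2 \cup V_3) \setminus R$ into two sides with no $G$-edge crossing; the no-isolated-vertex assumption then lets one swap each ``redundant'' vertex of $R$ for an endpoint of an uncovered $G$-edge to produce a vertex cover of size at most $|R|$.

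The main technical obstacle I anticipate is the $t^* \ne t_0$ subcase of soundness; I expect the cleanest formulation to exploit that, because of the infinite clique arcs within each $V_i$, the reachability partition of $D - E'$ respects the $3$-partite structure of $G$ modulo $R$, so the required vertex cover can be read off directly from this partition without increasing the cost.
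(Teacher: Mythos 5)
Your high-level plan (node-splitting, weight-$\infty$ auxiliary arcs, a single auxiliary terminal) echoes the paper's approach, but the concrete construction fails: the soundness subcase $t^* \neq t_0$ that you flag as the ``main technical obstacle'' cannot be patched in the way you sketch, because the reduction is genuinely not approximation-preserving. Take $V_1 = \{a_1,\dots,a_n\}$, $V_2 = \{w\}$, $V_3 = \{c_1,\dots,c_n\}$ and $E = \{\{a_i,c_i\}: i \in [n]\} \cup \{\{w,a_1\}\}$: no vertex is isolated, and the minimum vertex cover has size $n$. In your $D$, delete only $w^- \to w^+$ and $a_1^- \to a_1^+$ and take $t^* = w^+$. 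By construction the only in-arc of any $v^+$ is its split arc $v^- \to v^+$, so $s$ cannot reach $w^+$; and the only out-arc of $w^+$ is $w^+ \to a_1^-$ (there are no other $V_2$-clique or edge arcs), whose unique continuation $a_1^- \to a_1^+$ has been deleted, so $w^+$ cannot reach $s$. This is a feasible \sbicut\ of cost $2$ against $\text{VC}(G)=n$, so no ``swap redundant vertices for uncovered endpoints'' step can recover a vertex cover of comparable size.

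The root cause is that in your construction the vertices $V_2^+$ carry no weight-$\infty$ arc to $s$ or $t_0$, so a $v^+$ with $v \in V_2$ becomes a cheap sink once $v$ and its few $G$-neighbours are cut. The paper's construction is not a symmetric node-split of the $4$-partite \nodebicut\ gadget; it is deliberately asymmetric. In particular it adds the arc $b_2 \to s$ for every $b$ in the middle part (the exact arc you are missing), makes the $G$-edge arcs one-directional ($A_2 \to B_1$, $A_2 \to C_1$, $B_2 \to C_1$ only), replaces your bidirected intra-part cliques with one-way ``backward'' arcs ($A_2 \to A_1$, $C_2 \to C_1$, $C_1 \to A_1$, $C_2 \to A_2$), and attaches $t$ only via $t \to A_1$ and $C_2 \to t$. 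Together these choices guarantee both that the dictator deletions still separate $s$ from $t$ (completeness) and that whenever $R$ is not a vertex cover \emph{every} vertex of the digraph -- not just $t$ -- has a path to or from $s$, so no choice of $t^*$ yields a cheap feasible cut; this is exactly the property your bidirected construction lacks.
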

\begin{proof}
Given a $3$-partite graph $G = (A \cup B \cup C, E)$, we construct an instance $D = (V_D, A_D)$ for \sbicut\  as follows: 
Let $V_D := A_1 \cup A_2 \cup B_1 \cup B_2 \cup C_1 \cup C_2 \cup \{ s, t \}$. For a vertex $v \in A \cup B \cup C$ and $i \in \{ 1, 2 \}$, let $v_i$ denote the corresponding vertex in $V$ (e.g., if $v \in A$, then $v_1 \in A_1$ and $v_2 \in A_2$). 
We introduce three types of arcs in $A_D$.
\begin{enumerate}
\item Vertex arcs: For every $v \in A \cup B \cup C$, create an arc $(v_1, v_2)$ with weight $1$.
\item Forward arcs: Create arcs with weight $\infty$ 
\begin{enumerate}
\item $(s, a_1)$ for all $a \in A$, $(s, b_1)$ for all $b \in B$, $(b_2,s)$ for all $b \in B$, $(c_2, s)$ for all $c \in C$.
\item $(t, a_1)$ for all $a \in A$, $(c_2, t)$ for all $c \in C$.
\item $(a_2, b_1)$ for every $\{a, b\} \in E$, $a\in A, b\in B$ (call them $AB$ arcs), $(a_2, c_1)$ for every $\{a, c\} \in E$, $a\in A, c\in C$ (call them $AC$ arcs), $(b_2, c_1)$ for every $\{b, c\} \in E$, $b\in B, c\in C$ (call them $BC$ arcs). 
\end{enumerate}
\item Backward arcs: Create arcs with weight $\infty$
\begin{enumerate}
\item $(v_2, u_1)$ for all $u, v \in A$ (call them $AA$ arcs), $(v_2, u_1)$ for all $u, v \in C$ (call them $CC$ arcs), $(c_1, a_1)$ for all $a \in A, c \in C$ (call them $CA_1$ arcs), $(c_2, a_2)$ for all $a \in A, c \in C$ (call them $CA_2$ arcs). 
\end{enumerate}
\end{enumerate}

We first show completeness of the reduction. Suppose $R \subseteq A \cup B \cup C$ is a vertex cover in $G$. Let $F = \{ (v_1, v_2) : v \in R \}$. 
We will show that there is no $s\rightarrow t$ path and no $t\rightarrow s$ path in $D-F$.

\begin{enumerate}
\item Suppose there is a $t\rightarrow s$ path in $D-F$. Fix the shortest such $t\rightarrow s$ path $P$. Then, the path $P$ has the following properties: 
\begin{enumerate}
\item Path $P$ does not contain $AA$ arcs or $CA_1$ arcs, since $t$ has direct arcs to vertices in $A_1$. Similarly, $P$ does not contain $CC$ arcs or $CA_2$ arcs, since vertices in $C_2$ have direct arcs to $s$. So, $P$ does not contain any backward arcs.
\item Path $P$ does not contain $BC$ arcs, since vertices in $B_2$ have direct arcs to $s$. 
\end{enumerate}
Thus, the only possibility for the path $P$ is $P = (t, a_1, a_2, v_1, v_2, s)$ for $a \in A$, $v \in B \cup C$, and $\{a, v\} \in E$. This contradicts that $R$ is a vertex cover. 

\item Suppose there is a $s\rightarrow t$ path in $D-F$. Fix the shortest such $t\rightarrow s$ path $P$. Then, the path $P$ has the following properties:
\begin{enumerate}
\item Path $P$ does not contain $AA$ arcs or $CA_1$ arcs, since $s$ has direct arcs to vertices in $A_1$. Similarly, $P$ does not contain $CC$ arcs or $CA_2$ arcs since vertices in $C_2$ have direct arcs $t$. So, $P$ does not contain any backward arcs.
\item Path $P$ does not contain $AB$ arcs, since $s$ has direct arcs to vertices in $B_1$. 
\end{enumerate}
Thus, the only possibility for the path $P$ is $P = (t, v_1, v_2, c_1, c_2, s)$ for $v \in A \cup B$, $c \in C$, and $\{v, c\} \in E$. This contradicts that $R$ is a vertex cover. 
\end{enumerate}
Therefore, $s$ and $t$ cannot reach each other in $D-F$. Consequently, the existence of a vertex cover $R$ in $G$ implies the existence of a feasible solution to \sbicut\ in $D$ of the same size. 

Next we show soundness of the reduction. 
Suppose $F \subseteq E_D$ is a feasible solution to \sbicut\ in $D$. Let $R \subseteq A \cup B \cup C$ be the set of vertices whose vertex arcs are in $F$.
We will show that if $R$ is not a vertex cover in $G$, then every vertex $v \in V_D$ has either a path to $s$ or a path from $s$. 
Since vertices in $A_1, B_1, B_2, C_2$ have a direct arc either from or to $s$, we only need to check vertices in $A_2, C_1$ and $t$. We verify these cases below: 
\begin{enumerate}
\item Suppose there exist $a \in A \setminus R, b \in B \setminus R$ such that $\{a, b\} \in E$. 
\begin{enumerate}[(i)]
\item Considering $t$, we have $(t, a_1, a_2, b_1, b_2, s)$ as a path from $t$ to $s$.
\item For every $a' \in A_2$, we have $(a', a_1, a_2, b_1, b_2, s)$ as a path from $a'$ to $s$.
\item For every $c' \in C_1$, we have $(c', a_1, a_2, b_1, b_2, s)$ as a path from $c'$ to $s$.
\end{enumerate}

\item Suppose there exist $a \in A \setminus R, c \in C \setminus R$ such that $\{a, c\} \in E$. 
\begin{enumerate}[(i)]
\item Considering $t$, we have $(t, a_1, a_2, c_1, c_2, s)$ as a path from $t$ to $s$.
\item For every $a' \in A_2$, we have $(a', a_1, a_2, c_1, c_2, s)$ as a path from $a'$ to $s$. 
\item For every $c' \in C_1$, we have $(c', a_1, a_2, c_1, c_2, s)$ as a path from $c'$ to $s$. 
\end{enumerate}

\item Suppose there exist $b \in B \setminus R, c \in C \setminus R$ such that $\{b, c\} \in E$. 
\begin{enumerate}[(i)]
\item Considering $t$, we have $(s, b_1, b_2, c_1, c_2, t)$ as a path from $s$ to $t$.
\item For every $a' \in A_2$, we have $(s, b_1, b_2, c_1, c_2, a')$ as a path from $s$ to $a'$. 
\item For every $c' \in C_1$, we have $(s, b_1, b_2, c_1, c_2, c')$ as a path from $s$ to $c'$. 
\end{enumerate}
\end{enumerate}
Therefore, the existence of a feasible solution to \sbicut\ in $D$ implies the existence of a vertex cover in $G$ of the same size. This establishes the soundness of the reduction, and proves the lemma.

\end{proof}

The following lemma proves hardness of approximation for \kvc\ for $k = 3, 4$.
We use the result of Chleb{\'\i}k and Chleb{\'\i}kov{\'a}~\cite{CC06} on the hardness of \vc\ on $3$-regular and $4$-regular graphs.

\begin{lemma}
For every $\epsilon > 0$, \threevc\ and \fourvc\ are NP-hard to approximate within a factor $100/99 - \epsilon$ and $53/52 - \epsilon$ respectively. 
\label{lem:hardness_vc}
\end{lemma}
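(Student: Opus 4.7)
The plan is to combine the hardness of \vc\ on $k$-regular graphs due to Chleb{\'\i}k and Chleb{\'\i}kov{\'a}~\cite{CC06} with a simple structural reduction to \kvc. Their result states that \vc\ on $3$-regular graphs admits no efficient $(100/99-\epsilon)$-approximation, and that \vc\ on $4$-regular graphs admits no efficient $(53/52-\epsilon)$-approximation, for every $\epsilon>0$ assuming $P\neq NP$. Hence, it suffices to exhibit an approximation-preserving reduction from $\kregvc$ to $\kvc$ for $k\in\{3,4\}$ in which both the graph and the value of the optimum are preserved exactly; any $\alpha$-approximation for the target problem will then translate directly into an $\alpha$-approximation for the source problem.

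The reduction I would use invokes Brooks' theorem: every connected graph of maximum degree $\Delta$ admits a proper $\Delta$-coloring (computable in polynomial time) unless it is isomorphic to $K_{\Delta+1}$ or, when $\Delta=2$, an odd cycle. Given a $k$-regular instance $G=(V,E)$ arising from the Chleb{\'\i}k-Chleb{\'\i}kov{\'a} reduction, I would apply Brooks' theorem component-by-component to produce a partition $V=V_1\cup\cdots\cup V_k$ into independent sets in polynomial time. The tuple $(G,V_1,\dots,V_k)$ is then a valid $\kvc$ instance. Since we have not altered the edge set of $G$, the family of vertex covers of $G$ is unchanged, so the minimum vertex cover of $(G,V_1,\dots,V_k)$ equals that of $G$.

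The only mild subtlety is the exceptional case excluded by Brooks' theorem, namely connected components isomorphic to $K_{k+1}$. These can be stripped off in a trivial preprocessing step: each such component contributes exactly $k$ to the optimum vertex cover, and deleting it from $G$ yields a smaller $k$-regular instance whose approximability is no easier than that of $G$. I do not expect any real obstacle here --- the reduction is essentially free, and its approximation-preserving nature follows from the fact that both the vertex set and the optimum are preserved identically. Combining the two ingredients gives the claimed $(100/99-\epsilon)$ hardness for \threevc\ and $(53/52-\epsilon)$ hardness for \fourvc.
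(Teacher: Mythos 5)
Your proposal is essentially the same as the paper's proof: both invoke the Chleb\'{\i}k--Chleb\'{\i}kov\'{a} hardness of \vc\ on $3$- and $4$-regular graphs and then use Brooks' theorem to produce a $k$-partition in polynomial time, making the reduction to \kvc\ exact. You are in fact slightly more careful than the paper about the exceptional case of Brooks' theorem (you correctly identify it as $K_{k+1}$ and explain how to strip such components), which only strengthens the argument.
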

\begin{proof}
Chleb{\'\i}k and Chleb{\'\i}kov{\'a}~\cite{CC06} proved that \vc\ on $3$-regular and $4$-regular graphs are NP-hard to approximate within a factor of $100/99 - \epsilon$ and $53/52 - \epsilon$ respectively for any $\epsilon > 0$. 

For any $k$-regular graph for $k \geq 3$ ($k = 3, 4$ in our case), we recall that Brooks' theorem gives an efficient algorithm to find a $k$-coloring (except when the graph is $K_k$ for which it gives a $(k+1)$-coloring), which gives a $k$-partition of the graph. Therefore, any $\alpha$-approximation algorithm for \kvc\ can be used to get an $\alpha$-approximation for \vc\ on $k$-regular graphs. The lemma follows. 

\end{proof}

\subsection{Preliminaries for Unique Games Hardness}
\label{sec:ug_prelim}

\paragraph*{Gaussian Bounds for Correlated Spaces.}
We introduce the standard tools on correlated spaces from Mossel~\cite{Mossel10}.
Given a probability space $(\Omega, \mu)$ (we always consider finite probability spaces), let $\mathcal{L}(\Omega)$ be the set of functions $\left\{ f : \Omega \rightarrow \R \right\}$ and for an interval $I \subseteq \R$, 
let $\mathcal{L}_I (\Omega)$ be the set of functions $\left\{ f : \Omega \rightarrow I \right\}$.
For two functions $f, g : \Omega \to \R$, we define 
\begin{align*}
\E[f] &:= \E_{x} [f(x)], \\
\Var[f] &:= \E_{x} [(f(x) - \E[f])^2 ],  \\
\Cov[f,g] &:= \E_{x} [(f(x) - \E[f])(g(x) - \E[g])],
\end{align*}
where $x$ is sampled from the probability space $(\Omega, \mu)$. 

 For a subset $S \subseteq \Omega$, we define the {\em measure} of $S$ to be $\mu(S) := \sum_{\omega \in S} \mu(\omega)$. 
A collection of probability spaces are said to be correlated if there is a joint probability distribution on them. We will denote $k$ correlated spaces $\Omega_1,\ldots,\Omega_k$ with a joint distribution $\mu$ as $(\Omega_1 \times \cdots \times \Omega_k , \mu)$. 

Given two correlated spaces $(\Omega_1 \times \Omega_2, \mu)$, we define the correlation between $\Omega_1$ and $\Omega_2$ by 
\[
\rho(\Omega_1, \Omega_2; \mu) := \sup \left\{ \Cov[f, g] : f \in \mathcal{L}(\Omega_1), g \in \mathcal{L}(\Omega_2), \Var[f] = \Var[g] = 1 \right\}.
\]
Given a probability space $(\Omega, \mu)$ and a function $f \in \mathcal{L}(\Omega)$ and $p \in \R^+$, let $\|f\|_{p} := \E_{x \sim \mu}[|f(x)|^p]^{1/p}$. 

Given $(\Omega, \mu)$, let 
$(\Omega^R, \mu^{\otimes R})$ be the {\em product space} where 
for $(x_1, \dots, x_R) \in \Omega^R$, 
$\mu^{\otimes R}(x_1, \dots, x_R) = \prod_{i=1}^R \mu(x_i)$. 
Consider $f \in \mathcal{L}(\Omega^R)$. 
The {\em Efron-Stein decomposition} of $f$ is given by
\[
f(x_1, \dots , x_R) = \sum_{S \subseteq [R]} f_S(x_S)
\]
where (1) $f_S : \Omega^R \to \R$ depends only on $\{ x_i \}_{i \in S}$ and (2) 
for all $S \not \subseteq S'$ and all $x_{S'}$, $
\E_{x' \sim \mu^{\otimes R}} [f_S(x') | x'_{S'} = x_{S'}] = 0$,
where $x_S \in \R^S$ denotes the restriction of $x$ to the coordinates in $S$. 
The {\em influence} of the $i$th coordinate on $f$ is defined by 

\[
\quad \Inf_i[f] := \E_{x_1, \dots , x_{i - 1}, x_{i + 1}, \dots , x_R} [\Var_{x_i} [f(x_1, \dots , x_R)]].
\]
The influence has a convenient expression in terms of the Efron-Stein decomposition. 
\[
\Inf_i[f]  = \| \sum_{S : i \in S} f_S \|_2^2 = 
\sum_{S : i \in S} \| f_S \|_2^2.
\]
We also define the {\em low-degree influence} of the $i$th coordinate.
\[
\mathsf{Inf}_i^{\leq d}[f] := 
\sum_{S : i \in S, |S| \leq d} \| f_S \|_2^2.
\]
For $a, b \in [0, 1]$ and $\rho \in (0, 1)$, let 
\[
\Gamma_{\rho}(a, b) := \Pr[X \leq \Phi^{-1}(a), Y \geq \Phi^{-1}(1 - b)],
\]
where $X$ and $Y$ are $\rho$-correlated standard Gaussian variables and $\Phi$ denotes the cumulative distribution function of a standard Gaussian. 
The following theorem bounds the product of two functions that do not share an influential coordinate in terms of their Gaussian counterparts. 

\begin{theorem}[Theorem 6.3 and Lemma 6.6 of~\cite{Mossel10}]
Let $(\Omega_1 \times \Omega_2, \mu)$ be correlated spaces such that
the minimum nonzero probability of any atom in $\Omega_1 \times \Omega_2$ is at least $\alpha$ and such
that $\rho(\Omega_1, \Omega_2; \mu) \leq \rho$. 
Then for every $\epsilon > 0$ there exist $\tau, d$ depending on $\epsilon$ and $\alpha$ such that if $f : \Omega_1^R \rightarrow [0, 1], g : \Omega_2^R \rightarrow [0, 1]$ satisfy
$
\min(\Inf_i^{\leq d}[f], \Inf_i^{\leq d}[g]) \leq \tau
$
for all $i$, then
$
\E_{(x, y) \in \mu^{\otimes R}} [f(x)g(y)] \geq \Gamma_{\rho}(\E_x[f], \E_y [g]) - \epsilon.
$
\label{thm:mossel}
\end{theorem}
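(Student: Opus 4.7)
The plan is to follow Mossel's invariance-principle approach combined with Borell's isoperimetric inequality for Gaussian correlated spaces; at a high level the proof reduces the discrete product space $(\Omega_1 \times \Omega_2, \mu)^{\otimes R}$ to a correlated Gaussian space where the desired bound is classical.

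First I would set up the algebraic stage for invariance. Pick orthonormal bases of $\mathcal{L}(\Omega_1)$ and $\mathcal{L}(\Omega_2)$ with respect to $\mu$ that diagonalize the joint second moments; in these bases, the pairwise correlations between basis vectors on the two sides are all bounded in absolute value by $\rho(\Omega_1,\Omega_2;\mu) \le \rho$. Expand $f$ and $g$ as multilinear polynomials $P$ and $Q$ in these basis variables. Apply the noise operator $T_{1-\gamma}$ with a tiny $\gamma$ to both $f$ and $g$; by hypercontractivity this damps the high-degree Efron-Stein components and makes each polynomial essentially degree $d$, while changing $\E[fg]$ by at most $\epsilon/3$.

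Second I would invoke the vector-valued invariance principle (Mossel-O'Donnell-Oleszkiewicz). Because the hypothesis $\min(\Inf_i^{\leq d}[f], \Inf_i^{\leq d}[g]) \le \tau$ holds for every coordinate $i$, one can replace the discrete basis variables by jointly Gaussian variables having the same pairwise correlations without moving the joint distribution of $(P,Q)$ by more than $\epsilon/3$ in a Lipschitz test, provided $d$ is large and $\tau$ is small in terms of $\epsilon$ and the atom bound $\alpha$. Round the resulting Gaussian polynomials $\tilde P,\tilde Q$ back to $[0,1]$ by truncation; the $L^2$-closeness of $\tilde P,\tilde Q$ to $[0,1]$-valued functions (which follows from the original boundedness and the invariance transfer) makes this rounding cost at most another $\epsilon/3$. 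In the Gaussian world we now have $[0,1]$-valued $F,G$ on $\R^R$ with $\E[F]$, $\E[G]$ within $o(1)$ of $\E[f]$, $\E[g]$, evaluated at $\rho'$-correlated Gaussian vectors with $|\rho'| \le \rho$. Borell's inequality says $\E[F(X)G(Y)]$ is minimized, among $[0,1]$-valued functions of prescribed mean on $\rho$-correlated Gaussians, by parallel halfspaces, and those halfspaces attain exactly $\Gamma_\rho(\E[F], \E[G])$ by definition. Chaining the three $\epsilon/3$ losses gives the claimed bound.

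The main obstacle is the quantitative bookkeeping behind the invariance step: one must choose $d$ large enough and $\tau$ small enough, as explicit functions of $\epsilon$ and $\alpha$, so that (i) the hypercontractive smoothing error, (ii) the invariance error between the discrete and Gaussian processes, and (iii) the truncation error after passing to Gaussians, are each under $\epsilon/3$, while simultaneously ensuring that the per-coordinate Gaussian correlations remain dominated by $\rho$. This careful choice is exactly what Lemma 6.6 of Mossel packages, and it is the technical heart of the argument; once it is in place, Borell's inequality finishes the proof with no additional work.
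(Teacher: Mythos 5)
The paper does not prove this statement: it is imported verbatim as Theorem 6.3 and Lemma 6.6 of Mossel~\cite{Mossel10} and used as a black-box tool in the soundness analyses of the dictatorship tests. There is therefore no in-paper proof to compare your proposal against.

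Your sketch is nonetheless a fair high-level account of how Mossel proves the result, and it captures the right plan: orthogonalize so that all cross-correlations between basis functions are bounded by $\rho$, smooth with a noise operator so the multilinear expansions are effectively low-degree, transfer to a correlated Gaussian ensemble via the invariance principle, truncate back to $[0,1]$, and close with Borell's isoperimetric inequality, whose extremal pair of parallel halfspaces evaluates to exactly $\Gamma_\rho$. Two places where your outline glosses over real work in Mossel's argument: first, the hypothesis here is only that $\min(\Inf_i^{\le d}[f], \Inf_i^{\le d}[g]) \le \tau$ for each $i$, so a given coordinate may have large influence on one of the two functions; this asymmetric condition requires a more careful version of the invariance step than simply applying the usual low-influence invariance principle to each function separately. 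Second, after diagonalization the different basis pairs generally carry different correlation coefficients, all of magnitude at most $\rho$, and one needs the monotonicity of $\Gamma_\rho$ in the correlation parameter together with a coupling argument to reduce to the single-parameter Gaussian setting that Borell's inequality addresses. Neither issue invalidates your plan, but both are exactly the quantitative content packaged into Lemma 6.6 of~\cite{Mossel10} that a self-contained proof would have to unfold.
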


\subsection{$(3/2 - \epsilon)$-Inapproximability for \globaldoublecut}
\label{sec:hardness_global_double_cut}

Consider the directed graph $D = (V_D, A_D)$ (see Figure \ref{fig:nodedoublecut}) defined by 
\begin{align*}
V_D &:= \{ s, t, a, b, c, d \}, \\
A_D &:= \{ (a, s), (s, a), (s, c), (c, a), (a, b), (b, c), (c, b), (d, c), (b, d), (d, t), (t, d), (t, b) \}.
\end{align*} 
Let $I_D := \{ a, b, c, d \}$ be the set of internal vertices. 

\begin{figure}
\begin{center}
\includegraphics[scale=0.5]{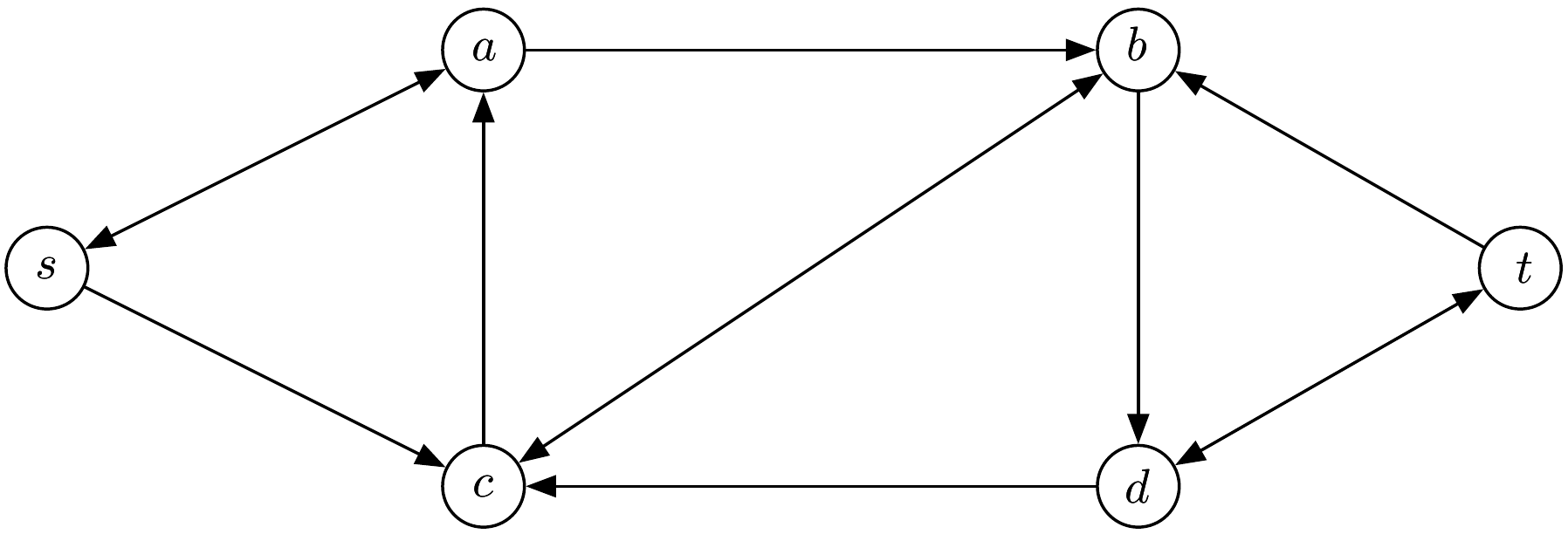}
\end{center} 
\caption{$D$ in the proof of $(3/2-\e)$-inapproximability of \globaldoublecut.}
\label{fig:nodedoublecut}
\end{figure}

We summarize the properties of $D$ that can be verified easily. 
\begin{proposition}
$D$ has the following three properties. 
\begin{enumerate}[(i)]
\item For any vertex $v \in V$, there exists a vertex $u \in \{s, t \}$ such that every $v\rightarrow u$ path has at least three internal vertices. 
\item Every $v \in I_D$ has an incoming arc from either $s$ or $t$. 
\item Even after deleting one vertex from $I_D$, there exists an $s\rightarrow t$ path or a $t\rightarrow s$ path with exactly three remaining internal vertices. 
\end{enumerate}
\label{claim:skeleton_global_double_cut}
\end{proposition}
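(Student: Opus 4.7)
The statement is a routine finite check on a graph with only six vertices and twelve arcs, so my plan is simply to enumerate what needs to be verified and indicate which path/arc certifies each item; the ``main obstacle'' is really just bookkeeping.

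For property (ii), I will list for each $v\in I_D=\{a,b,c,d\}$ an explicit incoming arc from $\{s,t\}$: namely $(s,a)$, $(t,b)$, $(s,c)$, and $(t,d)$ are all in $A_D$, which disposes of (ii) in one line.

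For property (i), I plan to analyze, for each $v\in V_D$, the shortest paths to $s$ and to $t$ (counted by number of internal vertices, i.e.\ vertices in $I_D$ used by the path) and then choose $u\in\{s,t\}$ to be the endpoint whose shortest-path distance is at least three. The key structural observation that simplifies the case analysis is that the only arc entering $s$ is $(a,s)$ and the only arc entering $t$ is $(d,t)$, so every path to $s$ ends in $a\to s$ and every path to $t$ ends in $d\to t$; similarly, the only arc into $d$ from an internal vertex is $(b,d)$, so reaching $t$ (from anywhere other than $t$ itself) forces the suffix $\cdots b\to d\to t$. With these constraints, I can just read off: for $v\in\{s,a\}$ take $u=t$ (shortest paths $s\to a\to b\to d\to t$ and $a\to b\to d\to t$ each use three internal vertices); for $v\in\{t,b,d\}$ take $u=s$ (shortest paths $t\to d\to c\to a\to s$, $b\to c\to a\to s$, $d\to c\to a\to s$ each use three internal vertices); and for $v=c$ take $u=t$ (shortest path $c\to b\to d\to t$ uses three internal vertices, since $c$ has no arc into $\{d,t\}$).

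For property (iii), I will simply exhibit, for each choice of deleted vertex $x\in I_D$, an $s\to t$ or $t\to s$ path in $D-x$ that uses exactly three of the remaining internal vertices:
\begin{itemize}
\item Delete $a$: the path $s\to c\to b\to d\to t$ works (internal vertices $c,b,d$).
\item Delete $b$: the path $t\to d\to c\to a\to s$ works (internal vertices $d,c,a$).
\item Delete $c$: the path $s\to a\to b\to d\to t$ works (internal vertices $a,b,d$).
\item Delete $d$: the path $t\to b\to c\to a\to s$ works (internal vertices $b,c,a$).
\end{itemize}
Each such path is of length four and traverses three internal vertices, which establishes (iii). Since the whole proof is a finite check, I would present it in this compact enumerative form rather than by a general argument.
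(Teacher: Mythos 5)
Your proof is correct and is exactly the finite verification the paper has in mind when it says the properties ``can be verified easily''; each arc you cite is in $A_D$, the shortest-path counts are right (recalling that the starting internal vertex $v$ itself counts), and the four paths exhibited for (iii) avoid the deleted vertex. One minor imprecision: the parenthetical ``(from anywhere other than $t$ itself)'' should also exclude $d$, since $d\to t$ alone reaches $t$ without the $b\to d$ suffix; this does not affect your argument because for $v=d$ you correctly switch to $u=s$.
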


Based on $D$, we define the {\em dictatorship test} graph $\calD^{\sfgd}_{R, \epsilon} = (V, A)$ as follows, for a positive integer $R$ and $\epsilon > 0$. 
It will be used to show hardness results under the Unique Games Conjecture in Section~\ref{sec:ug}. 
Let $r = 3$. Consider the probability space $(\Omega, \mu)$ where $\Omega := \{ 0, \dots, r-1, * \}$, and $\mu : \Omega \rightarrow [0, 1]$ with $\mu(*) = \epsilon$ and $\mu(x) = (1 - \epsilon)/r$ for $x \neq *$. 

\begin{enumerate}
\item We take $V := \{ s, t \} \cup \{ v^{\alpha}_x \}_{\alpha \in I_D, x \in \Omega^R}$. Let $v^{\alpha}$ denote the set of vertices $\{ v^{\alpha}_x \}_{x \in \Omega^R}$. 

\item For $\alpha \in I_D$ 
and $x \in \Omega^R$, we define the weight as $\w(v^{\alpha}_x) := \mu^{\otimes R}(x)$. We note that the sum of weights is $4$. The terminals $s$ and $t$ have infinite weight. 

\item There are arcs from $s$ to all vertices in $v^{c}$, from $v^a$ to $s$, $s$ to $v^a$, from $v^d$ to $t$, from $t$ to $v^d$, from $t$ to $v^b$. 

\item For each $(\alpha, \beta) \in \{ (c, a), (a, b), (b, c), (c, b), (d, c), (b, d) \}$ and $x, y \in \Omega^R$, we have an arc from $v^{\alpha}_x$ to $v^{\beta}_y$ if
there exists $1 \leq j \leq R$ such that $y_j = (x_j + 1)\mod r$ or $y_j = *$ or $x_j = *$. 
\end{enumerate}

\paragraph*{Completeness.} 
We first prove that removing a set of vertices that correspond to {\em dictators} behaves the same as the fractional solution that gives $1/r$ to every internal vertex. For any $q \in [R]$, let $V_q := \{ v^{\alpha}_x : \alpha \in I_D, x_q = * \mbox{ or } 0 \}$. We note that the total weight of $V_q$ is $4 (\epsilon + (1 - \epsilon)/r) = 4(1 + 2\epsilon)/3$. 

\begin{lemma}
After removing vertices in $V_q$, no vertex in $V$ can reach both $s$ and $t$. 
\label{lem:completeness_global_double_cut}
\end{lemma}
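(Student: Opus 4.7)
The plan is to track the $q$-th coordinate along any directed walk in the surviving subgraph and combine a sharp length bound with Property~(i) of Proposition~\ref{claim:skeleton_global_double_cut}.

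\emph{Step 1 (coordinate constraint on arcs).} Every remaining internal vertex $v^{\alpha}_x$ satisfies $x_q \in \{1,2\}$, since the values $\{0,*\}$ have been removed at coordinate $q$ and $r=3$. For any surviving internal arc $v^{\alpha}_x \to v^{\beta}_y$, I examine the arc-defining condition at coordinate $j=q$: because neither $x_q$ nor $y_q$ equals $*$, the only way to satisfy the condition at $j=q$ is $y_q = (x_q+1)\bmod 3$, which within $\{1,2\}$ forces $(x_q,y_q)=(1,2)$. Thus every surviving internal arc advances the $q$-th coordinate from $1$ to $2$.

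\emph{Step 2 (walks use at most two internal vertices).} For any directed walk $v^{\alpha_1}_{x^{(1)}} \to \cdots \to v^{\alpha_k}_{x^{(k)}}$ lying entirely in the surviving internal vertices, Step~1 gives $x^{(i)}_q = 1$ and $x^{(i+1)}_q = 2$ for every $i$. A third internal vertex would require $x^{(2)}_q = 1$, contradicting $x^{(2)}_q = 2$, so $k \le 2$. Since every walk reaching $s$ must terminate with an arc $v^{a}_{\cdot}\to s$ and every walk reaching $t$ must terminate with $v^{d}_{\cdot}\to t$, any $u\to s$ or $u\to t$ walk visits at most two internal vertices in total.

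\emph{Step 3 (skeleton projection and contradiction).} Define the natural projection $\pi : v^{\alpha}_x \mapsto \alpha$ that fixes $s$ and $t$. Arcs in $\calD^{\sfgd}_{R,\epsilon}$ project to arcs in $D$, so a $u\to s$ (resp.\ $u\to t$) walk in the surviving subgraph projects to a $\pi(u)\to s$ (resp.\ $\pi(u)\to t$) walk in $D$ using the same number of internal-vertex visits. Suppose for contradiction that some vertex $u$ reaches both $s$ and $t$ after the deletion of $V_q$. Applying Property~(i) of Proposition~\ref{claim:skeleton_global_double_cut} at $v=\pi(u)$ yields some $u'\in\{s,t\}$ for which every $\pi(u)\to u'$ path in $D$ contains at least three internal vertices, contradicting the two-internal-vertex bound of Step~2. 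Hence no such $u$ exists.

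The main obstacle is the coordinate-wise analysis in Step~1: pinning down that the arc condition restricted to the single coordinate $q$ is the effective bottleneck, so that removing $V_q$ forces the step $(1,2)$ along every surviving internal arc. Once this is established, the length bound and the skeleton-projection argument follow routinely from the structural properties of $D$.
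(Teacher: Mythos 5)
Your proof is correct and takes essentially the same approach as the paper's: it pins down the arc condition at coordinate $q$ to deduce that surviving internal arcs step $x_q$ from $1$ to $2$, bounds any run of consecutive surviving internal vertices to length $2$, and then projects to the skeleton $D$ to contradict Property~(i) of Proposition~\ref{claim:skeleton_global_double_cut}. The paper organizes the same ideas as a short case analysis (internal starting vertex, then $s\to t$ paths, then $t\to s$ paths), whereas you fold the terminal cases into the single projection argument by applying Property~(i) at $v=\pi(u)$; the one small imprecision in your Step~2 --- the assertion that any $u\to s$ or $u\to t$ walk visits at most two internal vertices \emph{in total} --- tacitly assumes the shortest such path does not pass through the other terminal, which would create a second run of internal vertices. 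This is a gap the paper shares (it writes its path as $(v^{\alpha_0}_{x_0},\dots,v^{\alpha_k}_{x_k},u)$ without comment), and is easily closed by first ruling out $s\to t$ and $t\to s$ reachability with the same coordinate argument, after which shortest paths to a terminal cannot detour through the other.
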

\begin{proof}
Suppose towards contradiction that there exists a vertex that can reach both $s$ and $t$. First, assume that this vertex is $v^{\alpha_0}_{x_0}$ for some $\alpha_0 \in I_D$ and $x_0 \in \Omega^R$. By Property (i) of Proposition~\ref{claim:skeleton_global_double_cut}, there exists $u \in \{ s, t \}$ such that every $\alpha_0\rightarrow u$ path has at least three internal vertices in $D$. Let $(v^{\alpha_0}_{x_0}, v^{\alpha_1}_{x_1}, \dots, v^{\alpha_k}_{x_k}, u)$ be a path from $v^{\alpha}_{x}$ to $u$ in $\calD^{\sfgd}_{R, \epsilon} - V_q$. Note that $k \geq 2$. 

Consider the sequence $((x_0)_q, (x_1)_q, \dots, (x_k)_q)$. 
Recall that $v^{\alpha}_x$ has an arc to $v^{\beta}_y$ for some $\alpha, \beta, x, y$ only if 
$y_q = (x_q + 1) \mod r$ or $y_q = *$ or $x_q = *$. 
Since we removed $V_q$, $(x_i)_q \notin \{ 0, * \}$, $(x_i)_q = (x_{i - 1})_q + 1$. This forces $k \leq 1$, leading to contradiction. 

Finally, assume that $s$ can reach $t$, and
let $(s, v^{\alpha_0}_{x_0}, v^{\alpha_1}_{x_1}, \dots, v^{\alpha_k}_{x_k}, t)$ be a $s\rightarrow t$ path for some $\alpha_i \in I_D$, $x_i \in \Omega^R$. Every $s\rightarrow t$ path in $D$ has to have at least three internal vertices, which forces $k \geq 2$, but considering the sequence $((x_0)_q, (x_1)_q, \dots, (x_k)_q)$ forces $k \leq 1$, which leads to contradiction. Paths from $t$ to $s$ can be ruled out in the same way.
 \end{proof}

\paragraph*{Soundness.}
Suppose that we removed some vertices $C$ such that there exist two vertices $u, v \in V \setminus C$ where no vertex $w \in V \setminus C$ can reach both $u$ and $v$. 
This implies that no vertex $w \in V \setminus C$ can reach both $s$ and $t$, since both $u$ and $v$ have an incoming arc from either $s$ or $t$. Therefore, it suffices to show that unless $C$ reveals an influential coordinate or $\w(C) \geq 2(1 - \epsilon)$, either $s$ can reach $t$ or $t$ can reach $s$. 

To analyze soundness, we define a correlated probability space $(\Omega_1 \times \Omega_2, \nu)$ where both $\Omega_1, \Omega_2$ are copies of $\Omega = \{ 0, \dots, r - 1, * \}$. It is defined by the following process to sample $(x, y) \in \Omega^2$.
\begin{enumerate}
\item Sample $x \in \{0, \dots, r - 1 \}$. Let $y = (x + 1) \mod r$. 
\item Change $x$ to $*$ with probability $\epsilon$. Do the same for $y$ independently. 
\end{enumerate}
We note that the marginal distribution of both $x$ and $y$ is equal to $\mu$. Assuming $\epsilon < 1/2r$, the minimum probability of any atom in $\Omega_1 \times \Omega_2$ is $\epsilon^2$. We use the following lemma to bound the correlation $\rho(\Omega_1, \Omega_2; \nu)$. 

\begin{lemma}[Lemma 2.9 of~\cite{Mossel10}]
Let $(\Omega_1 \times \Omega_2, \mu)$ be two correlated spaces such that the probability of the smallest atom in $\Omega_1 \times \Omega_2$ is at least $\alpha > 0$. Define a bipartite graph $G = (\Omega_1 \cup \Omega_2, E)$ where $a\in \Omega_1, b \in \Omega_2 $ satisfies $\{ a, b \} \in E$ 
if $\mu(a, b) > 0$. If $G$ is connected, then
$
\rho(\Omega_1, \Omega_2 ; \mu) \leq 1 - \alpha^2 / 2.
$
\label{lem:mossel_connected}
\end{lemma}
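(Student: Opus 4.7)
The plan is to reformulate $\rho(\Omega_1,\Omega_2;\mu)$ in terms of a squared-difference functional and then establish a path-based Poincar\'e inequality on the bipartite graph $G$. The minimum atom probability $\alpha$ plays a dual role: it lower-bounds the edge weights in the Dirichlet form, and, via a counting argument, it upper-bounds the diameter of $G$.

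The starting point is the identity $\E[f(X)g(Y)]=1-\tfrac12\E[(f(X)-g(Y))^2]$, valid whenever $\E[f]=\E[g]=0$ and $\Var[f]=\Var[g]=1$. Hence $\rho=1-\tfrac12\inf_{f,g}\E[(f-g)^2]$, and it suffices to show $\E[(f-g)^2]\ge\alpha^2$ for all admissible $f,g$. Before executing this, I would restrict $\Omega_1,\Omega_2$ to the supports of their marginals: any atom $a$ in the support of $\mu_1$ has a neighbor $b$ with $\mu(a,b)\ge\alpha$, so $\mu_1(a)\ge\alpha$, and likewise for $\mu_2$. Consequently $|V(G)|\le 2/\alpha$, and since $G$ is connected its diameter satisfies $D\le 2/\alpha$.

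Let $h\in\mathcal{L}(V(G))$ agree with $f$ on $\Omega_1$ and with $g$ on $\Omega_2$, and set $\mathcal{E}(h):=\sum_{\{a,b\}\in E}(h(a)-h(b))^2$. Using $\mu(a,b)\ge\alpha$ on every edge gives $\E[(f-g)^2]\ge\alpha\cdot\mathcal{E}(h)$, so it remains to prove $\mathcal{E}(h)\ge\alpha$. For each pair $a,a'\in\Omega_1$ I would fix a shortest $G$-path $P(a,a')$ of length at most $D$. Cauchy--Schwarz along the path yields $(f(a)-f(a'))^2\le D\sum_{e\in P(a,a')}\delta_e^2$, where $\delta_e:=h(u)-h(v)$ for $e=\{u,v\}$. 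Multiplying by $\mu_1(a)\mu_1(a')$, summing over all pairs, and swapping the order of summation gives
\[
2\Var[f]\ =\ \sum_{a,a'}\mu_1(a)\mu_1(a')(f(a)-f(a'))^2\ \le\ D\sum_{e\in E}\delta_e^2\cdot\Bigl(\sum_{a,a':\,e\in P(a,a')}\mu_1(a)\mu_1(a')\Bigr)\ \le\ D\cdot\mathcal{E}(h),
\]
using the trivial congestion bound $\sum_{a,a':\,e\in P(a,a')}\mu_1(a)\mu_1(a')\le 1$. Plugging in $\Var[f]=1$ and $D\le 2/\alpha$ gives $\mathcal{E}(h)\ge 2/D\ge\alpha$, hence $\E[(f-g)^2]\ge\alpha\cdot\alpha=\alpha^2$ and $\rho\le 1-\alpha^2/2$.

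The step I expect to be most delicate is the Poincar\'e inequality, specifically the choice of canonical paths so that the congestion sum is absorbed by the trivial bound $\sum_{a,a'}\mu_1(a)\mu_1(a')=1$. The rest of the argument---the algebraic identity relating $\rho$ to $\E[(f-g)^2]$, the edge-weight lower bound by $\alpha$, and the diameter bound by counting atoms---is routine, but all three ingredients are necessary to produce the quadratic dependence on $\alpha$ in the final estimate; a naive bound using only edge weights or only diameter would be linear in $\alpha$ at best.
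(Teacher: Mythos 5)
The paper does not prove this lemma; it is quoted verbatim from Mossel (Lemma~2.9 of \cite{Mossel10}) and used as a black box, so there is no in-paper proof to compare against. Your argument, however, is a correct and self-contained derivation of the stated bound.

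Each of your steps checks out. The reduction $\rho = 1 - \tfrac12\inf\E[(f-g)^2]$ (after centering) is exactly the definitional identity $\E[(f-g)^2]=2-2\E[fg]$ under $\E f=\E g=0$, $\Var f=\Var g=1$. Restricting to the supports is harmless because a zero-measure atom would be isolated in $G$, contradicting connectedness, and this gives $\mu_1(a),\mu_2(b)\ge\alpha$ on the remaining vertices, whence $|V(G)|\le 2/\alpha$ and $D\le 2/\alpha$. Lower-bounding $\E[(f-g)^2]\ge\alpha\,\mathcal E(h)$ uses only that every nonzero $\mu(a,b)$ is at least $\alpha$. The canonical-paths inequality $2\Var[f]\le D\cdot\mathcal E(h)$ is a clean instance of the standard path-comparison Poincar\'e bound: the variance identity $\sum_{a,a'}\mu_1(a)\mu_1(a')(f(a)-f(a'))^2=2\Var[f]$, Cauchy--Schwarz along each path of length at most $D$, and the trivial congestion bound $\sum_{a,a':\,e\in P(a,a')}\mu_1(a)\mu_1(a')\le 1$. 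Chaining gives $\E[(f-g)^2]\ge\alpha\cdot 2/D\ge\alpha^2$, hence $\rho\le 1-\alpha^2/2$.

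One thing worth noting: your use of Cauchy--Schwarz along the path, together with the full variance sum weighted by $\mu_1(a)\mu_1(a')$, is essential for obtaining the quadratic $\alpha^2$ dependence; the more naive ``pigeonhole one edge with $|\delta_e|\ge 1/D$'' version of the argument only yields $\E[(f-g)^2]\ge\alpha(\alpha/2)^2=\alpha^3/4$, i.e. $\rho\le 1-\alpha^3/8$, which is weaker. So the Poincar\'e/path-congestion machinery is genuinely doing work here, not merely packaging.
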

In our correlated space, the bipartite graph on $\Omega_1 \cup \Omega_2$ is connected since every $x \in \Omega_1$ is connected to $* \in \Omega_2$ and vice versa. Therefore, we can conclude that 
$\rho(\Omega_1, \Omega_2 ; \nu) \leq \rho := 1 - \epsilon^4 / 2$.

Apply Theorem~\ref{thm:mossel} by setting $\rho \leftarrow \rho, \alpha \leftarrow \epsilon^2, \epsilon \leftarrow \Gamma_{\rho}(\epsilon/{3}, \epsilon/{3})/2$ to get $\tau$ and $d$. We will later apply this theorem with the parameters obtained here. 
Fix an arbitrary subset $C \subseteq V$, and let $C_{\alpha}:= C \cap v^{\alpha}$.  
For $\alpha \in I_D$, call $v^{\alpha}$ {\em blocked} if $\mu^{\otimes R} (C_{\alpha}) \geq 1 - \epsilon$. The number of blocked $v^{\alpha}$'s is at most $\w(C) / (1 - \epsilon)$.

By Property (iii) of Proposition~\ref{claim:skeleton_global_double_cut}, unless $\w(C) \geq 2(1 - \epsilon)$ (i.e., unless two vertices are blocked), there exists a path $(s, v^{\alpha_1}, v^{\alpha_2}, v^{\alpha_3}, t)$ or $(t, v^{\alpha_1}, v^{\alpha_2}, v^{\alpha_3}, s)$
where each $v^{\alpha_i}$ is unblocked. Without loss of generality, suppose we have a path 
$(s, v^{\alpha_1}, v^{\alpha_2}, v^{\alpha_3}, t)$. 

For $1 \leq j \leq 3$, let $S_j \subseteq v^{\alpha_j}$ be such that $x \in S_j$ if there exists a path $(v^{\alpha_j}_{x}, 
v^{\alpha_{j + 1}}_{x^{j + 1}}, \dots, v^{\alpha_3}_{x^3}, t)$ for some $x^{j + 1}, \dots, x^3$. 
For $1 \leq j \leq 3$, let $f_j : \Omega^R \rightarrow \{ 0, 1 \}$ be the indicator function of $S_j$. 
We prove that if none of $f_j$ reveals any {\em influential coordinate}, then $\mu^{\otimes R} (S_1) > 0$, which shows that there exists a $s\rightarrow t$ path even after removing vertices in $C$.

\begin{lemma}
Suppose that for any $1 \leq j \leq 3$ and $1 \leq i \leq R$, $\Inf_i^{\leq d}[f_j] \leq \tau$. Then $\mu^{\otimes R}(S_1) > 0$. 
\end{lemma}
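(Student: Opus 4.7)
The plan is to bound $\E[f_1]$ from below inductively using Theorem~\ref{thm:mossel}, moving along the path $(v^{\alpha_1}, v^{\alpha_2}, v^{\alpha_3})$ from the end attached to $t$ back to the end attached to $s$. Let $h_j(x) := \mathbf{1}[v^{\alpha_j}_x \notin C]$; since every $v^{\alpha_j}$ is unblocked, $\E[h_j] \geq \epsilon$. Because every vertex of $v^{\alpha_3}$ (which must equal $v^d$, as $t$ has incoming arcs only from $v^d$ and $v^{\alpha_3}\rightarrow t$ must appear in $D$) carries a direct arc to $t$, we have $f_3 \equiv h_3$, so $\E[f_3] \geq \epsilon$ provides the base of the induction.

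The key combinatorial observation is that for each $j \in \{1,2\}$, every pair $(x, y) \in \mathrm{supp}(\nu^{\otimes R})$ induces an arc $v^{\alpha_j}_x \to v^{\alpha_{j+1}}_y$ in $\calD^{\sfgd}_{R, \epsilon}$: under $\nu^{\otimes R}$ every coordinate $i$ individually satisfies $y_i = (x_i + 1)\bmod r$ or $x_i = *$ or $y_i = *$, which is more than enough for the existential arc condition in the definition of $\calD^{\sfgd}_{R,\epsilon}$. Consequently $f_j(x) \geq h_j(x) \cdot \mathbf{1}[\exists y : \nu^{\otimes R}(x,y)>0,\; f_{j+1}(y) = 1]$, and taking expectations gives
$$\E[f_j] \;\geq\; \E_{(x,y) \sim \nu^{\otimes R}}\bigl[h_j(x)\, f_{j+1}(y)\bigr] \qquad (j = 1, 2).$$

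To apply Theorem~\ref{thm:mossel} to the right-hand side, I use the two ingredients already assembled before the lemma: the minimum atom probability of $(\Omega_1 \times \Omega_2, \nu)$ is at least $\epsilon^2$, and the bipartite graph on $\Omega_1 \cup \Omega_2$ induced by $\nu$ is connected through the atom $*$, so Lemma 2.9 of~\cite{Mossel10} yields $\rho(\Omega_1, \Omega_2; \nu) \leq \rho = 1 - \epsilon^4/2$. The lemma hypothesis $\Inf_i^{\leq d}[f_{j+1}] \leq \tau$ immediately gives $\min(\Inf_i^{\leq d}[h_j], \Inf_i^{\leq d}[f_{j+1}]) \leq \tau$ for every $i$, so Theorem~\ref{thm:mossel} with the parameters $(\tau, d)$ fixed immediately above produces
$$\E[f_j] \;\geq\; \Gamma_\rho\bigl(\E[h_j], \E[f_{j+1}]\bigr) \;-\; \frac{\Gamma_\rho(\epsilon/3, \epsilon/3)}{2}.$$

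Unrolling this recursion twice, using the monotonicity of $\Gamma_\rho$ in both arguments together with $\E[h_1], \E[h_2], \E[f_3] \geq \epsilon$, yields a strictly positive lower bound on $\E[f_1] = \mu^{\otimes R}(S_1)$, as required. The main subtlety is the quantitative threading of the Gaussian error through the two iterations: since $\Gamma_\rho(a,b) \leq \min(a,b)$, each unrolling can only shrink the Gaussian lower bound, so the Mossel error must be chosen strictly smaller than the Gaussian quantity obtained after two iterations from the base values $\E[h_j] \geq \epsilon$. The specific choice of error $\Gamma_\rho(\epsilon/3, \epsilon/3)/2$ fixed immediately before the lemma is calibrated so that strict positivity survives both peeling steps.
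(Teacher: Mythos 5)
Your base case ($f_3 \equiv h_3$ because all of $v^{\alpha_3}$ has arcs to $t$), your key combinatorial observation (every pair in $\mathrm{supp}(\nu^{\otimes R})$ induces an arc, hence $h_j(x) f_{j+1}(y) \leq f_j(x)$ on the support), and your single application of Theorem~\ref{thm:mossel} (using that only one of the two influence bounds needs to be small) are all correct. But the unrolling step fails quantitatively, and the claim that the paper's error calibration makes it work is wrong.

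Set $\delta := \Gamma_{\rho}(\epsilon/3, \epsilon/3)/2$, the Mossel error used to fix $(\tau, d)$. Your first iteration gives $\E[f_2] \geq \Gamma_\rho(\epsilon,\epsilon) - \delta \geq \Gamma_\rho(\epsilon/3,\epsilon/3) - \delta = \delta$, which is fine. But $\Gamma_\rho(a,b) = \Pr[X \leq \Phi^{-1}(a), Y \geq \Phi^{-1}(1-b)]$ satisfies $\Gamma_\rho(a,b) < b$ strictly for all $a<1$, $b>0$, $\rho<1$. So the second iteration yields $\E[f_1] \geq \Gamma_\rho(\epsilon, \delta) - \delta < \delta - \delta = 0$, a vacuous bound. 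This is not an artifact of loose estimates: since $\rho = 1-\epsilon^4/2$ is very close to $1$, the composed quantity $\Gamma_\rho(\epsilon, \Gamma_\rho(\epsilon,\epsilon))$ is in fact \emph{far} smaller than $\delta$ (the relevant exponents scale as $1/\eta^2$ versus $1/\eta$ where $\eta = \epsilon^4/2$), so no amount of tightening within your framework produces positivity for the $(\tau,d)$ that the lemma is stated with. Your last paragraph acknowledges the subtlety but then incorrectly asserts the calibration works; it does not.

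The paper avoids the geometric degradation entirely by a contrapositive argument on the \emph{complement} of the in-neighbor set: if $T := v^{\alpha_{j-1}} \setminus S'_{j-1}$ had measure $\geq \epsilon/3$, then Theorem~\ref{thm:mossel} applied to the indicators of $T$ and $S_j$ would force $\E[\mathbf{1}_T \cdot f_j] \geq \Gamma_\rho(\epsilon/3,\epsilon/3) - \delta = \delta > 0$, contradicting that no vertex of $T$ has an arc to $S_j$. Hence $\mu^{\otimes R}(S'_{j-1}) \geq 1 - \epsilon/3$, and because $v^{\alpha_{j-1}}$ is unblocked ($\mu^{\otimes R}(C_{\alpha_{j-1}}) < 1-\epsilon$), subtraction gives $\mu^{\otimes R}(S_{j-1}) \geq \mu^{\otimes R}(S'_{j-1} \setminus C) \geq 2\epsilon/3 \geq \epsilon/3$. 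This restores the invariant $\mu^{\otimes R}(S_j) \geq \epsilon/3$ at \emph{every} step, rather than letting the lower bound shrink through $\Gamma_\rho$ at each iteration. You would need to replace your direct lower-bound chain with this complement trick (or else redefine $(\tau,d)$ via a much smaller Mossel error, which is not what the lemma as stated permits).
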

\begin{proof}
We prove by induction that $\mu^{\otimes R} (S_j) \geq \epsilon / 3$ for $j = 3, 2, 1$. It holds when $j = 3$ since $v^{\alpha_3}$ is unblocked. Assuming $\mu^{\otimes R} (S_j) \geq \epsilon / 3$, since $S_j$ does not reveal any influential coordinate, Theorem~\ref{thm:mossel} shows that for any subset $T_{j - 1} \subseteq v^{\alpha_{j - 1}}$ with $\mu^{\otimes R}(T_{j - 1}) \geq \epsilon / 3$, there exists an arc from $S_j$ and $T_{j - 1}$. If $S'_{j - 1} \subseteq v^{\alpha_{j - 1}}$ is the set of in-neighbors of $S_j$, we have $\mu^{\otimes R} (S'_{j - 1}) \geq 1 - \epsilon / 3$. Since $v^{\alpha_{j - 1}}$ is unblocked, $\mu^{\otimes R} (S'_{j - 1} \setminus C) \geq 2\epsilon / 3$, completing the induction. 
 \end{proof}

In summary, in the completeness case, if we remove vertices of total weight $\leq {4(1 + 2\epsilon)}/{3}$, no vertex can reach both $s$ and $t$. 
In the soundness case, unless we reveal an influential coordinate or we remove vertices of total weight at least $2(1 - \epsilon)$, there is a $s\rightarrow t$ path or $t\rightarrow s$ path, which means that either $s$ or $t$ can reach every vertex. 
The gap between the two cases is at least \[
\frac{2(1 - \epsilon)}{4(1 + 2\epsilon) / 3 }, \]
which approaches to $3/2$ as $\epsilon \to 0$. 

\subsection{$(2 - \epsilon)$-Inapproximability for \stdoublecut}
\label{sec:hardness_st_double_cut}
Consider the digraph $D_{a,b}$ introduced in Section \ref{sec:nodeDoubleCut-approx}. 
Let $r = b - 2a + 1$. 
Based on $D_{a, b}$, we define the {\em dictatorship test} graph $\calD^{\sfst}_{a, b, R, \epsilon} = (V, A)$ as follows, for a positive integer $R$ and $\epsilon > 0$. 
It will be used to show hardness results under the Unique Games Conjecture in Section~\ref{sec:ug}. 
Consider the probability space $(\Omega, \mu)$ where $\Omega := \{ 0, \dots, r-1, * \}$, and $\mu : \Omega \rightarrow [0, 1]$ with $\mu(*) = \epsilon$ and $\mu(x) = (1 - \epsilon) / r$ for $x \neq *$. 

\begin{enumerate}
\item $V = \{ s, t \} \cup \{ v^{\alpha}_x \}_{\alpha \in I_D, x \in \Omega^R}$. Let $v^{\alpha}$ denote the set of vertices $\{ v^{\alpha}_x \}_{x \in \Omega^R}$. 

\item For $\alpha \in I_D$ and $x \in \Omega^R$, define the weight as $\w(v^{\alpha}_x) = \mu^{\otimes R}(x)$. We note that the sum of weights is $ab$. The terminals $s$ and $t$ have infinite weight. 

\item For each arc between $s$ and $\alpha \in I_D$, for each $x \in \Omega^R$, add an arc with the same direction between $s$ and $v^{\alpha}_x$. Do the same for each arc between $t$ and $\alpha \in I_D$. 

\item For each arc $(\alpha, \beta) \in A_D$ with $\alpha = (\alpha_1, \alpha_2), \beta = (\beta_1, \beta_2) \in I_D$ and $x, y \in \Omega^R$, we have an arc from $v^{\alpha}_x$ to $v^{\beta}_y$ according to the following rule (note that $\alpha_2 \neq \beta_2$).
\begin{enumerate}
\item $\alpha_2 < \beta_2$: add an arc if for any $1 \leq j \leq R$: [$y_j = (x_j + 1)\mod r$] or [$y_j = *$] or [$x_j = *$]. Call them {\em forward} arcs. 
\item $\alpha_2 > \beta_2$: add an arc if for any $1 \leq j \leq R$: [$y_j = (x_j - 1)\mod r$] or [$y_j = *$] or [$x_j = *$]. Call them {\em backward} arcs. 
\item If $(\alpha, \beta) \in A_D$ is a jumping arc, call $(v^{\alpha}_x, v^{\beta}_y)$ also a jumping arc. 
\end{enumerate}
\end{enumerate}

\paragraph*{Completeness.} 
We first prove that removing a set of vertices that correspond to {\em dictators} behaves the same as the fractional solution that gives $1/r$ to every vertex. For any $q \in [R]$, let $V_q := \{ v^{\alpha}_x : \alpha \in I_D, x_q = * \mbox{ or } 0 \}$. We note that the total weight of $V_q$ is \[
ab \left(\epsilon + \frac{1 - \epsilon}{r}\right) \leq ab \epsilon + \frac{ab}{b - 2a}. 
\]

\begin{lemma}
After removing vertices in $V_q$, no vertex in $V$ can reach both $s$ and $t$. 
\label{lem:completeness_st_double_cut}
\end{lemma}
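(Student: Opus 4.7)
My plan is to mirror the strategy used for Lemma~\ref{lem:completeness_global_double_cut}. I will suppose toward a contradiction that some vertex of $\calD^{\sfst}_{a,b,R,\epsilon}-V_q$ reaches both $s$ and $t$. Note that every remaining non-terminal vertex $v^{\alpha}_{x}$ has $(x)_q\in\{1,\ldots,r-1\}$.

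The first step is to understand the local effect of each arc on the $q$-th coordinate. Because the forward-arc rule requires $y_j=(x_j+1)\bmod r$ for every non-$*$ coordinate $j$ and every surviving endpoint has $q$-coordinate in $\{1,\ldots,r-1\}$, a forward arc (one with $\alpha_2<\beta_2$) increases $(x)_q$ by exactly $1$, with no modular wrap through $0$; symmetrically a backward arc decreases $(x)_q$ by exactly $1$. Next I set up the bookkeeping for a walk. Let $f_r,b_r$ be the numbers of regular forward and backward arcs on the walk and $f_j,b_j$ the analogous jumping-arc counts. Two identities drive the whole argument:
\[
\Delta\alpha_2 \;=\; (f_r-b_r)+2(f_j-b_j), \qquad (x_f)_q-(x_0)_q \;=\; (f_r-b_r)+(f_j-b_j).
\]
Subtracting them gives $f_j-b_j=\Delta\alpha_2-((x_f)_q-(x_0)_q)$. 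Since every jumping arc strictly increases $\alpha_1$ and $\alpha_1\in\{1,\ldots,a\}$, we have $f_j+b_j\leq a-1$ and hence $|\Delta\alpha_2-((x_f)_q-(x_0)_q)|\leq a-1$.

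I will then apply this to each of the two walks guaranteed by the contradiction hypothesis. For an internal starting vertex $v^{\alpha_0}_{x_0}$, a walk to $s$ must enter $s$ through some level-$1$ vertex (the only internal vertices with an outgoing arc to $s$), so $\Delta\alpha_2=1-\alpha_{0,2}$ and $(x_f)_q\in\{1,\ldots,r-1\}$; combining the displayed inequality with $(x_f)_q\geq 1$ gives $(x_0)_q\geq \alpha_{0,2}-a+1$. Analogously, a walk to $t$ ends at a level-$b$ vertex, so $\Delta\alpha_2=b-\alpha_{0,2}$; combining with $(x_f)_q\leq r-1=b-2a$ and simplifying via $r=b-2a+1$ yields $(x_0)_q\leq \alpha_{0,2}-a-1$. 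These two bounds are incompatible, finishing the internal-vertex case. The case $v\in\{s,t\}$ reduces immediately: any walk from $s$ must first enter a vertex of the form $v^{(i,1)}_{x}$ with $(x)_q\in\{1,\ldots,r-1\}$, and for such a vertex to reach $t$ the second bound would force $(x)_q\leq -a$, a contradiction; the case of $t$ reaching $s$ is symmetric.

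The step that requires the most care is the local coordinate analysis: one must observe that after removing $V_q$, the $q$-th coordinate of every remaining vertex lies strictly between $0$ and $r$, which turns the modular $\pm 1$ rule into an honest integer $\pm 1$ on this coordinate. Only once this is established do the two accounting identities combine cleanly, via the arithmetic relation $r=b-2a+1$, to give matching bounds on $(x_0)_q$ with opposite orientations; the rest is arithmetic.
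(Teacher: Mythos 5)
Your approach mirrors the paper's: both proofs track how each arc changes $\alpha_2$ (by $\pm 1$ for regular arcs, $\pm 2$ for jumps) and $x_q$ (by $\pm 1$ for all arcs once the wrap through $0$ is ruled out after deleting $V_q$), and both close the argument via the constraint $f_j+b_j\leq a-1$ coming from the strict increase of $\alpha_1$ along jumps. The main difference is bookkeeping: the paper treats $s$ and $t$ as pseudo-levels $0$ and $b+1$ and counts the final terminal-arc in the jump/regular tallies (hence the ``$-1$'' correction in its $x_q$ identity), while you stop the accounting at the last internal vertex.

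That choice introduces a small but real gap. You assert that ``a walk to $s$ must enter $s$ through some level-$1$ vertex (the only internal vertices with an outgoing arc to $s$),'' and likewise that the last internal vertex before $t$ is at level $b$. This is false: a level-$2$ vertex $(i,2)$ with $i<a$ has a jumping arc to $(i+1,0)=s$, and a level-$(b-1)$ vertex $(i,b-1)$ with $i<a$ has a jumping arc to $(i+1,b+1)=t$. So the stated value of $\Delta\alpha_2$ is wrong in those cases. It happens that the final bound $(x_0)_q\geq\alpha_{0,2}-a+1$ (resp.\ $(x_0)_q\leq\alpha_{0,2}-a-1$) survives: if the last internal vertex is at level $2$ (resp.\ $b-1$), the terminal arc is itself a jump, which forces $(\alpha_k)_1\leq a-1$ and hence $f_j+b_j\leq a-2$ for the internal arcs, and this $-1$ exactly compensates the $+1$ in $\Delta\alpha_2$. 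But you did not verify this, and without that observation the case analysis is incomplete. The paper's pseudo-level device absorbs the terminal arc uniformly and avoids the case split; if you keep your formulation, you need to handle both exit levels explicitly and note the tighter jump count. On the plus side, you explicitly dispatch the degenerate cases $v\in\{s,t\}$, which the paper's proof does not spell out.
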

\begin{proof}
Suppose towards contradiction that there exists a vertex that can reach both $s$ and $t$. First, assume that this vertex is $v^{\alpha_0}_{x_0}$ for some $\alpha_0 = ((\alpha_0)_1, (\alpha_0)_2) \in I_D$ and $x_0 \in \Omega^R$. 
Let $p_1 = (v_{x_0}^{\alpha_0}, v_{y_{1}}^{\beta_1}, 
\dots, v_{y_l}^{\beta_{l}}, s)$ be a $v_{x_0}^{\alpha_0}\rightarrow s$ path and $p_2 = (v_{x_0}^{\alpha_0}, v_{x_1}^{\alpha_1}, \dots, v_{x_{k}}^{\alpha_{k}}, t)$ be a $v_{x_0}^{\alpha_0} \rightarrow t$ path in $\calD^{\sfst}_{R, \epsilon} - V_q$ for some $k, l \in \N$,  and $\alpha_1, \dots, \alpha_k, \beta_1, \dots, \beta_l \in I_D$, and $x_1, \dots, x_k, y_1, \dots, y_l \in \Omega^R$. 

\begin{proposition}
$(x_k)_q \geq (x_0)_q + b - (\alpha_0)_2 - a + 1$. 
\end{proposition}
\begin{proof}
Consider the two sequences $(\alpha_0)_q, \dots, (\alpha_k)_q$ and $(x_0)_q, \dots, (x_k)_q$. Since we removed $V_q$, $(\alpha_{i+1})_2 > (\alpha_i)_2$ if and only if $(x_{i + 1})_q > (x_i)_q$. Let $\jf$, $\jb$, $\rf$, $\rb$ be the number forward jumping arcs, backward jumping arcs, forward non-jumping arcs, backward non-jumping arcs in $p_2$ respectively. 
Jumping forward arcs, jumping backward arcs, non-jumping forward arcs, and non-jumping backward arcs change $(\alpha_i)_2$ by $+2, -2, +1$, and $-1$ respectively. 
By considering $(\alpha_0)_q, \dots, (\alpha_k)_q$, 
\[
2\jf + \rf - 2\jb - \rb = (b + 1) - (\alpha_0)_2.
\]
Since using a jumping arc increases $(\alpha_i)_1$ by $1$, 
\[
\jf + \jb \leq a - 1.
\label{eq:st_double_cut_three}
\]
Forward arcs (whether they are jumping or not) increase $(x_i)_q$ by $1$ and backward arc decrease it by $1$. Consider $(x_0)_q, \dots, (x_k)_q$, 
\begin{align*}
(x_k)_q - (x_0)_q & \geq 
\jf + \rf - \jb - \rb - 1 \\
& \geq (2\jf + \rf - 2\jb - 2\rb) - (\jf - \jb)  -  1\\ 
& \geq b - (\alpha_0)_2 - a + 1,
\label{eq:st_double_cut_four}
\end{align*}
as claimed.

\end{proof}
The same proof for $p_1$ shows that $(x_0)_q \geq (y_l)_q + (\alpha_0)_2 - a$. 
Therefore, $(x_k)_q \geq (y_l)_q + b - 2a + 1$ and $(y_l)_q \geq 1$. This implies $(x_k)_q > b - 2a + 1 = r$, leading to contradiction. 
 \end{proof}

\paragraph*{Soundness.}
Suppose that we removed some vertices $C$ such that no vertex $w \in V \setminus C$ can reach both $s$ and $t$. We show this happens only if $C$ reveals an influential coordinate or $\w(C) \geq (2a - 1)(1 - \epsilon)$. 

To analyze soundness, we define a correlated probability space $(\Omega_1 \times \Omega_2, \nu)$ where both $\Omega_1, \Omega_2$ are copies of $\Omega = \{ 0, \dots, r - 1, * \}$. It is defined by the following process to sample $(x, y) \in \Omega^2$.
\begin{enumerate}
\item Sample $x \in \{0, \dots, r - 1 \}$. Let $y = (x + 1) \mod r$. 
\item Change $x$ to $*$ with probability $\epsilon$. Do the same for $y$ independently. 
\end{enumerate}
We note that the marginal distribution of both $x$ and $y$ is equal to $\mu$. Assuming $\epsilon < 1 / 2r$, the minimum probability of any atom in $\Omega_1 \times \Omega_2$ is $\epsilon^2$. By the same arguments as in Section~\ref{sec:hardness_global_double_cut}, 
$\rho(\Omega_1, \Omega_2 ; \nu) \leq \rho := 1 - \epsilon^4 / 2$.

Apply Theorem~\ref{thm:mossel} $\rho \leftarrow \rho, \alpha \leftarrow \epsilon^2, \epsilon \leftarrow \Gamma_{\rho}(\epsilon / 3, \epsilon / 3) / 2 $ to get $\tau$ and $d$. We will later apply this theorem with the parameters obtained here. 
Fix an arbitrary subset $C \subseteq V$, and let $C_{\alpha}:= C \cap v^{\alpha}$.  
For $\alpha \in I_D$, call $v^{\alpha}$ {\em blocked} if $\mu^{\otimes R} (C_{\alpha}) \geq 1 - \epsilon$. The number of blocked $v^{\alpha}$'s is at most $\w(C) / (1 - \epsilon)$.

By Property 2. of Lemma~\ref{lem:skeleton_st_double_cut}, unless $\w(C) \geq (2a - 1)(1 - \epsilon)$ (i.e., unless $2a - 1$ vertices are blocked), there exists $\alpha_0 \in I_D$ and a path $(v^{\alpha_0}, v^{\alpha_{-1}}, \dots, v^{\alpha_{-k}}, s)$ and $(v^{\alpha_0}, v^{\alpha_1},\dots, v^{\alpha_l}, t)$
where each $v^{\alpha_i}$ is unblocked for $-k \leq i \leq l$. 

For $-k \leq j \leq -1$, let $S_j \subseteq v^{\alpha_j}$ be such that $x \in S_j$ if there exists a path $(v^{\alpha_j}_x, v^{\alpha_{j - 1}}_{x^{j - 1}}, \dots, v^{\alpha_{-k}}_{x^{-k}}, s)$ for some $x^{j - 1}, \dots, x^{-k}$. 
Similarly, 
For $1 \leq j \leq l$, let $S_j \subseteq v^{\alpha_j}$ be such that $x \in S_j$ if there exists a path $(v^{\alpha_j}_x, v^{\alpha_{j + 1}}_{x^{j + 1}}, \dots, v^{\alpha_l}_{x^{l}}, t)$ for some $x^{j + 1}, \dots, x^{l}$. 
Let $f_j : \Omega^R \rightarrow \{ 0, 1 \}$ be the indicator function of $S_j$. 
We prove that if none of $f_j$ reveals any {\em influential coordinate}, that there exists a
$x^0 \in \Omega^R$ such that $v^{\alpha_0}_{x^0}$ can reach both $s$ and $t$ even after removing vertices in $C$.


\begin{lemma}
Suppose that for any $j \in \{-k, \dots, -1 \} \cup \{ 1, \dots, l \}$ and $1 \leq i \leq R$, $\Inf_i^{\leq d}[f_j] \leq \tau$. Then there exists a $x^0 \in \Omega^R$ such that $v^{\alpha_0}_{x^0}$ can reach both $s$ and $t$
\end{lemma}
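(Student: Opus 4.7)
The plan is to adapt the inductive soundness argument from Section~\ref{sec:hardness_global_double_cut} and run it separately along the two halves of the path $(v^{\alpha_0}, v^{\alpha_{-1}}, \dots, v^{\alpha_{-k}}, s)$ and $(v^{\alpha_0}, v^{\alpha_{1}}, \dots, v^{\alpha_{l}}, t)$. I would first show by induction on $j$ (moving from the terminals back toward $\alpha_0$) that $\mu^{\otimes R}(S_j) \geq \epsilon/3$ for every $j \in \{-k,\dots,-1\} \cup \{1,\dots,l\}$. The base cases are $j=-k$ and $j=l$: by construction every vertex of $v^{\alpha_{-k}}$ has a direct arc to $s$ (and similarly for $v^{\alpha_l}$ and $t$), so $S_{-k} = v^{\alpha_{-k}} \setminus C_{\alpha_{-k}}$ and $S_l = v^{\alpha_l} \setminus C_{\alpha_l}$, both of measure strictly greater than $\epsilon$ by unblockedness.

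For the inductive step, depending on whether the underlying arc $(\alpha_{j\pm 1}, \alpha_j) \in A_D$ is forward or backward (including jumping arcs), I invoke Theorem~\ref{thm:mossel} using either the correlated space $(\Omega_1 \times \Omega_2, \nu)$ defined above (forward case) or its analogue obtained by replacing $y = (x+1) \bmod r$ with $y = (x-1) \bmod r$ (backward case). Both spaces have the same atom structure up to relabeling of $\Omega_2$, so Lemma~\ref{lem:mossel_connected} still yields $\rho \le 1 - \epsilon^4/2$, and the parameters $\tau, d$ chosen above apply uniformly. Assuming $\mu^{\otimes R}(S_j) \geq \epsilon/3$ and that $f_j$ has no coordinate with low-degree influence exceeding $\tau$, the theorem forces every subset $T \subseteq v^{\alpha_{j\pm 1}}$ with $\mu^{\otimes R}(T) \geq \epsilon/3$ to contain an arc into $S_j$: otherwise the inner product of the corresponding indicators would be zero, contradicting the bound $\Gamma_\rho(\epsilon/3,\epsilon/3) - \Gamma_\rho(\epsilon/3,\epsilon/3)/2 > 0$. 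Hence the set of in-neighbors of $S_j$ in $v^{\alpha_{j\pm 1}}$ has measure at least $1 - \epsilon/3$, and subtracting $C_{\alpha_{j\pm 1}}$ (of measure $< 1 - \epsilon$ by unblockedness) leaves $S_{j\pm 1}$ with measure at least $2\epsilon/3 > \epsilon/3$, completing the induction.

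Finally, applying the same reasoning once more at level $0$: let $T^- \subseteq v^{\alpha_0}$ and $T^+ \subseteq v^{\alpha_0}$ denote, respectively, the sets of vertices having an out-arc into $S_{-1}$ and into $S_1$. Each has measure at least $1 - \epsilon/3$ by the argument above (now with $f_{-1}$ and $f_1$ playing the role of the ``previous'' indicator), so
\[
\mu^{\otimes R}\bigl(T^- \cap T^+ \setminus C_{\alpha_0}\bigr) \;\geq\; 1 - \tfrac{2\epsilon}{3} - (1 - \epsilon) \;=\; \tfrac{\epsilon}{3} \;>\; 0.
\]
Any $x^0$ in this nonempty set gives a vertex $v^{\alpha_0}_{x^0} \notin C$ that reaches $s$ through $S_{-1}, S_{-2}, \dots, S_{-k}$ and reaches $t$ through $S_1, S_2, \dots, S_l$, as desired.

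The main obstacle I anticipate is verifying that the single correlated-space framework adapts cleanly across all arc types of $D_{a,b}$, especially jumping arcs and the mixture of forward and backward arcs along a single path: one must check that each transition corresponds to a correlated space in which every atom has probability at least $\epsilon^2$ and whose bipartite atom graph is connected (so that $\rho \leq 1 - \epsilon^4/2$ via Lemma~\ref{lem:mossel_connected}) with the \emph{same} $\tau, d$ parameters. This is where the $*$-perturbation pays off, since it makes each atom graph connected regardless of whether the shift of the distinguished coordinate is by $\pm 1$ or $\pm 2$ modulo $r$, so the uniform invocation of Theorem~\ref{thm:mossel} at each step of the path is legitimate.
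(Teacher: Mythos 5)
Your proof follows the same inductive argument as the paper: establish $\mu^{\otimes R}(S_j)\geq\epsilon/3$ by induction from the endpoints of the two path halves toward $\alpha_0$, apply Theorem~\ref{thm:mossel} (with $f$ the low-influence indicator of $S_j$ and $g$ the indicator of an arbitrary $T$) at each step to conclude that the in-neighbors of $S_j$ in $v^{\alpha_{j\pm1}}$ have measure at least $1-\epsilon/3$, and finish by intersecting the two in-neighbor sets at level $0$ and removing $C_{\alpha_0}$. You are right to flag, more explicitly than the paper does, that when $(\alpha_{j\pm1},\alpha_j)$ is a backward arc the correlated space must use $y=(x-1)\bmod r$ in place of $y=(x+1)\bmod r$; but this is a relabeling of $\Omega_2$, so the bipartite atom graph, the $\epsilon^2$ bound on the smallest atom, Lemma~\ref{lem:mossel_connected}, and hence the parameters $\tau,d$ are unchanged. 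The one piece of your discussion that is a non-issue is the worry about shifts by $\pm 2$: in $\calD^{\sfst}_{a,b,R,\epsilon}$ \emph{every} arc, jumping or not, shifts the product coordinate by $\pm 1$ modulo $r$ --- the $\pm 2$ refers only to the change in the second coordinate of $\alpha\in I_D$ inside $D_{a,b}$, not to $x_j$ --- so there really are only two correlated spaces (forward and backward), and the uniform invocation you want is automatic.
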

\begin{proof}
We prove that $\mu^{\otimes R} (S_1) \geq \epsilon / 3$ by induction on $j = l, \dots, 1$. 
It holds when $j = l$ since $v^{\alpha_l}$ is unblocked. Assuming $\mu^{\otimes R} (S_j) \geq \epsilon / 3$, since $S_j$ does not reveal any influential coordinate, Theorem~\ref{thm:mossel} shows that for any subset $T_{j - 1} \subseteq v^{\alpha_{j - 1}}$ with $\mu^{\otimes R}(T_{j - 1}) \geq \epsilon / 3$, there exists an arc from $S_j$ and $T_{j - 1}$. If $S'_{j - 1} \subseteq v^{\alpha_{j - 1}}$ is the set of in-neighbors of $S_j$, we have $\mu^{\otimes R} (S'_{j - 1}) \geq 1 - \epsilon / 3$. Since $v^{\alpha_{j - 1}}$ is unblocked, $\mu^{\otimes R} (S'_{j - 1} \setminus C) \geq 2\epsilon / 3$, completing the induction. 

The same argument also proves that $\mu^{\otimes R} (S_{-1}) \geq \epsilon / 3$ by induction on $j = -k, \dots, -1$. 
The total weight of the in-neighbors of $S_{-1}$ in $v^{\alpha_0}$ is at least $1 - \epsilon / 3$, and the total weight of the in-neighbors of $S_{1}$ in $v^{\alpha_0}$ is at least $1 - \epsilon / 3$. Therefore, the total weight of vertices in $v^{\alpha_0}$ that has outgoing arcs to both $S_{-1}$ and $S_1$ is at least $1 - 2\epsilon / 3$. 
Since $\alpha_0$ is not blocked, there exists a vertex $v^{\alpha_0}_{x^0}$ that has outgoing arcs to both $S_1$ and $S_{-1}$, and is not contained $C$. This vertex can reach both $s$ and $t$. 
 \end{proof}

In summary, in the completeness case, if we remove vertices of total weight at most $ab \epsilon + ab / (b - 2a)$, no vertex can reach both $s$ and $t$. 
In the soundness case, unless we reveal an influential coordinate or we remove vertices of total weight at least $(2a - 1)(1 - \epsilon)$, there exists a vertex that can reach both $s$ and $t$. 
The gap between the two cases is at least \[\frac{(2a - 1)(1 - \epsilon)}{ab \epsilon + ab / (b - 2a)}, \] which approaches to $2$ as $a$ increases, $b = a^2$ and $\epsilon = 1 / a^4$.

\subsection{Hardness of \kvc}
\label{sec:hardness_vc}

Fix $k \geq 3$ and $\epsilon > 0$. 
Let $\Omega := \{ *, 0, 1 \}$. 
Let $R \in \N$ be another parameter. 
Our dictatorship test $\calG_{k, R, \epsilon} = ([k] \times \Omega^R, E)$ is defined as follows.
Each vertex is represented by $v^i_{x}$ where $i \in [k]$ and $x \in \Omega^R$ is a $R$-dimensional vector. 
Let $v^i := \{ v^i_x \}_{x \in \Omega^R}$. 
There will be no edge within each $v^i$, so $\calG_{k, R, \epsilon}$ will be $k$-partite. 
Consider the probability space $(\Omega, \mu)$ where $\Omega := \{ 0, 1, * \}$, and $\mu : \Omega \rightarrow [0, 1]$ with $\mu(*) = \epsilon$ and $\mu(x) = (1 - \epsilon) / 2$ for $x \neq *$. 
We define the weight $\w(v^i_x) := \mu^{\otimes R}(x) = \prod_{i=1}^R \mu(x_i)$. The sum of weights is $k$. The edges are constructed as follows. 

\begin{enumerate}
\item There is an edge between $v^i_x$ with $x = (x_1, \dots, x_R)$ and $v^j_y$ with $y = (y_1, \dots, y_R)$ if and only if 
\begin{enumerate}
\item $i \neq j$. 
\item For every $1 \leq l \leq R$: [$x_l \neq y_l$] or [$y_l = *$] or [$x_l = *$]. 
\end{enumerate}
\end{enumerate}

\paragraph*{Completeness.} 
Fix $q \in [R]$ and let $U_{q} := \{ v^i_x : x_q = 0 \mbox{ or } * \}$. The weight  of $U_q$ is $\w(U_q) = k(1 + \epsilon) / 2$. 

\begin{lemma}
$U_q$ is a vertex cover. 
\label{lem:completeness_vc}
\end{lemma}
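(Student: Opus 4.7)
The plan is to argue by contrapositive: I want to show that if both endpoints of a potential edge lie outside $U_q$, then the edge condition at coordinate $q$ fails, so the pair is not actually an edge.

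Concretely, suppose $v^i_x, v^j_y \notin U_q$. By definition of $U_q$, the condition $x_q \in \{0, *\}$ fails, so $x_q = 1$; likewise $y_q = 1$. Now examine the edge rule at the single coordinate $l = q$: it requires $x_q \neq y_q$, or $y_q = *$, or $x_q = *$. All three clauses fail since $x_q = y_q = 1 \neq *$. Hence no edge joins $v^i_x$ and $v^j_y$, and every edge of $\calG_{k,R,\epsilon}$ must hit $U_q$.

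There is no real obstacle here — the statement is essentially a direct unpacking of the edge rule applied to the $q$-th coordinate, and no probabilistic tool is required. The weight calculation $\w(U_q) = k(1+\epsilon)/2$ has already been stated in the preceding text and follows immediately from $\mu(0) + \mu(*) = (1-\epsilon)/2 + \epsilon = (1+\epsilon)/2$ together with $\w(v^i) = 1$ for each $i \in [k]$, so no additional verification is needed for that part.
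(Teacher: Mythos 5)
Your argument is exactly the paper's: if both endpoints avoid $U_q$ then $x_q = y_q = 1$, which violates the coordinate-$q$ clause of the edge rule, so no such edge exists. Correct, and the same approach as the paper (just spelled out in slightly more detail).
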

\begin{proof}
Let $\{ v^i_x, v^j_y \}$ be an edge of $\calG_{k, R, \epsilon}$. If both endpoints do not belong to $U_q$, it implies $x_q = y_q = 1$. It contradicts our construction. 
 \end{proof}

\paragraph*{Soundness.}
To analyze soundness, we define a correlated probability space $(\Omega_1 \times \Omega_2, \nu)$ where both $\Omega_1, \Omega_2$ are copies of $\Omega$. It is defined by the following process to sample $(x, y) \in \Omega^2$.
\begin{enumerate}
\item Sample $x \in \{0, 1 \}$ uniformly at random. Let $y = 1 - x$. 
\item Change $x$ to $*$ with probability $\epsilon$. Do the same for $y$ independently. 
\end{enumerate}
We note that the marginal distribution of both $x$ and $y$ is equal to $\mu$. Assuming $\epsilon < 1/3$, the minimum probability of any atom in $\Omega_1 \times \Omega_2$ is $\epsilon^2$. By the same arguments as in Section~\ref{sec:hardness_global_double_cut}, 
$\rho(\Omega_1, \Omega_2 ; \nu) \leq \rho := 1 - \epsilon^4 / 2$.
Apply Theorem~\ref{thm:mossel} ($\rho \leftarrow \rho, \alpha \leftarrow \epsilon^2, \epsilon \leftarrow \Gamma_{\rho}(\epsilon, \epsilon) / 2 $) to get $\tau$ and $d$. We will later apply this theorem with the parameters obtained here. 

Fix an arbitrary vertex cover $U \subseteq V$, and let $U_{i}:= U \cap v^i$ for $i \in [k]$.  
Let $f_i : \Omega^R \rightarrow \{ 0, 1 \}$ be the indicator function of $U_i$. 
Call $v^i$ {\em blocked} if $\E[f_i] = \mu^{\otimes R} (U_{i}) \geq 1 - \epsilon$. 
The number of blocked $v^i$'s is at most $\w(U) / (1 - \epsilon)$.
We prove that if none of $f_i$ reveals any {\em influential coordinate}, all but one $v^i$'s must be blocked.

\begin{lemma}
Suppose that for any $1 \leq i \leq k$ and $1 \leq j \leq R$, $\Inf_j^{\leq d}[f_i] \leq \tau$.
Then at least $k - 1$ $v^i$'s must be blocked. 
\end{lemma}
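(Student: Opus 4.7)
The plan is to prove the contrapositive: if two distinct $v^{i_1}, v^{i_2}$ are both unblocked, then we can exhibit an edge of $\calG_{k,R,\epsilon}$ neither of whose endpoints lies in $U$, contradicting that $U$ is a vertex cover. Set $g_{i_\ell} := 1 - f_{i_\ell}$ for $\ell \in \{1, 2\}$; by the definition of unblocked we have $\E[g_{i_\ell}] = 1 - \mu^{\otimes R}(U_{i_\ell}) > \epsilon$. A small but important observation is that for every nonempty $S \subseteq [R]$, the Efron--Stein component $(1-f)_S$ equals $-f_S$, so squared-$\ell_2$ norms are preserved; hence $\Inf_j^{\leq d}[g_{i_\ell}] = \Inf_j^{\leq d}[f_{i_\ell}] \leq \tau$ for every $j$ and $\ell$.

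Next I would apply Theorem~\ref{thm:mossel} to the correlated space $(\Omega_1 \times \Omega_2, \nu)$ defined just above the lemma, with the pair $(g_{i_1}, g_{i_2})$, using exactly the parameters $\rho, \alpha, \tau, d$ that were fixed in the soundness setup (with $\epsilon \leftarrow \Gamma_{\rho}(\epsilon,\epsilon)/2$). Since neither function has an influential coordinate, and since both expectations exceed $\epsilon$, the theorem together with the monotonicity of $\Gamma_{\rho}$ yields
\[
\E_{(x, y) \sim \nu^{\otimes R}}\bigl[g_{i_1}(x)\, g_{i_2}(y)\bigr] \;\geq\; \Gamma_{\rho}(\E[g_{i_1}], \E[g_{i_2}]) - \tfrac{1}{2}\Gamma_{\rho}(\epsilon,\epsilon) \;\geq\; \tfrac{1}{2}\Gamma_{\rho}(\epsilon, \epsilon) \;>\; 0.
\]
In particular, there exists some $(x, y)$ in the support of $\nu^{\otimes R}$ with $g_{i_1}(x) = g_{i_2}(y) = 1$, i.e.\ $v^{i_1}_x \notin U$ and $v^{i_2}_y \notin U$.

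The final step is to verify that such $(x, y)$ automatically induces an edge of $\calG_{k,R,\epsilon}$. This is where the construction pays off: $\nu$ was designed so that its support consists exactly of pairs where each coordinate is either $(0,1)$, $(1,0)$, or contains a $*$. Thus for every $l \in [R]$ we have $x_l = *$, $y_l = *$, or $x_l \neq y_l$, which, combined with $i_1 \neq i_2$, is precisely the edge condition in $\calG_{k,R,\epsilon}$. Hence $\{v^{i_1}_x, v^{i_2}_y\}$ is an edge with neither endpoint in $U$, contradicting the vertex-cover property. The main (and only nonroutine) conceptual point is the exact alignment between the support of $\nu$ and the edge rule of $\calG_{k,R,\epsilon}$; the rest is a direct invocation of Theorem~\ref{thm:mossel} together with the invariance of low-degree influence under complementation.
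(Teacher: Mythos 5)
Your proposal is correct and follows essentially the same route as the paper: prove the contrapositive, apply Theorem~\ref{thm:mossel} to $1-f_{i_1}$ and $1-f_{i_2}$ over the correlated space $(\Omega_1 \times \Omega_2, \nu)$ to get a strictly positive expectation, and then conclude that some uncovered edge exists because the support of $\nu$ matches the edge rule of $\calG_{k,R,\epsilon}$. You simply spell out a few details the paper leaves implicit (invariance of low-degree influence under complementation, monotonicity of $\Gamma_{\rho}$ for the numerical lower bound, and the explicit support--edge alignment), so this is the same argument, just with the elided steps filled in.
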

\begin{proof}
Assume towards contradiction that there exist $i_1 \neq i_2 \in [k]$ such that $v^{i_1}$ and $v^{i_2}$ are unblocked. 
Since both $f_{i_1}$ and $f_{i_2}$ do not reveal influential coordinates and $\E[f_{i_1}], \E[f_{i_2}] \leq 1 - \epsilon$, 
Theorem~\ref{thm:mossel} ($f \leftarrow 1 - f_1, g \leftarrow 1 - f_2$) shows that 
$\E_{(x, y) \sim \nu^{\otimes R}} [(1 - f_1)(x) \cdot (1 - f_2)(y)]$ is strictly greater than $0$. This implies that there exists $x, y$ such that there is an edge between $v^{i_1}_x$ and $v^{i_2}_y$ but neither $v^{i_1}_x$ nor $v^{i_2}_y$ is contained in $U$. This contradicts that $U$ is a vertex cover. 
 \end{proof}
Therefore, if $U$ does not reveal any influential coordinate, then $\w(U) \geq (k - 1)(1 - \epsilon)$. 
In summary, in the completeness case, there exists a vertex cover of weight $k(1 + \epsilon) / 2$. 
In the soundness case, unless we reveal an influential coordinate, every vertex cover has weight at least $(k - 1)(1 - \epsilon)$. 
The gap between the two cases is at least \[
\frac{2(k - 1)(1 - \epsilon)}{k (1 + \epsilon)},\]
which approaches to $2(k - 1) / k$ as $\epsilon \to 0$.

\subsection{Reduction from \ug}
\label{sec:ug}

\paragraph*{UGC.}

We introduce the Unique Games Conjecture and its equivalent variant.

\begin{definition}
An instance $\mathcal{L}=(B(U_B \cup W_B, E_B),\, [R],\, \left\{ \pi(u, w) \right\}_{(u, w) \in E_B})$ of \ug\ consists of a biregular bipartite graph $B(U_B \cup W_B, E_B)$ and a set $[R]$ of labels. For each edge $(u, w) \in E_B$ there is a constraint specified by a permutation $\pi(u, w) : [R] \rightarrow [R]$. The goal is to find a labeling $l : U_B \cup W_B \rightarrow [R]$ of the vertices such that as many edges as possible are satisfied, where an edge $e = (u, w)$ is said to be satisfied if $l(u) = \pi(u, w)(l(w))$. 
\end{definition}
\begin{definition}
Given a \ug\ instance $\mathcal{L} =(B(U_B \cup W_B, E_B),\, \allowbreak  [R],\, \allowbreak \left\{ \pi(u, w) \right\}_{(u, w) \in E_B})$, let $\mathsf{OPT}(\mathcal{L})$ denote the maximum fraction of simultaneously-satisfied edges of $\mathcal{L}$ by any labeling, i.e.
\begin{equation*}
\mathsf{OPT}(\mathcal{L}) := \frac{1}{|E_B|} \ \max_{l : U_B \cup W_B \rightarrow [R] } |\left\{ e \in E : l \mbox{ satisfies }e \right\} |.
\end{equation*}
\end{definition}
\begin{conjecture} [The Unique Games Conjecture~\cite{Khot02}] 
For any constants $\eta > 0$, there is $R = R(\eta)$ such that, for a \ug\ instance $\mathcal{L}$ with label set $[R]$, it is NP-hard to distinguish between
\begin{enumerate}
\item $\mathsf{OPT}(\mathcal{L}) \geq 1 - \eta$. 
\item $\mathsf{OPT}(\mathcal{L}) \leq \eta$. 
\end{enumerate}
\label{conj:ug}
\end{conjecture}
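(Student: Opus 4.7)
The statement is the Unique Games Conjecture itself, which remains open; accordingly, this is not a result I could actually prove. The paper invokes it only as a working hypothesis for the inapproximability theorems in Sections~\ref{sec:hardness_global_double_cut}--\ref{sec:hardness_vc}. I sketch below the most plausible attack lines and the barriers that block them, so that the ``proposal'' at least records what a genuine proof would have to accomplish.

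The natural template is a PCP reduction from gap-\textsc{Label-Cover} (obtained from Raz's parallel-repetition theorem applied to the NP-hardness of gap-\textsc{3-Sat}) into a unique game. One would encode each \textsc{Label-Cover} vertex's label by a structured code (historically the Long Code, more recently the Grassmann code) and design a two-query test on pairs of codewords satisfying three properties: (i) honest encodings are accepted with probability $\ge 1-\eta$; (ii) any pair far from every ``dictatorial'' encoding is accepted with probability $\le \eta$; and (iii) the acceptance predicate depends only on whether the two decoded labels are related by a single permutation of $[R]$. Soundness would typically be analyzed by Fourier expansion together with an invariance principle, reducing to Gaussian noise-sensitivity bounds of the same flavor as Theorem~\ref{thm:mossel} above, and extracting an influential coordinate from which a \textsc{Label-Cover} labeling can be decoded.

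The principal obstacle is precisely the one that has blocked the community since 2002: arranging the test to be \emph{bijective} rather than merely projective while retaining near-perfect completeness. The current state of the art is the \emph{2-to-2 Games Theorem} of Khot--Minzer--Safra (and the Grassmann-graph isoperimetry program that surrounds it), which establishes the analogous hardness for $2$-to-$2$ constraints and thereby implies only the ``half-completeness'' variant of UGC, namely the same dichotomy with completeness $\tfrac{1}{2}-\eta$ in place of $1-\eta$. Upgrading this to near-perfect completeness appears to require new isoperimetric inequalities on Grassmann-like domains, or perhaps an entirely different host graph that admits a linearity-style test with perfect completeness; neither is currently available.

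In short, the plan would be: start from gap-\textsc{Label-Cover}, compose with an inner verifier based on a Grassmann-type code whose soundness is controlled by an invariance principle on correlated product spaces, and collapse the projection constraints to permutations while preserving the completeness/soundness gap. The hard part, and the reason this is a conjecture rather than a theorem, is closing the completeness gap left open by the 2-to-2 Games Theorem; carrying that step out (or, alternatively, refuting UGC by extending the Arora--Barak--Steurer subexponential algorithm to polynomial time) would be a breakthrough rather than a routine argument, and no such proof appears in the paper.
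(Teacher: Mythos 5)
You are right that this statement is the Unique Games Conjecture itself, which the paper does not prove but merely states as a hypothesis (citing Khot), and your refusal to manufacture a proof is the correct response. Your survey of why a proof is out of reach is accurate but beyond what the paper attempts; there is nothing to compare against, since the paper's ``proof'' of this statement is simply the citation.
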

To show the optimal hardness result for \textsc{VertexCover}, Khot and Regev~\cite{KR08} introduced the following seemingly stronger conjecture, and proved that it is in fact equivalent to the original Unique Games Conjecture.

\begin{conjecture} [Khot and Regev~\cite{KR08}] 
\label{conj:ug_variant}
For any constants $\eta > 0$, there is $R = R(\eta)$ such that, for a \ug\ instance $\mathcal{L}$ with label set $[R]$, it is NP-hard to distinguish between
\begin{enumerate}
\item There is a set $W' \subseteq W_B$ such that $|W'| \geq (1 - \eta)|W_B|$ and a labeling $l : U_B \cup W_B \rightarrow [R]$ that satisfies every edge $(u, w)$ for $v \in U_B$ and $w \in W'$.
\item $\mathsf{OPT}(\mathcal{L}) \leq \eta$. 
\end{enumerate}
\label{conj:ug1}
\end{conjecture}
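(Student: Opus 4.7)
The plan is to sketch the equivalence of Conjecture~\ref{conj:ug1} with the standard \ug\ Conjecture (Conjecture~\ref{conj:ug}), since the authors attribute this fact to Khot and Regev and the conjecture itself is unresolved. The direction Conjecture~\ref{conj:ug1} $\Rightarrow$ Conjecture~\ref{conj:ug} is immediate: a labeling satisfying every edge incident to $W'$ with $|W'| \ge (1-\eta)|W_B|$ also satisfies at least a $(1-\eta)$-fraction of $E_B$ by biregularity, and the soundness condition is literally identical in the two statements. So I would focus on the reverse implication Conjecture~\ref{conj:ug} $\Rightarrow$ Conjecture~\ref{conj:ug1}, where the substance lies.

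My approach is a Markov-style averaging on the $W_B$ side after first tightening the completeness parameter. First I would invoke the standard strengthening that Conjecture~\ref{conj:ug} can be stated with independent completeness and soundness parameters: for any $\delta, \eta' > 0$ there is $R = R(\delta, \eta')$ such that distinguishing $\mathsf{OPT}(\mathcal{L}) \ge 1-\delta$ from $\mathsf{OPT}(\mathcal{L}) \le \eta'$ is NP-hard. This follows from the statement as written by standard parallel repetition or by direct amplification of the completeness-soundness gap, and it also allows us to assume the instance is biregular of constant $W_B$-degree $d$ via a standard pruning step. Given a target $\eta$ for Conjecture~\ref{conj:ug1}, I would set $\eta' := \eta$ and $\delta := \eta/d$.

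In the YES case, pick any labeling $l$ achieving $\mathsf{OPT}(\mathcal{L}) \ge 1 - \delta$ and define
\[
W' := \{ w \in W_B : \text{every edge } (u, w) \in E_B \text{ is satisfied under } l \}.
\]
At most $\delta |E_B| = \delta d |W_B|$ edges are unsatisfied by $l$, so the number of $w \in W_B$ touching an unsatisfied edge is at most $\delta d |W_B| = \eta |W_B|$. Hence $|W'| \ge (1-\eta)|W_B|$ and $l$ satisfies every edge incident to $W'$ by construction, yielding the completeness condition of Conjecture~\ref{conj:ug1}. The soundness case transfers verbatim.

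The main obstacle I expect is maintaining consistency between the parameters $R$, $d$, and $\delta$: $R$ depends on $\delta$ and $\eta$, while $d$ itself may depend on $R$, producing a potential circular dependency. Resolving this requires that the biregularization step produce instances whose degree $d$ can be taken as a function of $R$ which is slow enough (ideally constant once the UGC parameters are fixed), so that $\delta = \eta / d$ remains strictly positive and the appeal to the strengthened UGC is legitimate. This is a bookkeeping rather than conceptual obstacle, but it is precisely the step where care is needed; the rest of the argument is a clean averaging bound.
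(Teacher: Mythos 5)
This statement is a \emph{conjecture}, not a theorem. The paper does not prove it: it states the conjecture, attributes it to Khot and Regev~\cite{KR08}, and uses it as a black box, citing~\cite{KR08} for the fact that it is equivalent to the standard Unique Games Conjecture (Conjecture~\ref{conj:ug}). So there is no ``paper's own proof'' to compare against, and the honest response to a request to prove it is exactly what you chose --- to sketch that equivalence. Your easy direction (Conjecture~\ref{conj:ug1} $\Rightarrow$ Conjecture~\ref{conj:ug}) is fine.

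However, the hard direction as you have sketched it has a genuine gap, and it is not ``bookkeeping.'' Your argument needs $\delta = \eta/d$ where $d$ is the $W_B$-side degree, so that at most $\delta d |W_B| = \eta|W_B|$ vertices of $W_B$ touch an unsatisfied edge. To break the circular dependence $R \mapsto d \mapsto \delta \mapsto R$ you would need the degree $d$ to be bounded by a function of the soundness parameter alone, independently of $R$. That is not known to follow from UGC: degree reduction for Unique Games is not a trivial pruning step (subsampling preserves completeness but destroys soundness, because a union bound over $R^{|V|}$ labelings forces $d$ to grow with $R$; deterministic sparsification has the same issue), and bounded-degree UGC is a genuinely separate strengthening. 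Your claim that ``the biregularization step produce instances whose degree $d$ can be taken \ldots\ constant once the UGC parameters are fixed'' is precisely the unresolved point, and waving it away as a parameter-consistency chore is where the proposal breaks. Khot and Regev's actual argument does not proceed by this route; it is a nontrivial theorem in their paper that relies on a more careful construction (including soundness against multi-labelings), not a single Markov bound over $W_B$.
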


\paragraph*{General Reduction.}
We now introduce our reduction from \ug\ to our problems \globaldoublecut, \stdoublecut, and \kvc. 
We constructed three dictatorship tests for 
$\calD^{\sfgd}_{R, \epsilon}$, $\calD^{\sfst}_{a, b, R, \epsilon}$, $\calG_{k, R, \epsilon}$.
The first two are directed and $\calG_{k, R, \epsilon}$ is undirected, but they are all vertex-weighted. 
Fix a problem and the parameters, and let $\calD = (\VD, \ED) $ be the dictatorship test with the weight function $c : \VD \to \R$.

Given an instance $\mathcal{L}$ of \ug, we describe how to reduce it to a graph $G = (V_G, E_G)$. 
$G$ will be directed or undirected depending on the problem we reduce to. 
We assign to each vertex $w \in W_B$ a copy of $\VD$ and for each {\em terminal} of $\VD$, merge all $|W_B|$ copies into one. 
The merged terminals are $\{ s, t \}$ for \globaldoublecut\ and \stdoublecut, and \vc\ has no terminal. 
For any $w \in W_B, v \in \VD$, the vertex weight of $(w, v)$  is $\w(v) / |W_B|$, so that the sum of vertex weights (except terminals) is $4$ for \globaldoublecut, $ab$ for \stdoublecut, and $k$ for \kvc. Let $\ID := \VD \setminus \{s, t \}$ for \globaldoublecut\ and \stdoublecut, and $\ID := \VD$ for \kvc. 

For a permutation $\sigma : [R] \rightarrow [R]$, let $x \circ \sigma := (x_{\sigma(1)}, \dots, x_{\sigma(R)})$. 
To describe the set of edges, consider the random process where $u \in U_B$ is sampled uniformly at random, and its two neighbors $w^1, w^2$ are independently sampled. 
For each edge $(v^{i_1}_{x^1}, v^{i_2}_{x^2}) \in E_{\calD}$, we create an edge $((w_1, v^{i_1}_{x^1 \circ \pi(u, w^1)}), (w_2, v^{i_2}_{x^2 \circ \pi(u, w^2)}))$. Call this edge is {\em created by} $u$. 
For each edge incident on a terminal (i.e., $(X, v^i_x)$ or $(v^i_x, X)$ where $X \in \{ s, t \}$), we add the corresponding edge $(X, (w, v^i_x))$ or $((w, v^i_x), X)$ for each $w \in W_B$.

\paragraph*{Completeness.}
Suppose there exists a labeling $l$ and a subset $W' \subseteq W_B$ with $|W'| \geq (1 - \eta)|W_B|$ such that $l$ satisfy every edge incident on $W'$. 

\paragraph*{\globaldoublecut.}
Let $D = (V_D, A_D)$ be the graph constructed in Section~\ref{sec:hardness_global_double_cut} and $I_D$ be $V_D \setminus \{ s, t \}$. 
For every $w \in W'$, we remove the following vertices.
\[
\{ (w, v^{\alpha}_x) : \alpha \in I_D, x_{l(w)} = * \mbox { or } 0 \}. 
\]
For $w \notin W'$, we remove every vertex in $\{ w \} \times \ID$. 
The total weight is at most $4(1 + 2 \epsilon) / 3 + 4\eta$. 
The completeness analysis for the dictatorship test ensures that no vertex in $V_G$ can reach both $s$ and $t$. 
The proof of Lemma~\ref{lem:completeness_global_double_cut} works verbatim --- for 
each vertex $(w_j, v^{\alpha_j}_{x_j})$ with $x_j \in \Omega^R$, consider $(x_j)_{l(w_j)}$ in place of $(x_j)_q$. 

\paragraph*{\stdoublecut.}
Let $D = (V_D, A_D)$ be the graph constructed in Section~\ref{sec:hardness_st_double_cut} and $I_D$ be $V_D \setminus \{ s, t \}$. 
For every $w \in W'$, we remove the following vertices.
\[
\{ (w, v^{\alpha}_x) : \alpha \in I_D, x_{l(w)} = * \mbox { or } 0 \}. 
\]
For $w \notin W'$, we remove every vertex in $\{ w \} \times \ID$. 
The total weight is at most 
$ab / (b - 2a) + ab \epsilon + ab \eta$. 
The completeness analysis for the dictatorship test ensures that no vertex in $V_G$ can reach both $s$ and $t$. 
The proof of Lemma~\ref{lem:completeness_st_double_cut} works verbatim --- for 
each vertex $(w_j, v^{\alpha_j}_{x_j})$ with $x_j \in \Omega^R$, consider $(x_j)_{l(w_j)}$ in place of $(x_j)_q$. 

\paragraph*{\kvc.}
For every $w \in W'$, we remove the following vertices.
\[
\{ (w, v^{\alpha}_x) : \alpha \in [k], x_{l(w)} = * \mbox { or } 0 \}. 
\]
For $w \notin W'$, we remove every vertex in $\{ w \} \times \VD$. 
The total weight is at most $k(1 + \epsilon) / 2 + k\eta$. 
The completeness analysis for the dictatorship test, Lemma~\ref{lem:completeness_vc}, ensures that every edge of $G$ is covered --- for each edge $\{ (w, v^{i}_{x}), (w', v^{j}_{y}) \}$, consider $x_{l(w)}$ and $y_{l(w')}$ in place of $x_q$ and $y_q$.

\paragraph*{Soundness.}
We present the soundness analysis. We first discuss how to extract an influential coordinate for each $u \in U_B$.

\paragraph*{\globaldoublecut\ and \stdoublecut.}
Fix an arbitrary $C \subseteq V_G \setminus \{ s, t \}$, and consider the graph after removing vertices in $C$. 
We will show that if $\w(C)$ is small and no vertex can reach both $s$ and $t$, we can decode influential coordinates for many vertices of the \ug\ instance. 
For \globaldoublecut, since every vertex in $V_G$ has an incoming arc from either $s$ or $t$, it implies that any solution to \globaldoublecut\ must reveal influential coordinates or $\w(C)$ must be large. 
Recall the graph $D = (V_D, A_D)$ constructed in Section~\ref{sec:hardness_global_double_cut} (for \globaldoublecut) or 
Section~\ref{sec:hardness_st_double_cut} (for \stdoublecut), and $I_D = V_D \setminus \{ s, t \}$. 

For each $w \in W_B$, $r \in \{ s, t \}$, $1 \leq j \leq |I_D|$, and a sequence $\bara = (\alpha_1, \dots, \alpha_j) \in (I_D)^j$, let $g_{w, r, j, \bara} : \Omega^R \rightarrow \{ 0, 1 \}$ such that $g_{w, i, j, \bara}(x) = 1$ if and only if there exists a path $p = ((w, v^{\alpha_1}_x), (w_2, v^{\alpha_2}_{x^2}), \dots, (w_{j}, v^{\alpha_{j}}_{x^{j}}), r)$ for some $w_2, \dots, w_{j} \in W_B$ and $x^2, \dots, x^j \in \Omega^R$.

For $u \in U_B, 1 \leq j \leq |I_D|$, and $\bara \in (I_D)^j$, let  $f_{u, r, j, \bara} : \Omega_R \rightarrow [0, 1]$ be such that
\[
f_{u, r, j, \bara} (x) = \E_{w \in N(u)} [g_{w, r, j, \bara} (x \circ \pi^{-1} (u, w)) ],
\]
where $N(u)$ is the set of neighbors of $u$ in the \ug\  instance. 

Let $S := 2(1 - \epsilon)$ (for \globaldoublecut) or $S := (2a - 1)(1 - \epsilon)$ (for \stdoublecut) be the lower bound on the weight in the soundness analysis of the respective dictatorship tests. Let $S' := (1 - \beta)S$ for some $\beta > 0$ that will be determined later,
and assume that the total weight of removed vertices is at most $S'$. 
Let $\gamma(u)$ be the expected weight of $C \cap (\{ w \} \times \ID)$, where $w$ is a random neighbor of $u$. 
Since the instance of \ug\ is biregular, 
\[
\E_{u \in U_B} [\gamma(u)] = \E_{u \in U_B} [\E_{w \in N(u)} [ \w(C \cap (\{ w \} \times \ID))]] = \E_{w \in W_B} [\w(C \cap (\{ w \} \times \ID))]
\leq S'  = (1 - \beta)S.
\]
Therefore, at least $\beta$ fraction of $u$'s have $\gamma(u) \leq S$. 
For such $u$, since no vertex can reach both $s$ and $t$, the soundness analysis for the dictatorship test shows that there exists $q \in [R], r \in \{s, t\}, 1 \leq j \leq |I_D|, \bara$ such that $\Inf_{q}^{\leq d} [f_{u, r, j, \bara}] \geq \tau$ ($d$ and $\tau$ do not depend on $u$).

\paragraph*{\kvc.}
Fix an arbitrary $C \subseteq V_G$, and consider the graph after removing vertices in $C$. 
We will show that if $\w(C)$ is small and every edge is removed, we can decode influential coordinates for many vertices of the \ug\ instance. 

For each $w \in W_B$ and $j \in [k]$, let $g_{w, j} : \Omega^R \rightarrow \{ 0, 1 \}$ such that $g_{w, j}(x) = 1$ if and only if $(w, v^j_x) \notin C$. 
For $u \in U_B$ and $1 \leq j \leq [k]$, let  $f_{u, j} : \Omega_R \rightarrow [0, 1]$ be such that
\[
f_{u, j} (x) = \E_{w \in N(u)} [g_{w, j} (x \circ \pi^{-1} (u, w)) ],
\]
where $N(u)$ is the set of neighbors of $u$ in the \ug\ instance. 

Let $S := (1 - \epsilon)(k - 1)$ be the lower bound on the weight in the soundness analysis of the dictatorship test. Let $S' := (1 - \beta)S$ for some $\beta > 0$ that will be determined later, and assume that the total weight of removed vertices is at most $S'$.
Let $\gamma(u)$ be the expected weight of $C \cap (\{ w \} \times \VD)$, where $w$ is a random neighbor of $u$. 
Since the instance of \ug\ is biregular, 
\[
\E_{u \in U_B} [\gamma(u)] = \E_{u \in U_B} [\E_{w \in N(u)} [ \w(C \cap (\{ w \} \times \ID))]] = \E_{w \in W_B} [\w(C \cap (\{ w \} \times \ID))]
\leq S'  = (1 - \beta)S.
\]
Therefore, at least $\beta$ fraction of $u$'s have $\gamma(u) \leq S$. 
For such $u$, since every edge is removed, the soundness analysis for the dictatorship test shows that there exists $q \in [R], 1 \leq j \leq [k]$ such that $\Inf_{q}^{\leq d} [f_{u, j}] \geq \tau$ ($d$ and $\tau$ do not depend on $u$).

\paragraph*{Finishing Up.} 
The above analyses of \globaldoublecut, \stdoublecut, \kvc\ can be abstracted as follows. Each vertex $u \in U_B$ is associated with functions $\{ f_{u, h} : \Omega^R \rightarrow [0, 1] \}_{h \in I}$ for some index set $I$ ($|I|$ is upper bounded by some constant for \globaldoublecut, some function of $a$ and $b$ for \stdoublecut, some function on $k$ on \kvc).  For at least $\beta$ fraction of $u \in U_B$,
there exist $i \in I$ and $q \in [R]$ such that $\Inf_{q}^{\leq d} [f_{u, i}] \geq \tau$. Set $l(u) = q$ for those vertices. 
Since
\begin{align*}
\Inf_{q}^{\leq d}(f_{u, i}) 
&= \sum_{\alpha_q \neq 0, |\alpha| \leq d} \widehat{f_{u, i}}(\alpha)^2 
= \sum_{\alpha_q \neq 0, |\alpha| \leq d} (\E_w [\widehat{f_{w, i}}(\pi(u, w)^{-1}(\alpha))]^2) \\
&\leq \sum_{\alpha_q \neq 0, |\alpha| \leq d} \E_w [\widehat{f_{w, i}}(\pi(u, w)^{-1}(\alpha))^2] = \E_w [\Inf_{\pi(u, w)^{-1}(q)}^{\leq d}(f_{w, i})],
\end{align*}
at least $\tau/2$ fraction of $u$'s neighbors satisfy $\Inf_{\pi(u, w)^{-1}(q)}^{\leq d} (f_{w, i}) \geq \tau / 2$. There are at most $2d / \tau$ coordinates with degree-$d$ influence at least $\tau / 2$ for a fixed $h$, so their union over $i \in I$ yields at most $2d \cdot |I| / \tau$ coordinates. Choose $l(w)$ uniformly at random among those coordinates (if there is none, set it arbitrarily). The above probabilistic strategy satisfies at least $\beta(\tau / 2)(\tau / (2d \cdot |I|) )$ fraction of all edges. Taking $\eta$ smaller than this quantity proves the soundness of the reductions. 

\paragraph*{Final Results.}
Combining our completeness and soundness analyses and taking $\epsilon$ and $\eta$ small enough, we prove our main results.

\paragraph*{\globaldoublecut.} It is hard to distinguish the following cases. 
\begin{enumerate}
\item Completeness: There is a $\{ s, t \}$-double cut of weight at most $4(1 + 2 \epsilon) / 3 + 4\eta$. 
\item Soundness: There is no global double cut of weight less than $2(1 - \epsilon)(1 - \beta)$. 
\end{enumerate}
The gap is 
\[
\frac{2(1 - \epsilon)(1 - \beta)}{\frac{4(1 + 2 \epsilon)}{3} + 4\eta}, 
\]
which approaches to $1.5$ by taking $\epsilon, \eta, \beta$ small. 
This proves Theorem~\ref{thm:NodeDoubleCut-hardness}. 

\paragraph*{\stdoublecut.} It is hard to distinguish the following cases. 
\begin{enumerate}
\item Completeness: There is a $\{ s, t \}$-double cut of weight at most $ab / (b - 2a) + ab \epsilon + ab \eta$. 
\item Soundness: There is no $\{ s, t \}$-double cut of weight less than $(2a - 1)(1 - \epsilon)(1 - \beta)$. 
\end{enumerate}
The gap is 
\[
\frac{(2a - 1)(1 - \epsilon)(1 - \beta)}{\frac{ab}{b - 2a} + ab \epsilon + ab \eta}, 
\]
which approaches to $2$ by taking $a$ large, $b$ larger, and $\epsilon, \eta, \beta$ small. 
This proves Theorem~\ref{thm:st-node-double-cut-hardness}.

\paragraph*{\kvc.} It is hard to distinguish the following cases. 
\begin{enumerate}
\item Completeness: There is a vertex cover of weight at most $k(1 + \epsilon) / 2 + k\eta$. 
\item Soundness: There is no vertex cover of weight less than $(k - 1)(1 - \epsilon)(1 - \beta)$. 
\end{enumerate}
The gap is 
\[
\frac{(k - 1)(1 - \epsilon)(1 - \beta)}{\frac{k(1 + \epsilon)}{2} + k\eta}, 
\]
which approaches to $2(k - 1) / k$ by taking $\epsilon, \eta, \beta$ small. In particular, it approaches to $4/ 3$ for $k = 3$ and $3 /2$ for $k = 4$.
Take large $r$ and small $\epsilon, \beta, \eta$. 
With Lemma~\ref{lem:node-3-cut-hardness}, this implies Theorem~\ref{thm:node-3-cut-hardness}. 
With Lemma~\ref{lem:single-terminal-bicut-hardness}, this implies Theorem~\ref{thm:single-terminal-bicut-hardness}. 
With Lemma~\ref{lem:node-bicut-hardness}, this implies Theorem~\ref{thm:node-bicut-hardness}.

\section{\textsc{EdgeLin3Cut} problems}
\label{sec:lin3cut-approx}
Given a directed graph $D=(V,E)$, a feasible solution to \srtlinthreecut\ in $D$ is a subset $F$ of arcs whose deletion from the graph eliminates all directed $s\rightarrow r$, $r\rightarrow t$ and $s\rightarrow t$ paths. One of our main tools used in the approximation algorithm for \edgebicut\ is a $3/2$-approximation algorithm for 
\stlinthreecut. We present this algorithm now.
For two sets $A,B\subseteq V$, let $\beta(A,B):=|\delta^{in}(A) \cup \delta^{in}(B)|$. 

\begin{proof}[Proof of Theorem~\ref{thm:s-star-t-lin-3-cut-algorithm}]
We first rephrase the problem in a more convenient way. 
\begin{lemma}\label{lem:st-lin-3-cut-rephrase}
\stlinthreecut\ in a directed graph $D=(V,E)$ is equivalent to 
\[\min\left\{\beta(A,B): t\in A\subset B\subseteq V-\{s\}\right\}.\]
\end{lemma}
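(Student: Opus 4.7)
My plan is to prove the equivalence by establishing inequalities in both directions between the optimal values of the two problems.

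\textbf{Direction 1 (hard direction, $\leq$).} Given any feasible solution $F$ to \stlinthreecut\ in $D$, there exists a witness node $r$ such that $D - F$ contains no $s\rightarrow r$, $r\rightarrow t$, or $s\rightarrow t$ path. I will define
\[
A := \{v \in V : v \text{ can reach } t \text{ in } D - F\},\qquad
B := \{v \in V : v \text{ can reach } t \text{ or } r \text{ in } D - F\}.
\]
I would then verify that $t \in A$, $r \in B\setminus A$ (so $A \subsetneq B$ strictly), and $s \notin B$ (since $s$ cannot reach $r$ or $t$). The key observation is that $A$ and $B$ are, by construction, closed under in-neighbors in $D - F$: if an arc $(u,v)$ with $v\in A$ and $u\notin A$ survived the deletion of $F$, then $u$ would reach $t$ too, contradicting $u \notin A$. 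Hence $\delta^{in}(A) \subseteq F$, and by the same argument $\delta^{in}(B) \subseteq F$. Therefore $\beta(A,B) = |\delta^{in}(A) \cup \delta^{in}(B)| \leq |F|$.

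\textbf{Direction 2 (easy direction, $\geq$).} Given $t \in A \subsetneq B \subseteq V - \{s\}$, I set $F := \delta^{in}(A) \cup \delta^{in}(B)$, so $|F| = \beta(A,B)$. Pick any $r \in B \setminus A$ (nonempty by strict containment). In $D - F$ no arc enters $A$ and no arc enters $B$, so any node outside $A$ cannot reach any node inside $A$, and similarly for $B$. Since $s \notin B \supseteq A$, we have no $s \rightarrow t$ and no $s \rightarrow r$ path. Since $r \in B \setminus A$ lies outside $A$ and $t \in A$, there is no $r \rightarrow t$ path either. Hence $F$ is feasible for \stlinthreecut\ with witness $r$.

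\textbf{Conclusion and anticipated subtleties.} Combining both directions yields the equality of the two minima. The only point that requires care is the strictness $A \subsetneq B$: in Direction 1 we need a witness $r$ that genuinely separates $B$ from $A$, and in Direction 2 strict containment is exactly what allows us to choose a valid $r$. There is no real combinatorial obstacle beyond writing the reachability closure argument cleanly; I expect the proof itself to be a short half-page.
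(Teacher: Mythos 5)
Your proof is correct and matches the paper's argument essentially step for step: the paper likewise takes $A$ to be the set of nodes that can reach $t$ in $D-F$ and $B = A \cup R$ where $R$ is the set of nodes that can reach $r$, and argues $\delta^{in}(A)\cup\delta^{in}(B)\subseteq F$; the reverse direction is also identical. The only cosmetic difference is that you carefully spell out why $A\subsetneq B$ is strict (via $r\in B\setminus A$), which the paper leaves implicit.
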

\begin{proof}
Let $F\subseteq E$ be an optimal solution for \stlinthreecut\ in $D$ and let $(A,B):=$ argmin$\{\beta(A,B): t\in A\subset B\subseteq V-s\}$. Fix an arbitrary node $r\in B-A$. Since the deletion of $\delta^{in}(A) \cup \delta^{in}(B)$ results in a graph with no directed path from $s$ to $r$, from $r$ to $t$ and from $s$ to $t$, the edge set $\delta^{in}(A)\cup\delta^{in}(B)$ is a feasible solution to \srtlinthreecut\ in $D$, thus implying that $|F|\leq \beta(A,B)$.

On the other hand, $F$ is a feasible solution for \srtlinthreecut\ in $D$ for some $r\in V-\{s,t\}$. 
Let $A$ be the set of nodes that can reach $t$ in $D-F$, and $R$ be the set of nodes that can reach $r$ in $D-F$. Then, $F\supseteq \delta^{in}(A)$. Moreover, $F\supseteq \delta^{in}(R\cup A)$ since $R\cup A$ has in-degree 0 in $D-F$, and $s$ is not in $R \cup A$ because it cannot reach 
$r$ and $t$ in $D-F$. Therefore, taking $B=R\cup A$ we get $F\supseteq \delta^{in}(A)\cup \delta^{in}(B)$. 
 \end{proof}

Our algorithm for determining an optimal pair $(A,B):=$ argmin$\{\beta(A,B): t\in A\subset B\subseteq V-s\}$ proceeds as follows: We build a chain $\mathcal{C}$ of $\overline{s}t$-sets with the property that, for some value $k\in\mathbb{Z}_+$,
\begin{enumerate}[(i)]
\item $\mathcal{C}$ contains only cuts of value at most $k$, and \label{it:1}
\item every $\overline{s}t$-set of cut value strictly less than $k$ is in $\mathcal{C}$. \label{it:2}
\end{enumerate}

We start with $k$ being the minimum $\overline{s}t$-cut value and $\mathcal{C}$ consisting of a single minimum $\overline{s}t$-cut. In a general step, we find two $\overline{s}t$-sets: a minimum $\overline{s}t$-cut $Y$ compatible with the current chain $\mathcal{C}$, i.e. $\mathcal{C}\cup\{Y\}$ forming a chain, and a minimum $\overline{s}t$-cut $Z$ not compatible with the current chain $\mathcal{C}$, i.e. crossing at least one member of $\mathcal{C}$. These two sets can be found in polynomial time. Indeed, let $t\in C_1\subset \dots, \subset C_q\subseteq V-s$ denote the members of $\mathcal{C}$. Find a minimum cut $Y_i$ with $C_i\subseteq Y_i\subseteq V \setminus C_{i+1}$ for $i=1,\dots,q$, and choose $Y$ to be a minimum one among these cuts. Concerning $Z$, for each pair $x,y$ of nodes with $y\in C_i\subseteq V-x$ for some $i\in\{1,\dots,q\}$, find a minimum cut $Z_{xy}$ with $\{t,x\}\subseteq Z_{xy}\subseteq V-\{s,y\}$, and choose $Z$ to be a minimum one among these cuts. If $d^{in}(Y) \leq d^{in}(Z)$, then we add $Y$ to $\mathcal{C}$, and set $k$ to $d^{in}(Y)$; otherwise we set $k$ to $d^{in}(Z)$, and stop. 

Let $\mathcal{C}$ denote the chain constructed by the algorithm, and let $Y$ be an arbitrary set crossing some of its members.

\begin{claim}
$d^{in}(Y)\geq d^{in}(C)$ for all $C\in\mathcal{C}$.
\end{claim}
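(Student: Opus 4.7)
The plan is to derive the claim directly from the two invariants (i) and (ii) that govern the construction of $\mathcal{C}$. First I would verify that both invariants are preserved throughout the iterative procedure. Invariant (i) is immediate from the update rule, since whenever a set is added to $\mathcal{C}$ its cut value becomes the new $k$, and $k$ never decreases. For invariant (ii), I would argue inductively: at a step where $Y$ is added and we set $k' := d^{in}(Y) \leq d^{in}(Z)$, any $\overline{s}t$-set $X$ with $d^{in}(X) < k'$ either is chain-compatible with the old $\mathcal{C}$ — in which case minimality of $Y$ among such cuts yields $d^{in}(X)\geq d^{in}(Y)=k'$, contradiction — or it crosses some member of $\mathcal{C}$, in which case minimality of $Z$ gives $d^{in}(X)\geq d^{in}(Z)\geq d^{in}(Y)=k'$, again a contradiction. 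The same dichotomy handles the terminating step, where $k$ is reset to $d^{in}(Z)$.

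With these invariants in place, the claim follows in two lines. Since $\mathcal{C}$ is a chain, no member of $\mathcal{C}$ is crossed by any other member; hence any set $Y$ that crosses some $C\in\mathcal{C}$ cannot itself belong to $\mathcal{C}$. Invariant (ii) then forces $d^{in}(Y)\geq k$, while invariant (i) gives $d^{in}(C)\leq k$ for every $C\in\mathcal{C}$, so $d^{in}(Y)\geq k\geq d^{in}(C)$, as desired.

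The main conceptual point — and the only place where care is needed — is the case analysis in the preservation of invariant (ii), specifically the observation that the stopping condition $d^{in}(Y)>d^{in}(Z)$ still leaves (ii) intact when $k$ is updated to $d^{in}(Z)$. Everything else is bookkeeping about what ``compatible'' and ``crossing'' mean relative to a chain, and the monotonicity of $k$ across iterations.
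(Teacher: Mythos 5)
Your argument is correct but inverts the paper's logical dependency. The paper proves the claim directly by contradiction: supposing $d^{in}(Y)<d^{in}(C)$ for some $C\in\mathcal{C}$, it rewinds to the iteration at which $C$ was appended; at that point $C$ was the minimum cut compatible with the chain $\mathcal{C}'$ built so far, so $Y$ (having strictly smaller value) cannot be compatible with $\mathcal{C}'$ and must cross it, whence the stopping rule would have fired before $C$ was admitted. You instead establish invariants (i) and (ii) inductively over the iterations and then read the claim off in two lines. Both are valid, but note that the paper uses the claim precisely to justify (i) and (ii) afterwards --- your route proves those invariants independently, so the claim becomes a consequence rather than a stepping stone, and the paper's subsequent ``the claim implies (i) and (ii)'' sentence would be redundant under your organization.

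Two spots in your write-up are looser than the argument needs to be. First, the assertion that ``$k$ never decreases'' is not free at the terminating step, where $k$ is reset to $d^{in}(Z)$ rather than to a newly admitted cut's value; one must observe that $Z$ either crossed the chain before the most recent $Y$ was added (so was a $Z$-candidate then, giving $d^{in}(Z)\geq d^{in}(Z_{\mathrm{prev}})\geq k_{\mathrm{old}}$) or crossed only that most recent $Y$ (so was a $Y$-candidate then, giving $d^{in}(Z)\geq k_{\mathrm{old}}$). Second, in the inductive step for (ii), the ``contradiction'' you derive in the chain-compatible case only arises once you restrict to $X\notin\mathcal{C}\cup\{Y\}$; as written it reads as if every low-value compatible $X$ leads to a contradiction, which is false --- such an $X$ can perfectly well lie in $\mathcal{C}$, which is exactly the conclusion sought. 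The intended phrasing is that any $X$ with $d^{in}(X)<k'$ outside $\mathcal{C}\cup\{Y\}$ is either a $Y$-candidate or a $Z$-candidate, both of which force $d^{in}(X)\geq k'$.
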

\begin{proof}
Suppose indirectly that $d^{in}(Y)<d^{in}(C)$ for some $C\in\mathcal{C}$. Let $\mathcal{C}'$ denote the chain consisting of those members of $\mathcal{C}$ that were added before $C$. As $C$ is a set of minimum cut value compatible with $\mathcal{C}'$, $Y$ crosses at least one member of $\mathcal{C}'$. Hence, by $d^{in}(Y)<d^{in}(C)$, the algorithm stops before adding $C$, a contradiction. 
\end{proof}

The claim implies that $\mathcal{C}$ satisfies \eqref{it:1} and \eqref{it:2} with the $k$ obtained at the end of the algorithm. Indeed, \eqref{it:1} is obvious from the construction, while \eqref{it:2} follows from the claim and the fact that $\mathcal{C}$ contains all cuts of value strictly less than $k$ that are compatible with $\mathcal{C}$.

By the above, the procedure stops with a chain $\mathcal{C}$ containing all $\overline{s}t$-sets of cut value less than $k$, and an $\overline{s}t$-set $Z$ of cut value exactly $k$ which crosses some member $X$ of $\mathcal{C}$. If the optimum value of our problem is less than $k$, then both members of the optimal pair $(A,B)$ belong to the chain $\mathcal{C}$, and we can find them by taking the minimum of $\beta(A',B')$ where $A'\subset B'$ with $A',B'\in \mathcal{C}$.

We can thus assume that the optimum is at least $k$. As $d^{in}(Z)=k$ and $d^{in}(X)\leq k$, the submodularity of the in-degree function implies $d^{in}(X\cap Z)+d^{in}(X\cup Z)\leq d^{in}(Z)+d^{in}(X) \leq 2k$. Hence at least one of $d^{in}(X\cap Z)\leq k$ and $d^{in}(X\cup Z)\leq k$ holds. As $d(X\setminus Z, X \cap Z)+d(Z\setminus X, X \cap Z)\leq d^{in}(X\cap Z)$ and $d(V\setminus (X \cup Z), X\setminus Z)+d(V\setminus (X \cup Z), Z\setminus X)\leq d^{in}(X\cup Z)$, at least one of the following four possibilities is true:

\begin{enumerate}
\item $d^{in}(X \cap Z) \leq k$ and $d(X\setminus Z, X \cap Z) \leq \frac{1}{2}k$. Choose $A=X \cap Z$, $B=X$. Then $\beta(A,B)=d(X\setminus Z,X\cap Z)+d^{in}(X)\leq \frac{1}{2}k + k = \frac{3}{2}k$.
\item $d^{in}(X \cap Z) \leq k$ and $d(Z\setminus X, X \cap Z) \leq \frac{1}{2}k$. Choose $A=X \cap Z$, $B=Z$. Then $\beta(A,B)=d(Z\setminus X,X\cap Z)+d^{in}(Z)\leq \frac{1}{2}k + k = \frac{3}{2}k$.
\item $d^{in}(X \cup Z) \leq k$ and $d(V\setminus (X \cup Z), X\setminus Z) \leq \frac{1}{2}k$. Choose $A=Z$, $B=X \cup Z$. Then $\beta(A,B)=d^{in}(Z)+d(V\setminus (X\cup Z),X\setminus Z)\leq k+ \frac{1}{2}k = \frac{3}{2}k$.
\item $d^{in}(X \cup Z) \leq k$ and $d(V\setminus (X \cup Z),Z\setminus X) \leq \frac{1}{2}k$. Choose $A=X$, $B=X \cup Z$. Then $\beta(A,B)=d^{in}(X)+d(V\setminus (X\cup Z),Z\setminus X)\leq k+ \frac{1}{2}k = \frac{3}{2}k$.
\end{enumerate}

Thus a pair $(A,B)$ can be obtained by 
taking the minimum among the four possibilities above and $\beta(A',B')$ where $A'\subset B'$ with $A',B'\in \mathcal{C}$, 
concluding the proof of the theorem.

\end{proof}

Next, we show that \stsepkcut\ is solvable in polynomial time if $k$ is a fixed constant.

Let $G=(V,E)$ be an undirected graph. Let the minimum size of an $\{s,t\}$-cut in $G$ be denoted by $\lambda_{G}(s,t)$. For two subsets of nodes $X,Y$, let $d(X,Y)$ denote the number of edges between $X$ and $Y$ and let $d(X):=d(X,V\setminus X)$. 
The cut value of a partition $\{V_1,\dots,V_q\}$ of $V$ is defined to be the total number of crossing edges, that is, $(1/2)\sum_{i=1}^q d(V_i)$, and is denoted by $\gamma(V_1,\dots,V_q)$. 
Let $\gamma^{q}(G)$ denote the value of an optimum \textsc{Edge-$q$-Cut} in $G$, i.e.,  
\[
\min\left\{\gamma(V_1,\ldots, V_q): V_i\neq \emptyset\ \forall\ i\in [q], V_i\cap V_j=\emptyset\ \forall\ i,j\in [q], \cup_{i=1}^q V_i = V\right\}.
\]

\begin{proof}[Proof of Theorem~\ref{thm:st-k-cut-algorithm}]
Let $\gamma^*$ denote the optimum value of \stsepkcut\ in $G=(V,E)$ and let $H$ denote the graph obtained from $G$ by adding an edge of infinite capacity between $s$ and $t$.
The algorithm is based on the following observation (we recommend the reader to consider $k=3$ for ease of understanding):

\begin{proposition} \label{cl:bound}
Let $\{V_1,\dots,V_k\}$ be a partition of $V$ corresponding to an optimal solution of \stsepkcut, where $s$ is in $V_{k-1}$ and $t$ is in $V_k$. Then $\gamma(V_1,\dots,V_{k-2},V_{k-1}\cup V_k)\leq 2\gamma^{k-1}(H)$.
\end{proposition}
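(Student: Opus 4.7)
The plan is to exploit an optimal $(k-1)$-cut of $H$ as a blueprint and refine it into two candidate \stsepkcut\ solutions in $G$, then average. Let $\{U_1,\dots,U_{k-1}\}$ be an optimal $(k-1)$-cut of $H$; because the arc $(s,t)$ in $H$ has infinite capacity, $s$ and $t$ must lie in a common part, which we may assume is $U_{k-1}$. Using the optimal \stsepkcut\ partition $\{V_1,\dots,V_k\}$ as a guide, form the two $k$-partitions
\[
\pi_A := \{U_1,\dots,U_{k-2},\, V_{k-1}\cap U_{k-1},\, U_{k-1}\setminus V_{k-1}\}, \qquad
\pi_B := \{U_1,\dots,U_{k-2},\, V_k\cap U_{k-1},\, U_{k-1}\setminus V_k\}.
\]
Both have all parts nonempty, and in each of them $s$ and $t$ lie in different parts ($s\in V_{k-1}\cap U_{k-1}$ and $t\in U_{k-1}\setminus V_{k-1}$ for $\pi_A$, and symmetrically for $\pi_B$); so both are feasible solutions of \stsepkcut\ in $G$.

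Next I would bound the cut values. The extra edges cut by splitting $U_{k-1}$ in $\pi_A$ are precisely the edges of $G$ between $V_{k-1}\cap U_{k-1}$ and $U_{k-1}\setminus V_{k-1}$; this is a subset of $\delta(V_{k-1})$, hence of size at most $d(V_{k-1})$. Symmetrically, the split in $\pi_B$ contributes at most $d(V_k)$. Averaging,
\[
\gamma^{*} \;\le\; \frac{1}{2}\bigl(\gamma(\pi_A)+\gamma(\pi_B)\bigr) \;\le\; \gamma^{k-1}(H)+\frac{1}{2}\bigl(d(V_{k-1})+d(V_k)\bigr).
\]

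To close, count edges around $V_{k-1}\cup V_k$: each edge between $V_{k-1}$ and $V_k$ contributes twice to $d(V_{k-1})+d(V_k)$ while each edge leaving $V_{k-1}\cup V_k$ contributes once, giving $d(V_{k-1})+d(V_k)=2d(V_{k-1},V_k)+d(V_{k-1}\cup V_k)$. Combining this with the identity $\gamma(V_1,\dots,V_{k-2},V_{k-1}\cup V_k)=\gamma^{*}-d(V_{k-1},V_k)$ turns the displayed inequality into
\[
\gamma(V_1,\dots,V_{k-2},V_{k-1}\cup V_k)\;\le\;\gamma^{k-1}(H)+\frac{1}{2}\,d(V_{k-1}\cup V_k).
\]
Since $d(V_{k-1}\cup V_k)=\sum_{i\le k-2}d(V_i,V_{k-1}\cup V_k)$ is a nonnegative summand of $\gamma(V_1,\dots,V_{k-2},V_{k-1}\cup V_k)$, we have $d(V_{k-1}\cup V_k)\le \gamma(V_1,\dots,V_{k-2},V_{k-1}\cup V_k)$; substituting and rearranging yields $\gamma(V_1,\dots,V_{k-2},V_{k-1}\cup V_k)\le 2\gamma^{k-1}(H)$. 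The delicate point is to split $U_{k-1}$ using \emph{both} $V_{k-1}$ and $V_k$ and then average: a single refinement would leave the cross term $d(V_{k-1},V_k)$ uncancelled, and the final self-referential bound on $d(V_{k-1}\cup V_k)$ would not close.
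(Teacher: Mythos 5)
Your proof is correct and takes a genuinely different route from the paper's. The paper's argument also starts from an optimal $(k-1)$-cut of $H$ with $s,t$ together in a part $W_{k-1}$, but it then refines that part by a minimum $s$-$t$ cut in $G[W_{k-1}]$, obtaining $\gamma^* \le \gamma^{k-1}(H) + \lambda_{G[W_{k-1}]}(s,t)$, and separately invokes Menger's theorem to show that any $\lambda_G(s,t)$ edge-disjoint $s$-$t$ paths force $\gamma^* \ge \lambda_G(s,t) + \tfrac12\gamma(V_1,\dots,V_{k-2},V_{k-1}\cup V_k)$; the two inequalities then combine. You instead refine $U_{k-1}$ directly along the (unknown) optimal sets $V_{k-1}$ and $V_k$ to get two explicit feasible $\{s,t\}$-separating $k$-cuts, observe that the added split edges live inside $\delta(V_{k-1})$ and $\delta(V_k)$ respectively, and average. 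This avoids Menger's theorem entirely and replaces the path-counting step with elementary edge counting. The two proofs converge to the same self-referential inequality $\gamma(V_1,\dots,V_{k-2},V_{k-1}\cup V_k)\le \gamma^{k-1}(H)+\tfrac12\gamma(V_1,\dots,V_{k-2},V_{k-1}\cup V_k)$. The paper's route has the incidental advantage that its intermediate construction (refining $W_{k-1}$ by a min $s$-$t$ cut) is exactly what the subsequent algorithm does, so the proof of the bound and the correctness of the algorithm share machinery; your route is a pure existence argument, slightly shorter and more elementary, but you correctly note the averaging over both $V_{k-1}$ and $V_k$ is essential to cancel $d(V_{k-1},V_k)$.
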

\begin{proof}
Let $W_1,\dots,W_{k-1}$ be a minimum $(k-1)$-cut in $H$. Clearly, $s$ and $t$ are in the same part, so we may assume that they are in $W_{k-1}$. Let $U_1,U_2$ be a minimum $\{s,t\}$-cut in $G[W_{k-1}]$. Then $\{W_1,\dots,W_{k-2},\allowbreak U_1,U_2\}$ gives an $\{s,t\}$-separating $k$-cut, showing that
\begin{equation}
\gamma^*\leq\gamma(W_1,\dots,W_{k-2},U_1,U_2)=\gamma^{k-1}(H)+\lambda_{G[W_{k-1}]}(s,t). \label{eq:1}
\end{equation}



By Menger's theorem, we have $\lambda_G(s,t)$ pairwise edge-disjoint paths $P_1,\dots,\allowbreak P_{\lambda_G(s,t)}$ between $s$ and $t$ in $G$. Consider one of these paths, say $P_i$. If all nodes of $P_i$ are from $V_{k-1}\cup V_{k}$, then $P_i$ has to use at least one edge from $\delta(V_{k-1},V_k)$. Otherwise, $P_i$ uses at least two edges from $\displaystyle\delta(V_1\cup\dots\cup V_{k-2})\cup\bigcup_{\substack{i,j\leq k-2 \\ i\neq j}}\delta(V_i,V_j)$. Hence the maximum number of pairwise edge-disjoint paths between $s$ and $t$ is \[\lambda_G(s,t)\le d(V_{k-1},V_k)+\frac{1}{2}\left(d(V_1\cup\dots\cup V_{k-2})+\sum_{\substack{i,j\leq k-2 \\ i\neq j}}d(V_i,V_j)\right).\]
Thus, we have 
\begin{align*}
\gamma^*
{}&{}=
d(V_{k-1},V_k)+d(V_1\cup\dots\cup V_{k-2})+\sum_{\substack{i,j\leq k-2 \\ i\neq j}}d(V_i,V_j) \\
{}&{}\geq
\lambda_G(s,t)+\frac{1}{2}\left(d(V_1\cup\dots\cup V_{k-2})+\sum_{\substack{i,j\leq k-2 \\ i\neq j}}d(V_i,V_j)\right)\\
{}&{}=
\lambda_G(s,t)+\frac{1}{2}\gamma(V_1,\dots,V_{k-2},V_{k-1}\cup V_k)\\
{}&{}\ge 
\lambda_{G[W_{k-1}]}(s,t)+\frac{1}{2}\gamma(V_1,\dots,V_{k-2},V_{k-1}\cup V_k)
\end{align*}
that is,
\begin{equation}
\gamma^*\geq \lambda_{G[W_{k-1}]}(s,t)+\frac{1}{2}\gamma(V_1,\dots,V_{k-2},V_{k-1}\cup V_k). \label{eq:2}
\end{equation}
By combining \eqref{eq:1} and \eqref{eq:2}, we get $\gamma(V_1,\dots,V_{k-2},V_{k-1}\cup V_k)\leq 2\gamma^{k-1}(H)$, proving the proposition.

\end{proof}
Karger and Stein \cite{KS96} showed that the number of feasible solutions to \textsc{Edge-$k$-cut} in $G$ with value at most $2\gamma^{k}(G)$ is $O(n^{4k})$. All these solutions can be enumerated in  polynomial-time for fixed $k$ \cite{KS96,KM94}. This observation together with Proposition~\ref{cl:bound} gives the following algorithm for finding an optimal solution to \textsc{$\{s,t\}$-SepEdge$k$Cut}:
\begin{description}
\item[\textbf{Step 1.}] Let $H$ be the graph obtained from $G$ by adding an edge of infinite capacity between $s$ and $t$. In $H$, enumerate all feasible solutions to \textsc{Edge-$(k-1)$-Cut}---namely the vertex partitions $\{W_1,\dots,W_{k-1}\}$---whose cut value $\gamma_H(W_1,\ldots,W_{k-1})$ is at most $2\gamma^{k-1}(H)$. Without loss of generality, assume $s,t\in W_{k-1}$.
\item[\textbf{Step 2.}] For each feasible solution to \textsc{Edge-$(k-1)$-Cut} in $H$ listed in Step 1, find a minimum $\{s,t\}$-cut in $G[W_{k-1}]$, say $U_1,U_2$.
\item[\textbf{Step 3.}] Among all feasible solutions $\{W_1,\dots,W_{k-1}\}$ to \textsc{Edge-$(k-1)$-Cut} listed in Step 1 and the corresponding $U_1,U_2$ found in Step 2, return the $k$-cut $\{W_1,\dots,W_{k-2},U_1,U_2\}$ with minimum $\gamma(W_1,\ldots, W_{k-2},U_1, U_2)$. 
\end{description}

The correctness of the algorithm follows from Proposition \ref{cl:bound}: one of the choices enumerated in Step 1 will correspond to the partition $(V_1,\ldots, V_{k-2},V_{k-1}\cup V_k)$, where $(V_1,\ldots, V_k)$ is the partition corresponding to the optimal solution. 
\end{proof}


\section{Approximation for \edgebicut}
\label{sec:bicut-approx}

In this section we describe an efficient $(2-1/448)$-approximation algorithm for \edgebicut\ (Theorem \ref{thm:bicut-algorithm}).
We recall that in \edgebicut, the goal is to find the smallest number of edges in a directed graph whose deletion ensures
that there exist two distinct nodes $s$ and $t$ such that $s$ cannot reach $t$ and $t$ cannot reach $s$.
An equivalent formulation of the problem that is convenient for our purposes is as follows: Two sets $A$ and $B$ are called \emph{uncomparable} if $A\setminus B\neq \emptyset$ and $B\setminus A\neq \emptyset$.
Given a directed graph $D=(V,E)$, \edgebicut\ is equivalent to finding an uncomparable pair $A,B \subseteq V$ with minimum $|\delta^{in}(A)\cup \delta^{in}(B)|$. Indeed, if $A$ and $B$ are uncomparable and we remove $\delta^{in}(A)\cup \delta^{in}(B)$ from the directed graph, then nodes in $A \setminus B$ cannot reach nodes in $B \setminus A$ and vice versa. On the other hand, if $s$ cannot reach $t$ and $t$ cannot reach $s$, then the set of nodes that can reach $s$ and the set of nodes that can reach $t$ are uncomparable, and have in-degree $0$.

We introduce some definitions and notation in order to describe the $(2-\e)$-approximation algorithm, where the value of $\e$ is computed at the end of the proof.
\begin{definition}
For $A,B \subseteq V$, let $\bicut(A,B) := |\delta^{in}(A)\cup \delta^{in}(B)|$ and let $\cross(A,B) := d^{in}(A)+d^{in}(B)$. Furthermore, let
\begin{align*}
\OPT &:=\min\set{ \bicut(A,B)\ |\ \text{$A$ and $B$ are uncomparable}},\\
\cross &:=\min\set{ \cross(A,B)\ |\ \text{$A$ and $B$ are uncomparable}}.
\end{align*}
The pair where the latter value is attained is called the \emph{minimum uncomparable cut-pair}.
\end{definition}

\begin{definition}
If $c$ is a capacity function on a directed graph $D$, then $d^{in}_c(U) = \sum_{e\in \delta^{in}(U)} c(e)$ is the sum of the capacities of incoming edges of $U$. Similarly, $d^{out}_c(U) = \sum_{e\in \delta^{out}(U)} c(e)$. For two disjoint set of vertices $A$ and $B$, the number of edges from $A$ to $B$ is defined as $d(A,B) = |\delta^{out}(A)\cap\delta^{in}(B)|$.
\end{definition}

Clearly, $\cross(A,B) \geq \bicut(A,B)$ for any $A,B$. The following lemma shows that $\cross$ can be computed efficiently.
This means that we immediately have a $(2-\e)$-approximation if $\cross \leq (2-\e)\OPT$.

\begin{lemma}\label{lem:cross}
For a directed graph $D=(V,E)$, there exists a polynomial time algorithm to find a minimum uncomparable cut-pair. 
\end{lemma}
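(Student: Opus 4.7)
The plan is to reduce the problem to $O(n^2)$ standard directed minimum cut computations by enumerating a pair of ``witness'' vertices that distinguish the two sides of an uncomparable cut-pair. Concretely, I would iterate over all ordered pairs $(a,b)\in V\times V$ with $a\neq b$; for each such pair, the intended meaning is that $a\in A\setminus B$ and $b\in B\setminus A$ in the candidate uncomparable cut-pair $(A,B)$ associated with $(a,b)$.

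For each fixed pair $(a,b)$, I would compute two quantities independently. Let $A^{*}_{a,b}$ be a minimizer of $d^{in}(A)$ over all sets $A\subseteq V$ with $a\in A$ and $b\notin A$; similarly let $B^{*}_{a,b}$ be a minimizer of $d^{in}(B)$ over all sets $B\subseteq V$ with $b\in B$ and $a\notin B$. Each of these two problems is a standard directed $s$-$t$ minimum cut: minimizing $d^{in}(A)$ subject to $a\in A$, $b\notin A$ amounts to the minimum cut separating $b$ from $a$ (i.e., a minimum $b\to a$ cut, obtained from maximum $b\to a$ flow), and analogously for $B^{*}_{a,b}$. These are both solvable in polynomial time. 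Finally, among all $(a,b)$, I would return the pair $(A^{*}_{a,b}, B^{*}_{a,b})$ minimizing $d^{in}(A^{*}_{a,b}) + d^{in}(B^{*}_{a,b})$.

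For correctness, observe first that every pair the algorithm outputs is uncomparable: by construction $a\in A^{*}_{a,b}\setminus B^{*}_{a,b}$ and $b\in B^{*}_{a,b}\setminus A^{*}_{a,b}$, so $A^{*}_{a,b}\setminus B^{*}_{a,b}\neq \emptyset$ and $B^{*}_{a,b}\setminus A^{*}_{a,b}\neq \emptyset$. Hence the minimum over all pairs returned is an upper bound on $\cross$. For the reverse inequality, let $(A^{*},B^{*})$ be an optimal uncomparable cut-pair and pick any $a\in A^{*}\setminus B^{*}$ and $b\in B^{*}\setminus A^{*}$, which exist by uncomparability. Then $A^{*}$ is feasible for the first subproblem corresponding to $(a,b)$ and $B^{*}$ is feasible for the second, so $d^{in}(A^{*}_{a,b})+d^{in}(B^{*}_{a,b})\leq d^{in}(A^{*})+d^{in}(B^{*}) = \cross$. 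The algorithm therefore finds a minimum uncomparable cut-pair.

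There is essentially no obstacle here: the main conceptual point is simply that the two constraints $d^{in}(A)$ and $d^{in}(B)$ decouple once the witnesses $a,b$ are guessed, so $\cross$ (unlike $\bicut$, where the two cuts share edges) is amenable to independent min-cut computations. The running time is dominated by $O(n^{2})$ maximum flow computations.
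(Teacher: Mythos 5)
Your proposal is correct and uses the same approach as the paper: guess the pair of ``witness'' vertices $a\in A\setminus B$ and $b\in B\setminus A$, observe that the objective $\cross(A,B)=d^{in}(A)+d^{in}(B)$ and the constraints then decouple, and compute each side as a single directed min-cut (the sink side of a $b\to a$ cut for $A$, and of an $a\to b$ cut for $B$). The paper's proof is a terser version of exactly this argument.
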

\begin{proof}
 For fixed vertices $a$ and $b$, there is an efficient algorithm to find $A$ and $B$ such that $a\in A\setminus B$
and $b\in B\setminus A$ and $\cross(A,B)$ is minimized. Indeed, this is precisely finding
the sink side of an $a-b$ min-cut and that of a $b-a$ min-cut. Trying all possible $a$ and $b$ and taking the minimum
gives the desired result.
\end{proof}

We also need the following lemma showing that we can minimize $\bicut(A,B)$ among pairs whose intersection is fixed.

\begin{lemma}\label{lem:fixedZ}
Given a directed graph $D=(V,E)$ and $Z\subseteq V$, there exists a polynomial time algorithm to find an uncomparable pair $A,B$ satisfying $A\cap B=Z$ that minimizes $\bicut(A,B)$ among pairs with this property.
\end{lemma}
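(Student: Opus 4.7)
The plan is to express $\bicut(A,B)$ for pairs with $A\cap B=Z$ as a fixed constant plus the sum of two in-cut values in the subgraph $D':=D[V\setminus Z]$, and then to reduce the resulting ``disjoint joint in-cut'' minimization to a single min $s$-$t$ cut in a two-layer flow network.

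First I would decompose the objective. Writing $X:=A\setminus Z$, $Y:=B\setminus Z$, $W:=V\setminus(A\cup B)$, so that $X,Y,W$ partition $V\setminus Z$, classify each arc of $D$ by the parts containing its endpoints: arcs with tail in $Z$ never contribute to $\delta^{in}(A)\cup\delta^{in}(B)$ (because $Z\subseteq A\cap B$); arcs from $V\setminus Z$ into $Z$ always contribute exactly once, for a constant total of $d(V\setminus Z,Z)$; and arcs with both endpoints in $V\setminus Z$ contribute exactly $|\delta^{in}_{D'}(X)\cup\delta^{in}_{D'}(Y)|=d^{in}_{D'}(X)+d^{in}_{D'}(Y)$ (a disjoint union since $X\cap Y=\emptyset$). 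Hence
\[
\bicut(A,B)\;=\;d^{in}_{D'}(X)+d^{in}_{D'}(Y)+d(V\setminus Z,\,Z),
\]
and the problem reduces to finding disjoint nonempty sets $X,Y\subseteq V\setminus Z$ minimizing $d^{in}_{D'}(X)+d^{in}_{D'}(Y)$.

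For each ordered pair $(a,b)$ of distinct vertices in $V\setminus Z$, I then solve the restriction ``find disjoint $X\ni a$, $Y\ni b$ minimizing $d^{in}_{D'}(X)+d^{in}_{D'}(Y)$'' by a single min $s$-$t$ cut in an auxiliary network $H(a,b)$ built on two copies $V_1=\{v_1:v\in V\setminus Z\}$ and $V_2=\{v_2:v\in V\setminus Z\}$ together with new vertices $s$ and $t$. The intended encoding is ``$v_1$ lies on the $s$-side iff $v\in X$'' and ``$v_2$ lies on the $s$-side iff $v\notin Y$'' (an inverted encoding in the second layer). For each arc $(u,v)\in D'$ add $(v_1,u_1)$ and $(u_2,v_2)$, each of capacity $1$: the first is cut exactly when $(u,v)\in\delta^{in}_{D'}(X)$, the second exactly when $(u,v)\in\delta^{in}_{D'}(Y)$. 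Forcing arcs $(s,a_1)$, $(s,a_2)$, $(b_1,t)$, $(b_2,t)$ of capacity $\infty$ pin $a\in X$, $a\notin Y$, $b\notin X$, $b\in Y$; and for every $v\in V\setminus Z$ the arc $(v_1,v_2)$ of capacity $\infty$ is cut precisely when $v\in X\cap Y$, so any finite $s$-$t$ cut automatically satisfies $X\cap Y=\emptyset$.

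Finite $s$-$t$ cuts of $H(a,b)$ are thus in bijection with feasible disjoint pairs $(X,Y)$ containing $a$ and $b$ respectively, and the cut value equals $d^{in}_{D'}(X)+d^{in}_{D'}(Y)$. The algorithm enumerates the $O(|V|^2)$ pairs $(a,b)$, runs max-flow on each $H(a,b)$, and outputs the pair $(A,B)=(X\cup Z,\,Y\cup Z)$ attaining the overall minimum. The main technical obstacle is selecting the right encoding for the second layer: using the same orientation in both layers makes the disjointness constraint $X\cap Y=\emptyset$ awkward to enforce, whereas the inverted encoding above turns it into a single capacity-$\infty$ arc from $v_1$ to $v_2$, after which both correctness and the polynomial running time follow immediately.
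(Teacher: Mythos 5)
Your decomposition of $\bicut(A,B)$ into $d^{in}_{D'}(X)+d^{in}_{D'}(Y)+d(V\setminus Z,Z)$ for $X=A\setminus Z$, $Y=B\setminus Z$ is exactly the identity the paper relies on (the paper writes the constant as $d^{in}(Z)$, which equals $d(V\setminus Z,Z)$). The arc-by-arc bookkeeping you give is correct, as is the observation that the union $\delta^{in}_{D'}(X)\cup\delta^{in}_{D'}(Y)$ is disjoint once $X\cap Y=\emptyset$.

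Where you diverge from the paper is in how you solve the residual problem of minimizing $d^{in}_{D'}(X)+d^{in}_{D'}(Y)$ over disjoint nonempty $X,Y\subseteq V\setminus Z$. The paper simply observes that this is precisely the global \textsc{EdgeDoubleCut} problem on $D'$ and cites the polynomial-time algorithm of Bern\'ath and Pap \cite{BP13}. You instead re-derive an algorithm from scratch: for each ordered pair $(a,b)$ you build a two-layer $s$--$t$ flow network in which the first layer encodes membership of $X$ with the usual orientation and the second encodes membership of $Y$ with the complemented orientation, so that the disjointness constraint becomes a single infinite-capacity arc $v_1\to v_2$. I checked the encoding: $(v_1,u_1)$ is cut exactly when $(u,v)\in\delta^{in}_{D'}(X)$, $(u_2,v_2)$ exactly when $(u,v)\in\delta^{in}_{D'}(Y)$, the forcing arcs pin $a\in X$, $b\in Y$, and the $\infty$ arc $v_1\to v_2$ is cut iff $v\in X\cap Y$, so finite cuts correspond bijectively to feasible disjoint pairs. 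This is correct, and it is essentially the standard max-flow reduction underlying the $\{s,t\}$-\textsc{EdgeDoubleCut} algorithm; your version buys self-containedness at the cost of re-deriving a known primitive, while the paper's version is shorter because it leans on the black box. Either is a valid proof. (Minor remark: the forcing arcs $(s,a_2)$ and $(b_1,t)$ are redundant once the $\infty$ disjointness arcs are present, since $a\in X$ together with $X\cap Y=\emptyset$ already forces $a\notin Y$, and similarly for $b$; keeping them does no harm.)
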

\begin{proof}
Let $D' = D[V\setminus Z]$ be the directed graph induced on $V\setminus Z$.
The \textsc{EdgeDoubleCut} problem can be solved in polynomial time in $D'$ \cite{BP13}; let $X'$ and $Y'$ be the disjoint sets whose
incoming edges give the optimal double cut.
We claim that the pair $X'\cup Z,Y'\cup Z$ forms a minimum bicut among all bicuts with intersection $Z$.
Indeed, assume the optimal solution is $\bicut(A,B)$. Let $X=A\setminus B$, $Y=B\setminus A$ and $W = V-(A\cup B)$. Then
\begin{align*}
\bicut(X'\cup Z, Y'\cup Z) &= d^{in}_{D'}(X')+d^{in}_{D'}(Y')+d^{in}(Z)\\
                           &\leq d^{in}_{D'}(X) + d^{in}_{D'}(Y) + d^{in}(Z)\\
                           &= d^{in}(Z)+d(W,X)+d(W,Y)+d(X,Y)+d(Y,X)\\
                           &=\bicut(A,B). \tag*{}
\end{align*}
\end{proof}

A similar argument but using the complements of $A$ and $B$ proves the following.
\begin{lemma}\label{lem:fixedW}
Given a directed graph $D=(V,E)$ and $W\subseteq V$, there exists a polynomial time algorithm to find an uncomparable pair $A,B$ satisfying $V\setminus (A\cup B)=W$ that minimizes $\bicut(A,B)$ among pairs with this property. 
\end{lemma}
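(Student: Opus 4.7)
The plan is to reduce this lemma directly to Lemma~\ref{lem:fixedZ} by passing to the reverse digraph $D^R$ and taking complements. The key observation is the dual identity
\[
\delta^{in}_D(S) \;=\; \delta^{out}_D(V\setminus S) \;\;\longleftrightarrow\;\; \delta^{in}_{D^R}(V\setminus S),
\]
where the arrow is the natural bijection between edges of $D$ and edges of $D^R$. Applied to both $A$ and $B$, this yields $\bicut_D(A,B) = \bicut_{D^R}(V\setminus A,\, V\setminus B)$. Moreover, the involution $(A,B)\mapsto (V\setminus A,\, V\setminus B)$ preserves uncomparability, since $A\setminus B = (V\setminus B)\setminus (V\setminus A)$ and $B\setminus A = (V\setminus A)\setminus (V\setminus B)$, and it sends the constraint $V\setminus(A\cup B) = W$ to the constraint $(V\setminus A)\cap(V\setminus B) = W$.

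So the algorithm is: first construct $D^R$; then invoke the algorithm of Lemma~\ref{lem:fixedZ} on $D^R$ with fixed intersection set $Z := W$, obtaining an uncomparable pair $(A',B')$ with $A'\cap B' = W$ minimizing $\bicut_{D^R}(A',B')$; finally output $(A,B) := (V\setminus A',\, V\setminus B')$. This runs in polynomial time because the procedure of Lemma~\ref{lem:fixedZ} does. By the two bijections above, the output $(A,B)$ is uncomparable, satisfies $V\setminus(A\cup B) = W$, and achieves the minimum of $\bicut_D(\cdot,\cdot)$ subject to that constraint.

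There is no serious obstacle — the only thing to be careful about is checking the duality $\bicut_D(A,B) = \bicut_{D^R}(V\setminus A, V\setminus B)$ at the level of individual edges (not just cardinalities), which is immediate since an edge $(u,v)\in E$ lies in $\delta^{in}_D(A)$ iff $u\in V\setminus A$ and $v\in A$, which is the same as saying its reverse lies in $\delta^{in}_{D^R}(V\setminus A)$. The paper's hint (``a similar argument but using the complements'') amounts to exactly this reduction, and one could alternatively present it as a direct mimicry of the proof of Lemma~\ref{lem:fixedZ}: decompose $V = X\cup Y\cup Z\cup W$ where $X = A\setminus B$, $Y = B\setminus A$, $Z = A\cap B$, observe that $\bicut(A,B) = d^{out}(W) + |\delta^{out}_{D[V\setminus W]}(X)\cup \delta^{out}_{D[V\setminus W]}(Y)|$, and apply polynomial-time \textsc{EdgeDoubleCut} to the reversed induced subgraph $(D^R)[V\setminus W]$ to minimize the second term.
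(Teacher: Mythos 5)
Your reduction via the reverse digraph and complementation is exactly what the paper's one-line hint (``a similar argument but using the complements of $A$ and $B$'') means, and your edge-level verification that $\bicut_D(A,B)=\bicut_{D^R}(V\setminus A,V\setminus B)$ together with the preservation of uncomparability and the translation of the constraint $V\setminus(A\cup B)=W$ into $(V\setminus A)\cap(V\setminus B)=W$ is correct. The proposal matches the paper's intended argument.
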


\begin{proof}[Proof of Theorem \ref{thm:bicut-algorithm}]
If the minimum bicut $(A,B)$ has the property that either $|A\cap B|\leq 2$ or $|V\setminus(A\cup B)|\leq 2$, then the optimal
bicut can be found by applying the above algorithms by setting $Z$ or $W$ to every possible choice of subsets of nodes of size at most $2$ and considering the minimum. The algorithm that we present
below makes use of this observation.

\medskip

\hrule
\begin{tabbing}
        \quad\=\quad\=\quad\=\quad\=\quad\=\quad\=\quad\=\quad\=\quad\=\quad\=\quad\=\quad\=\quad\=\kill
\textsc{ApproximateGlobalBicut} for directed graph $D=(V,E)$\+
\\   Find minimum bicut if $|Z|\leq 2$ or $|W|\leq 2$ using Lemmas \ref{lem:fixedZ} and \ref{lem:fixedW}
\\   Compute the minimum uncomparable cut-pair
\\   For each tuple of nodes $(x,y,w_1,w_2,z_1,z_2)$\+
\\   $X' \gets$ sink-side of the minimum $\{w_1,w_2,y\} \to \{x,z_1,z_2\}$-cut
\\   $Y' \gets$ sink-side of the minimum $\{w_1,w_2,x\} \to \{y,z_1,z_2\}$-cut
\\   $E_1 \gets E[X'] \cup E[Y']$
\\   $E_2 \gets E[V\setminus X'] \cup E[V \setminus Y']$
\\   $D_1\gets $ $D$ with the arcs in $E_1$ duplicated
\\   $D_2\gets $ $D$ with the arcs in $E_2$ duplicated
\\   $Z' \gets$ sink-side of minimum $\{w_1,w_2,x,y\} \to \{z_1,z_2\}$-cut in $D_1$
\\   $W' \gets$ source-side of minimum $\{w_1,w_2\} \to \{x,y,z_1,z_2\}$-cut in $D_2$
\\   $D' \gets$ contract $X' \cap Y'$ to $z'$, contract $V \setminus X'$ to $w'$, remove all $w'z'$ arcs
\\   In $D'$, find $\overline{w'}z'$-sets $A' \subsetneq B'$ such that $\bicut(A',B')$ is \+
\\    at most 3/2 times minimum, using Theorem \ref{thm:s-star-t-lin-3-cut-algorithm} and Lemma \ref{lem:st-lin-3-cut-rephrase} \-
\\   Find all bicuts that can be generated using set operations on \+
\\    $X',Y',Z',W',A',B'$.\-\-
\\   Output the minimum bicut among all the bicuts found
\-
\end{tabbing}
\hrule
\medskip

To show that the algorithm is correct, first we fix a global min-bicut $(A,B)$, i.e. the optimum is $\OPT=\bicut(A,B)$. Let $X=A\setminus B$, $Y=B\setminus A$, $Z=A\cap B$,
and $W = V\setminus(A\cup B)$. We assume that both $Z$ and $W$ are of size at least $3$, otherwise it is clear that the algorithm finds the optimum.
Let $\e>0$ be a constant whose value will be determined later. 

\begin{lemma}
\label{lem:crossworks}
If one of the following is true, then the minimum uncomparable cut-pair is a $(2-\e)$-approximation:
\begin{enumerate}[(i)]
\item $d(Z,W) \leq (1-\e)\OPT$, \label{item:dZW}
\item For every $z_1, z_2 \in Z$, there exists a subset $U$ of nodes containing $z_1, z_2$ but not $Z$ with $d^{in}(U) < (1-\e)\OPT$.
\item For every $w_1, w_2 \in W$, there exists a subset $U$ of nodes not containing $w_1, w_2$ but intersecting $W$ with $d^{in}(U) < (1-\e)\OPT$.
\end{enumerate}
\end{lemma}
\begin{proof} \mbox{}
\begin{enumerate}[(i)]
\item The pair $(A,B)$ is uncomparable, and $\cross(A,B)\leq(2-\e)\OPT$ if $d(Z,W)\leq (1-\e)\OPT$. Therefore the
minimum uncomparable cut-pair is a $(2-\e)$-approximation if \eqref{item:dZW} holds.
\item Among the sets with in-degree less than $(1-\e)\OPT$ which do not contain every node of $Z$, let $T$ be the one with
inclusionwise maximal intersection with $Z$. Let $z_1 \in Z \setminus T$ and $z_2 \in Z \cap T$.
There exists a set $U$ such that $d^{in}(U) < (1-\e)\OPT$ and $z_1,z_2\in U$ that contains $z_1$ and $z_2$ but not the whole $Z$. Because of the maximal intersection of $T$ with $Z$, we have that $T\not\subset U$. Hence $T$ and $U$ are uncomparable and $\cross(T,U) \leq (2-2\e)\OPT$.
\item Argument similar to the proof of (ii) shows that the minimum uncomparable cut-pair is a $(2-2\e)$-approximation if \emph{(iii)} holds. 
\end{enumerate}
\end{proof}

By Lemma \ref{lem:crossworks}, we only have to consider the case where $d(W,Z)\geq (1-\e)\OPT$. We will also assume that $z_1,z_2,w_1,w_2$ are chosen such that $d^{in}(U)\geq (1-\e)\OPT$ for all subsets $U$ of nodes containing $z_1,z_2$ but not $Z$, 
and $d^{in}(U)\geq (1-\e)\OPT$ for all subsets $U$ of nodes not containing $w_1,w_2$ but intersecting $W$. 

We may assume that $\bicut(X',Y')\geq (2-\e)\OPT$, otherwise $X'$ and $Y'$ forms a $(2-\e)$-approximation. Also, $d^{in}(X')\leq d^{in}(X\cup Z)\leq \OPT$ because $X'$ is the sink-side of a min $\set{w_1,w_2,y}\to \set{x,z_1,z_2}$ cut. Similarly, $d^{in}(Y')\leq d^{in}(Y\cup Z) \leq \OPT$.
Let $c$ be the capacity function obtained by increasing the capacity of each edge in $E_1$ to $2$, and
let $\bar{c}$ be the capacity function obtained by increasing the capacity of each edge in $E_2$ to $2$.
We consider four cases depending on the relations between $W$ and $X'\cup Y'$, and between $Z$ and $X'\cap Y'$.

\paragraph*{Case 0.} Suppose $W\cap (X'\cup Y') = \emptyset$, $Z \subseteq X'\cap Y'$.
In this case $\delta^{in}(X')$ and $\delta^{in}(Y')$ both contain all edges counted in $d(W,Z)$. Hence $\bicut(X',Y') \leq \cross(X',Y')-d(W,Z) \leq (1+\e)\OPT$. This shows that $(X',Y')$ is a $(1+\e)$-approximation. \\

For the remaining three cases, we will use the following proposition. 
\begin{proposition}
\label{cl:quick2}
If $d^{in}(X'\cap Z') \geq (1-\e)\OPT$ and  $d^{in}(Y'\cap Z') \geq (1-\e)\OPT$, then
$\bicut(X' \cup Z',Y' \cup Z') \leq 2\e\OPT + d_c^{in}(Z)$.
\end{proposition}
\begin{proof}
If $d^{in}(X'\cap Z')\geq (1-\e)\OPT$, then $d^{in}(X')-d^{in}(X'\cap Z') \leq \e\OPT$, so
    \begin{align}
        d^{in}(X'\cup Z') &\leq d^{in}(Z')+\e \OPT - d(X'\setminus Z',Z'\setminus X') - d(Z'\setminus X',X'\setminus Z')\notag\\
        &\leq d^{in}(Z')+\e \OPT - d(X'\setminus Z',Z'\setminus X').\label{eq:tmp1}
    \end{align}
Similarly,
\begin{equation}
d^{in}(Y'\cup Z') \leq d^{in}(Z')+\e \OPT - d(Y'\setminus Z',Z'\setminus Y'). \label{eq:tmp2}
\end{equation}
We need the following claim.
\begin{claim}
\begin{multline}
 \bicut(X'\cup Z',Y'\cup Z')
 \leq \cross(X' \cup Z',Y' \cup Z')+d^{in}_c(Z')-2d^{in}(Z')\\+d(X'\setminus Z',Z'\setminus X')+d(Y'\setminus Z',Z'\setminus Y')\label{eq:tmp3}.
\end{multline}
\end{claim}
\begin{proof}
By counting the edges entering $Z'$, we have
\begin{enumerate}
\item $d^{in}_c(Z') = d^{in}(Z')+|\delta^{in}(Z')\cap E_1|$.
\item $d^{in}(Z') = d(V\setminus (X'\cup Y'\cup Z'),Z') + |\delta^{in}(Z')\cap E_1| + d(X'\setminus Z', Z'\setminus X') + d(Y'\setminus Z', Z'\setminus Y') - d((X'\cap Y')\setminus Z', Z'\setminus(X'\cup Y'))$.
\end{enumerate}
The first equation can be rewritten as $d^{in}_c(Z') - 2d^{in}(Z') = -d^{in}(Z') + |\delta^{in}(Z')\cap E_1|$. Using this and the second equation, we get
 $d^{in}_c(Z')-2d^{in}(Z')+d(X'\setminus Z', Z'\setminus X')+d(Y'\setminus Z', Z'\setminus Y') = -d(V\setminus (X'\cup Y'\cup Z'),Z') + d((X'\cap Y')\setminus Z', Z'\setminus(X'\cup Y'))$.
Thus the desired inequality (\ref{eq:tmp3}) simplifies to
\begin{multline*}
\bicut(X'\cup Z',Y'\cup Z') \leq \cross(X' \cup Z',Y' \cup Z')\\-d(V\setminus (X'\cup Y'\cup Z'),Z')+d((X'\cap Y')\setminus Z', Z'\setminus(X'\cup Y')).
\end{multline*}
The elements counted by $d(V\setminus (X'\cup Y'\cup Z'),Z')$ are counted twice in $\cross(X' \cup Z',Y' \cup Z')$. Hence we have the desired relation (\ref{eq:tmp3}).
\end{proof}
    Using (\ref{eq:tmp1}),(\ref{eq:tmp2}) and (\ref{eq:tmp3}) we get
    \begin{align*}
    \bicut&(X' \cup Z',Y' \cup Z') \\
    &\leq d^{in}(X' \cup Z')+d^{in}(Y' \cup Z')\\ &\qquad +d^{in}_c(Z')-2d^{in}(Z')+d(X'\setminus Z',Z'\setminus X')+d(Y'\setminus Z',Z'\setminus Y')\\
    &\leq d^{in}(Z')+\e \OPT + d^{in}(Z') + \e\OPT + d^{in}_c(Z')-2d^{in}(Z')\\
    &\leq 2\e\OPT + d_c^{in}(Z')\\
    &\leq 2\e\OPT + d_c^{in}(Z). \tag*{}
    \end{align*}
\end{proof}

\paragraph*{Case 1.} Suppose $W\cap (X' \cup Y') = \emptyset$ and $Z\not\subseteq X' \cap Y'$. Without loss of generality, let $Z\not \subseteq X'$. The set $X'\cap Z'$ contains $z_1,z_2$ but not the whole $Z$, hence $d^{in}(X'\cap Z')\geq (1-\e)\OPT$.

We first consider the subcase when $d^{in}(Y' \cap Z') \geq (1-\e)\OPT$ also holds. Then, by Proposition \ref{cl:quick2}, we get
\[
\bicut(X' \cup Z',Y' \cup Z') \leq 2\e\OPT + d_c^{in}(Z).
\]
We are in the case where $(X'\cup Y')\cap W=\emptyset$, so $d^{in}_c(Z) \leq d^{in}(Z) + d(X,Z) + d(Y,Z)\leq (1+\e)\OPT$ since $d(W,Z)\geq (1-\e)\OPT$. Hence we have $\bicut(X' \cup Z',Y' \cup Z')\leq (1+3\e)\OPT$.

Next we consider the other subcase where $d^{in}(Y' \cap Z') < (1-\e)\OPT$. By the choice of $z_1,z_2$, this means that $Z\subseteq Y' \cap Z'$. In this case $Y' \cap Z'$ crosses $X'$, because $X'$ does not contain the whole $Z$, and $Y'\cap Z'$ does not contain $x$ and thus $X'$ and $Y'\cap Z'$ are uncomparable. Since $\cross(X',Y'\cap Z') \leq (2-\e)\OPT$, the minimum uncomparable cut-pair is a $(2-\e)$-approximation.

\paragraph*{Case 2.} Suppose $W\cap (X' \cup Y') \neq \emptyset$ and $Z\subseteq X' \cap Y'$. This is similar to Case 1 by symmetry.

\paragraph*{Case 3.} Suppose $W\cap (X' \cup Y') \neq \emptyset$ and $Z\not \subseteq X' \cap Y'$.

We may assume that $Z\not\subseteq X'$ without loss of generality. Because of the choice of $z_1,z_2$, we have $d^{in}(X'\cap Z')\geq (1-\e)\OPT$. By the same argument as in Case 1 (last paragraph), we may assume that $d^{in}(Y'\cap Z')\geq (1-\e)\OPT$ as well. The inequality $\bicut(X'\cup Z',Y'\cup Z') \leq 2\e\OPT + d_c^{in}(Z)$ holds using Proposition \ref{cl:quick2}. If $d^{in}_c(Z) \leq (2-3\e)\OPT$, then these imply $\bicut(X'\cup Z',Y'\cup Z') \leq (2-\e)\OPT$. Similarly, if $d^{out}_{\overline{c}}(W) \leq (2-3\e)\OPT$, then we obtain
$\bicut(X'\setminus  W',Y'\setminus W') \leq (2-\e)\OPT$. Thus, we may assume that both $d^{in}_c(Z)$ and $d^{out}_{\overline{c}}(W)$ are at least $(2-3\e)\OPT$.

\begin{figure}
\begin{center}
\includegraphics[scale=0.5]{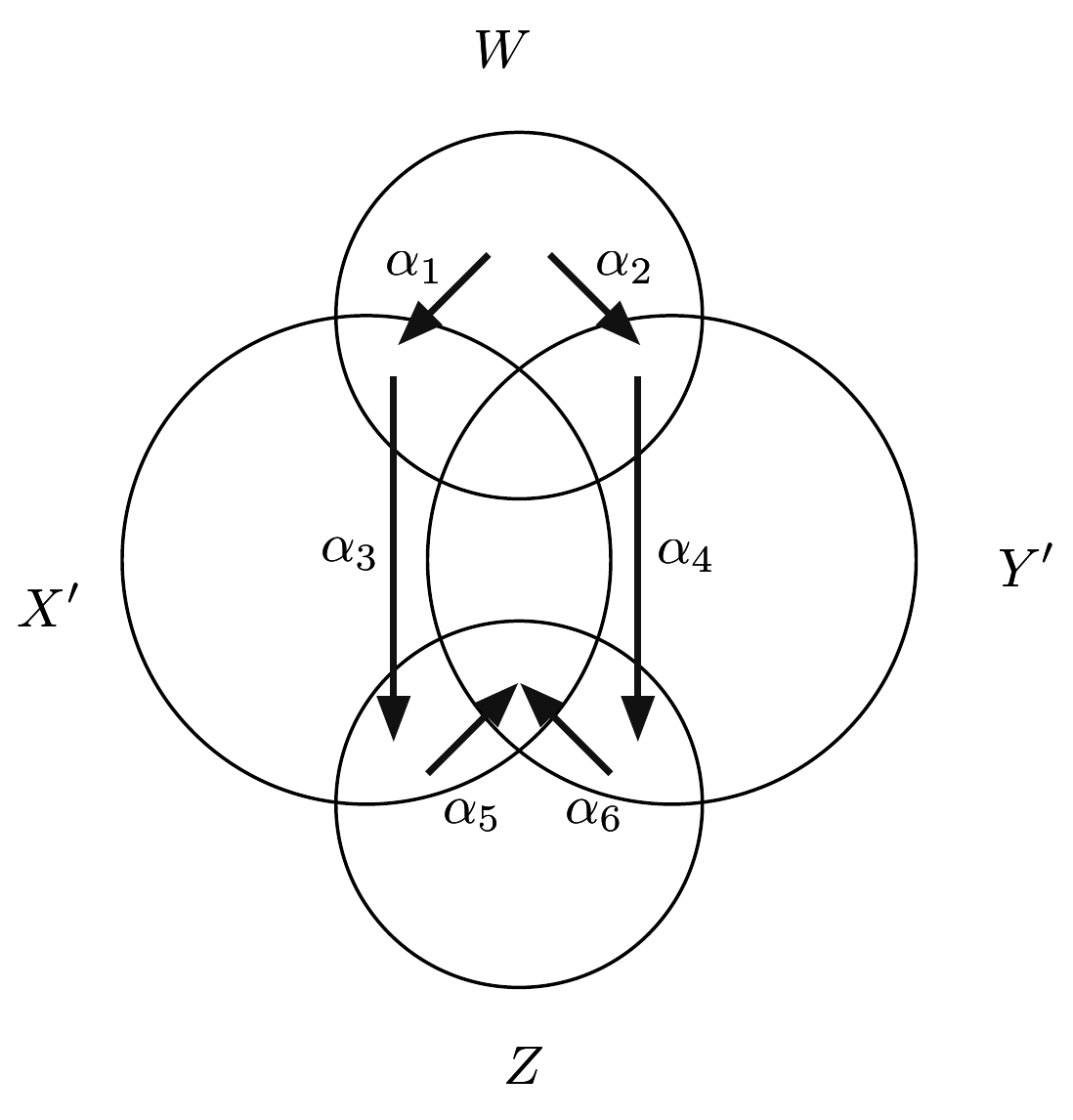}
\end{center}
\caption{The quantities $\alpha_1,\ldots,\alpha_6$.}
\label{xyzw_diagram}
\end{figure}

Let us define the following quantities (see Figure \ref{xyzw_diagram}).
\begin{enumerate}
\item $\alpha_1=d(W\setminus (X'\cup Y'),W\cap (X'\setminus Y'))$,
\item $\alpha_2=d(W\setminus (X'\cup Y'),W\cap (Y'\setminus X'))$,
\item $\alpha_3=d(W\cap (X'\setminus Y'),Z\cap (X'\setminus Y'))$,
\item $\alpha_4=d(W\cap (Y'\setminus X'),Z\cap (Y'\setminus X'))$,
\item $\alpha_5=d(Z\cap (X'\setminus Y'),X'\cap Y'\cap Z')$, and
\item $\alpha_6=d(Z\cap (Y'\setminus X'),X'\cap Y'\cap Z')$.
\end{enumerate}
In propositions \ref{cl:union}, \ref{cl:34sum}, \ref{cl:1256sum}, \ref{cl:1625sum}, \ref{cl:36sum} and \ref{cl:flows2}, 
we show a sequence of inequalities involving these quantities.


\begin{proposition}
\label{cl:union}
$(1-\e)\OPT \leq d^{in}(X'\cap Y'),d^{in}(X'\cup Y'),d^{in}(X'\cap Z),d^{in}(X'\cup Z)\leq (1+\e)\OPT$.
\end{proposition}
\begin{proof}
By submodularity, $d^{in}(X'\cap Y')+d^{in}(X'\cup Y')\leq d^{in}(X')+d^{in}(Y')\leq 2\OPT$.
We note that $d^{in}(X'\cap Y')\geq (1-\e)\OPT$ by the choice of $z_1,z_2$. This shows $d^{in}(X'\cup Y')\leq (1+\e)\OPT$.
Similarly, $d^{in}(X'\cup Y')\geq (1-\e)\OPT$ by the choice of $w_1,w_2$, and hence $d^{in}(X'\cap Y')\leq (1+\e)\OPT$.

By our assumption, $X'$ and $Z$ are uncomparable, hence $X'\cap Z$ contains both $z_1,z_2$ but not all of $Z$. By the choice of $z_1,z_2$, $d^{in}(X'\cap Z) \geq (1-\e)\OPT$. By submodularity,
$d^{in}(X'\cup Z) \leq d^{in}(X')+d^{in}(Z)-d^{in}(X'\cap Z) \leq (1+\e)\OPT$. For the remaining inequalities,
we notice that $X' \cup Z$ and $Y'$ are uncomparable, so $\cross(X'\cup Z, Y')\geq (2-\e)\OPT$, and therefore
$d^{in}(X'\cup Z)  \geq (1-\e)\OPT$. Submodularity gives $d^{in}(X'\cap Z) \leq (1+\e)\OPT$.
\end{proof}

\begin{proposition}
\label{cl:34sum}
$(1-6\e)\OPT \leq \alpha_3+\alpha_4\leq \OPT$.
\end{proposition}
\begin{proof}
We recall that $d_c^{in}(Z) = d^{in}(Z)+|\delta^{in}(Z)\cap E_1| \geq (2-3\e)\OPT$ and
$d_{\bar{c}}^{out}(W) = d^{out}(W)+|\delta^{out}(W)\cap E_2| \geq (2-3\e)\OPT$.
Let $C$ be the set of edges from $W$ to $Z$, i.e.\ those counted by $d(W,Z)$. Let $a=|\delta^{in}(Z)\setminus C|$ and $b=|\delta^{out}(W)\setminus C|$. 
We note that $\alpha_3+\alpha_4=|C\cap E_1\cap E_2|$ and $|C|+a+b\leq \OPT$.

We have $|C\cap E_1|\geq |\delta^{in}(Z)\cap E_1| - a$ and $|C\cap E_2|\geq |\delta^{out}(W)\cap E_2| - b$.
From all the above, we get the following sequence of inequalities.
\begin{align*}
&|C\cap E_1\cap E_2| \\
\geq& |C| - |C\setminus E_1| - |C\setminus E_2| \\
\geq& |C| - (|C| - (|\delta^{in}(Z)\cap E_1| - a)) - (|C| - (|\delta^{out}(W)\cap E_2| - b))\\
=&|\delta^{in}(Z)\cap E_1| + |\delta^{out}(W)\cap E_2| -|C| - a - b\\
\geq &(2-3\e)\OPT - d^{in}(Z) + (2-3\e)\OPT - d^{out}(W) -|C| - a - b\\
\geq &(4-6\e)\OPT - 3\OPT\\
= &(1-6\e)\OPT . \tag*{}
\end{align*}
\end{proof}

\begin{proposition}
\label{cl:1256sum}
$(1-8\e)\OPT \leq \alpha_1+\alpha_2\leq (1+\e)\OPT$ and $(1-8\e)\OPT \leq \alpha_5+\alpha_6\leq (1+\e)\OPT$.
\end{proposition}
\begin{proof}
The upper bounds  follow from $\alpha_1+\alpha_2\leq d^{in}(X'\cup Y')\leq (1+\e)\OPT$ and $\alpha_5+\alpha_6\leq d^{in}(X'\cap Y')\leq (1+\e)\OPT$.
For the lower bound, we observe that $d^{in}(X'\cap Y'\cap Z) \geq (1-\e)\OPT$ but $|\delta^{in}(X')\cap \delta^{in}(Y')| \leq \e\OPT$.
This shows that at most $\e\OPT$ edges enter $X'\cap Y'\cap Z$ from outside of $X'\cup Y'$. By Proposition \ref{cl:34sum}, 
$|\delta^{in}(Z)\cap \delta^{in}(X'\cap Y'\cap Z)| \leq \delta^{in}(Z) - \alpha_3-\alpha_4 \leq 6\e\OPT$.
These imply $\alpha_5+\alpha_6 \geq (1-\e)\OPT-6\e\OPT-\e\OPT = (1-8\e)\OPT$.

A similar argument shows the lower bound for $\alpha_1+\alpha_2$. The choice of $w_1,w_2$ implies $d^{out}(W\setminus(X'\cup Y')) \geq (1-\e)\OPT$,
and at most $\e\OPT$ edges enter $X'\cap Y'$ from $W \setminus(X'\cup Y')$. By Proposition \ref{cl:34sum},  
$|\delta^{out}(W)\cap \delta^{out}(W\setminus (X'\cup Y'))| \leq \delta^{out}(W) - \alpha_3-\alpha_4 \leq 6\e\OPT$.
These imply $\alpha_1+\alpha_2 \geq (1-\e)\OPT-6\e\OPT-\e\OPT = (1-8\e)\OPT$.
\end{proof}

\begin{proposition}
\label{cl:1625sum}
$(1-16\e)\OPT\leq \alpha_1+\alpha_6\leq \OPT$ and $(1-16\e)\OPT\leq \alpha_2+\alpha_5\leq \OPT$.
\end{proposition}
\begin{proof}
The upper bounds follow by $\alpha_1+\alpha_6\leq d^{in}(X') \leq \OPT$ and $\alpha_2+\alpha_5\leq d^{in}(Y') \leq \OPT$.
On the other hand, combining the two inequalities in Proposition \ref{cl:1256sum} gives $(2-16\e)\OPT \leq \alpha_1+\alpha_2+\alpha_5+\alpha_6$,
so we also get the lower bounds.
\end{proof}

\begin{proposition}
\label{cl:36sum}
$(1-23\e)\OPT \leq \alpha_3+\alpha_6 \leq (1+\e)\OPT$.
\end{proposition}
\begin{proof}
Consider the set $T=X'\cap Z$. By Proposition \ref{cl:union}, we have $\alpha_3+\alpha_6 \leq d^{in}(T)\leq (1+\e)\OPT$, which gives the upper bound.
Also, $\alpha_1+\alpha_6+d(Z\setminus (X'\cup Y'),T)\leq d^{in}(X')\leq \OPT$, hence $d(Z\setminus (X'\cup Y'),T)\leq 16\e\OPT$.
Since $\alpha_3+\alpha_4\geq (1-6\e)\OPT$, the remaining contribution to $d^{in}(Z)$ from elsewhere is at most $6\e\OPT$. We obtain
\begin{align*}
(1-\e)\OPT &\leq d^{in}(T)\\
           &\leq \alpha_6+d(Z\setminus (X'\cup Y'),T)+d(V\setminus Z,T')\\
           &\leq \alpha_6+16\e\OPT +\alpha_3+ 6\e\OPT\\
           &\leq \alpha_3+\alpha_6 +22\e\OPT.
\end{align*}\\
Hence, $(1-23\e)\OPT \leq \alpha_3+\alpha_6$.
 \end{proof}

\begin{proposition}
\label{cl:flows2}
$\alpha_1+\alpha_5\geq 2\alpha_3-51\e\OPT$.
\end{proposition}
\begin{proof}
The above claims give us a chain of relations:
\begin{align*}
(1-16\e)\OPT - \alpha_6  &\leq \alpha_1   \leq \OPT-\alpha_6,\\
(1-8\e) \OPT - \alpha_1  &\leq \alpha_2   \leq (1+\e)\OPT-\alpha_1,\\
(1-16\e)\OPT - \alpha_2  &\leq \alpha_5   \leq \OPT-\alpha_2,\\
(1-23\e)\OPT - \alpha_3  &\leq \alpha_6   \leq (1+\e)\OPT-\alpha_3.\\
\end{align*}
By substitution, we get
\begin{align*}
\alpha_3-17\e\OPT &\leq \alpha_1\leq \alpha_3 + 23\e\OPT,\\
\alpha_1-17\e\OPT &\leq \alpha_5\leq \alpha_1+8\e\OPT,\\
\alpha_3-34\e\OPT &\leq \alpha_5\leq \alpha_1+31\e\OPT.
\end{align*}
Therefore $\alpha_1+\alpha_5\geq 2\alpha_3-51\e\OPT$.
\end{proof}

Without loss of generality, let $\alpha_3\geq (\alpha_3+\alpha_4)/2$, since if not, there is another iteration of the algorithm where $x$ and $y$ are switched. Therefore, $\alpha_3\geq (1/2-3\e)\OPT$.

Let $A_0=(X'\cap Z) \cup (X'\cap Y')$ and $B_0=(X'\setminus W) \cup (X'\cap Y')$. Let $H$ be the directed graph obtained by
contracting $X' \cap Y'$ to a node $z'$,
contracting $V\setminus X'$ to a node $w'$, and removing all $w'z'$ arcs.

\begin{claim}
$|\delta^{in}_H(A_0) \cup \delta^{in}_H(B_0)|\leq \alpha_3 + 39\e\OPT$.
\end{claim}
\begin{proof}
The left hand side is precisely
\begin{multline*}
d(V\setminus A_0,(X'\cap Z)\setminus Y')+d(X'\setminus(Y'\cup Z),X'\cap Y')\\+d((V\setminus X') \cup (X'\cap W\setminus Y')),X'\setminus(Y'\cup W\cup Z)).
\end{multline*}
We consider each term individually.

\begin{enumerate}
\item The term $d(V\setminus A_0,(X'\cap Z)\setminus Y')$ counts a subset of the edges going into $Z$. All but $\e\OPT$ edges are from $W$. Hence $d((V\setminus A_0)\cap W,(X'\cap Z)\setminus Y') \geq d(V\setminus A_0,(X'\cap Z)\setminus Y')-\e\OPT$. All but at most $6\e\OPT$ edges counted by $d((V\setminus A_0)\cap W,(X'\cap Z)\setminus Y')$ are counted by $\alpha_3$. This means
$d((V\setminus A_0)\cap W,(X'\cap Z)\setminus Y') \leq \alpha_3 + 6\e\OPT$. In total, it shows $d(V\setminus A_0,(X'\cap Z)\setminus Y')\leq \alpha_3+7\e\OPT$.
\item The term $d(X'\setminus(Y'\cup Z),X'\cap Y')$ counts a subset of the edges going into $Y'$. We have $d^{in}(Y')\geq d(X'\setminus(Y'\cup Z),X'\cap Y')+\alpha_2+\alpha_5$. Using Proposition \ref{cl:1625sum}, we obtain $d(X'\setminus(Y'\cup Z),X'\cap Y')\leq 16\e\OPT$.
\item Let $t=d((V\setminus X') \cup (X'\cap W\setminus Y')),X'\setminus(Y'\cup W\cup Z))$. So $t$ counts a subset of the edges going into $X'$. Using Proposition \ref{cl:1625sum} and $d^{in}(X')\geq t+\alpha_1+\alpha_6$, we obtain $t\leq 16\e\OPT$.
\end{enumerate}
Thus, the total contribution is at most $\alpha_3+39\e\OPT$.
\end{proof}

Using the 3/2-approximation algorithm of Theorem \ref{thm:s-star-t-lin-3-cut-algorithm} and Lemma \ref{lem:st-lin-3-cut-rephrase}, we can find $\overline{w'}z'$-sets $A' \subsetneq B'$ such that
$|\delta^{in}_H(A') \cup \delta^{in}_H(B')|\leq 3(\alpha_3+39\e\OPT)/2$.
Now consider $\bicut(X' \cap B',Y' \cup A')$ in the original directed graph. We have the following inequality by counting the edges on the left hand side (see Figure \ref{final_ineqremove} for a proof).
\begin{align}
\bicut(X' \cap B',Y' \cup A')
+\alpha_5 + \alpha_1 \leq \cross(X',Y') + |\delta^{in}_H(A')\cup\delta^{in}_H(B')|.
\label{eq:final_ineq}
\end{align}
Using Proposition \ref{cl:flows2} and the assumption that $\alpha_3\geq (1/2-3\e)\OPT$, we get  
\begin{align*}
\bicut(X' \cap B',Y' \cup A')  &\leq \cross(X',Y') + |\delta^{in}_H(A')\cup\delta^{in}_H(B')| -\alpha_5 - \alpha_1\\
                              &\leq 2\OPT + \frac{3}{2}\alpha_3 + 117\e\OPT/2 - \alpha_5 - \alpha_1\\
                              &\leq 2\OPT + \frac{3}{2}\alpha_3 + 117\e\OPT/2 - (2\alpha_3-51\e\OPT)\\
                              &\leq 2\OPT - \frac{1}{2}\alpha_3 + 117\e\OPT/2 + 51\e\OPT\\
                              &\leq (2+(219/2)\e)\OPT - (\frac{1}{2}-3\e)\OPT)/2\\
                              &= (7/4+111\e)\OPT .\\
\end{align*}

\begin{figure}
\begin{center}
\includegraphics[scale=0.6]{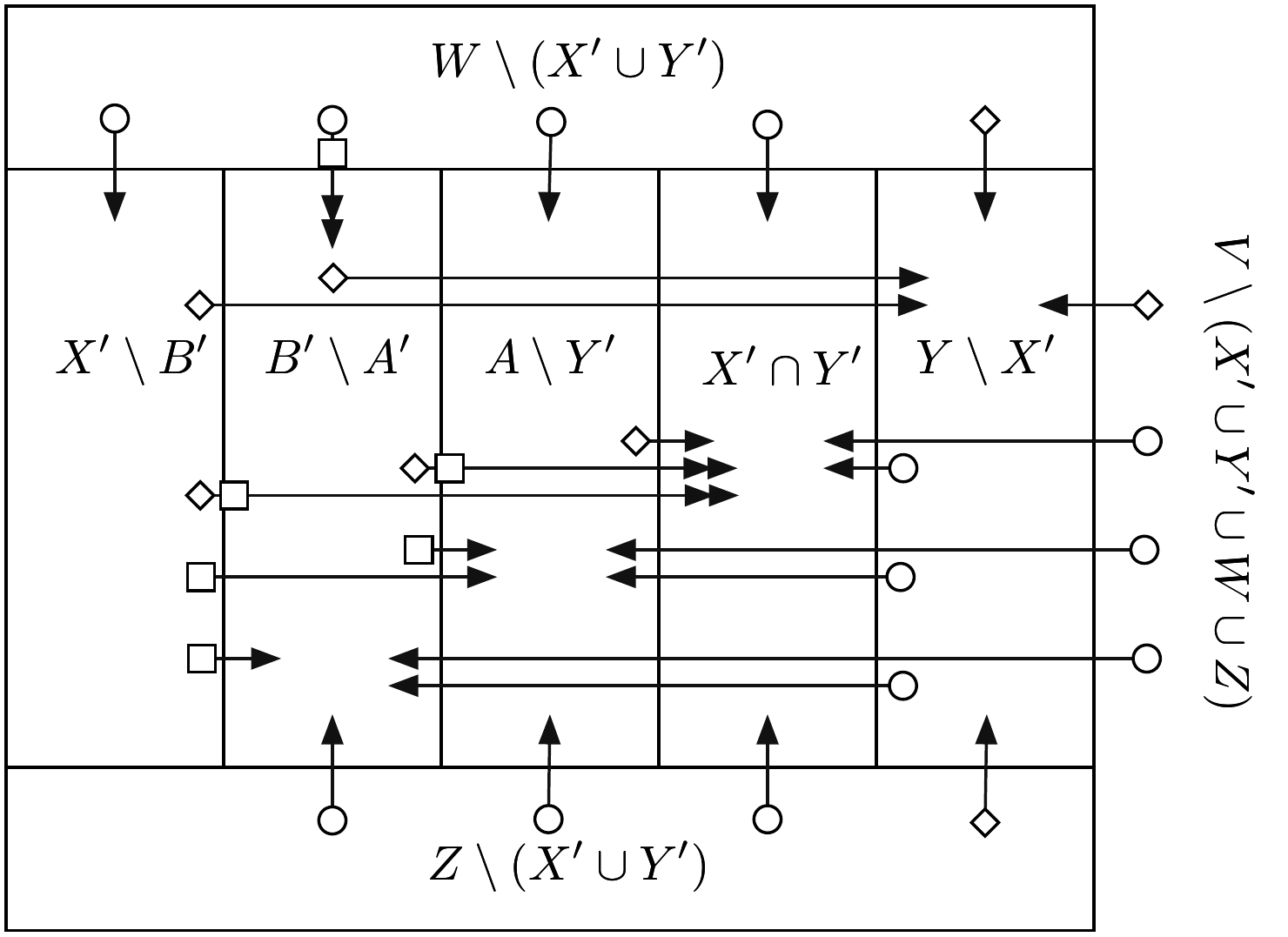}
\end{center}
\caption{Every arrow indicates the edges counted in the left hand side of \eqref{eq:final_ineq}. $\circ$ indicates that the edge is in $\delta^{in}(X')$, $\diamond$ indicates that the edge is in $\delta^{in}(Y')$ and $\square$ indicates that the edge is in $\delta^{in}_H(A')\cup\delta^{in}_H(B')$. Irrelevant edges are not shown. }
\label{final_ineqremove}
\end{figure}
Based on all the cases analyzed above, the approximation factor is $\max\set{1+\e,2-\e,7/4+111\e}$. In order to minimize the factor, we set $\e=1/448$ to get the desired approximation factor. 
\end{proof}

\vspace{2mm}
\noindent \textbf{Acknowledgements.} Karthik would like to thank Chandra Chekuri, Neil Olver and Chaitanya Swamy for helpful discussions at various stages of this work. 
\bibliographystyle{amsplain}
\bibliography{references}

\end{document}